\documentclass[11pt,letterpaper]{article}
\usepackage[margin=1in]{geometry}
\usepackage[T1]{fontenc}
\usepackage{amsmath,amsthm,amssymb,amsfonts}
\usepackage{xspace}
\usepackage{url}
\usepackage{framed}
\usepackage{hhline}
\usepackage{pifont}
\usepackage{hyperref}
\usepackage[dvipsnames]{xcolor}
\usepackage[ruled,linesnumbered,noend]{algorithm2e}
\usepackage{appendix}
\usepackage{babel}
\usepackage{enumitem}

\RequirePackage[type1,tt=false]{libertine}

\usepackage{multirow}
\usepackage{tablefootnote}
\usepackage{afterpage}
\usepackage{lipsum}
\usepackage{longtable}
\usepackage{booktabs}
\usepackage{thm-restate}
\usepackage{graphicx}
\usepackage{subcaption}
\usepackage[normalem]{ulem}

\usepackage{bm}
\usepackage{bbm}
\usepackage{mathtools}
\usepackage{mathrsfs}
\usepackage[capitalise,noabbrev]{cleveref}
\usepackage{natbib}
\usepackage{microtype}
\usepackage{comment}
\usepackage{tikz}
\usetikzlibrary{positioning, arrows.meta, fit, backgrounds, shapes, arrows.meta, positioning, shapes, fit, backgrounds, calc}
\hypersetup{colorlinks=true,allcolors=blue}

\theoremstyle{plain}
\newtheorem{theorem}{Theorem}[section]
\newtheorem{lemma}[theorem]{Lemma}

\newtheorem{problem}{Problem}

\newtheorem{proposition}[theorem]{Proposition}

\theoremstyle{definition}
\newtheorem{definition}[theorem]{Definition}

\theoremstyle{remark}
\newtheorem{example}[theorem]{Example}
\newtheorem{remark}[theorem]{Remark}
\newtheorem*{remark*}{Remark}

\allowdisplaybreaks[4]

\makeatletter
\newcounter{HALG@line}
\renewcommand{\theHALG@line}{\thealgorithm.\arabic{ALG@line}}
\makeatother

\newcommand{\efpo}{{\sc EFPO}\xspace}
\newcommand{\efmsw}{{\sc EFMSW}\xspace}
\DeclareMathOperator{\sigma2}{\Sigma_2^p}
\DeclareMathOperator{\pi2}{\Pi_2^p}
\DeclareMathOperator{\theta2}{\Theta_2^p}
\DeclareMathOperator{\d2}{D^p}
\DeclareMathOperator{\delta2}{\Delta_2^p}
\DeclareMathOperator{\classP}{P}
\DeclareMathOperator{\classNP}{NP}
\DeclareMathOperator{\classcoNP}{coNP}
\DeclareMathOperator{\cdott}{\,\cdot\,}

\newcommand{\maxoddSAT}{{\sc MaxOddSAT}\xspace}
\newcommand{\sat}{{\sc SAT}\xspace}
\newcommand{\unsat}{{\sc UNSAT}\xspace}
\newcommand{\aeCNF}{$\forall\exists$-\textsc{3CNF}\xspace}
\newcommand{\easubsetsum}{$\exists\forall$-\textsc{SubsetSum}\xspace}
\newcommand{\vcmember}{{\sc VertexCoverMember}\xspace}
\newcommand{\partition}{{\sc Partition}\xspace}
\newcommand{\xc}{{\sc X3C}\xspace}
\newcommand{\eaxc}{{$\exists\forall$-\sc{X3C}}\xspace}
\newcommand{\krs}{{$\kappa$\sc{RegularSubgraph}}\xspace}
\newcommand{\eakrs}{{$\exists\forall$-$\kappa$\sc{RegularSubgraph}}\xspace}
\newcommand{\neeakrs}{{\sc{NonEmpty}$\exists\forall$-$\kappa$\sc{RegularSubgraph}}\xspace}
\newcommand{\boundefpo}{{\sc{Bounded-EFPO}}\xspace}
\newcommand{\yes}{{\sc Yes}\xspace}
\newcommand{\no}{{\sc No}\xspace}
\newcommand{\add}{{\sf add}\xspace}
\newcommand{\mono}{{\sf mono}\xspace}

\begin{document}

\title{Fair and Efficient Allocations: Decision Problems in the Gap of Polynomial Hierarchy}

\author{Xiaolin Bu\\
lin\_bu@sjtu.edu.cn\\
Shanghai Jiao Tong University
\and
Biaoshuai Tao\\
bstao@sjtu.edu.cn\\
Shanghai Jiao Tong University}
\date{}

\maketitle

\begin{abstract}
We consider the fair division problem with indivisible goods and study the following decision problem: given a fair division instance, does there exist an allocation that is envy-free and efficient?
We consider two efficiency criteria: Pareto-optimality and social welfare optimality.
We provide a complete landscape on the computational complexity of this decision problem, with the number of agents ranging from $2$ to $\infty$,
both additive valuations and general valuations, and the more restricted class of $k$-ary valuation functions (where an item's marginal value is restricted to $\{0,1,\ldots,k-1\}$ for some constant $k\geq2$).

One interesting observation is that many versions of the above-mentioned decision problems fall into the ``gap'' between the first and the second levels of the polynomial hierarchy.
Specifically, assuming the polynomial hierarchy does not collapse to the first level (i.e., assuming $\text{NP}\neq\text{coNP}$), these problems are in $(\Sigma_2^{\text{p}}\cap\Pi_2^{\text{p}})\setminus(\text{NP}\cup\text{coNP})$.
In particular, we provide a fine-grained complexity analysis across different parameter regimes, including the number of agents and the choice of valuation models.
Depending on different parameters, many problems admit different complexity classifications, ranging from the intermediate classes $\Theta_2^{\text{p}}$ and $\Delta_2^{\text{p}}$ between the two levels to $\Sigma_2^{\text{p}}$-completeness.

Finally, De Keijzer et al. show the $\Sigma_2^{\text{p}}$-completeness of the decision problem when considering Pareto-optimality as the efficiency criterion with additive valuations.
Our main results extend this result to more restricted settings, such as instances with a constant number of agents or $3$-ary valuation functions, which resolves the open problem given by Bouveret and Lang.
\end{abstract}

\section{Introduction}
We study the fair division problem with indivisible goods: a set of $m$ items/goods is allocated among a set of $n$ agents with heterogeneous preferences.
The study of the fair division problem dates back to~\citet{Steinhaus48,Steinhaus49}, and it has received significant attention among mathematicians, computer scientists, and economists (see the survey~\citep{amanatidis2023fair,procaccia2013cake}).
\emph{Fairness} and \emph{efficiency} are two major measurements on the qualities of allocations.

Among all fairness notions, \emph{envy-freeness}~\citep{gamow1958puzzle} is arguably the most studied one.
It requires that each agent weakly prefers her own allocated bundle to the bundle received by anyone else.
In other words, there is no envy between any pair of agents.
Clearly, an envy-free allocation may not exist: for example, this can easily happen when $m<n$.
A natural and application-driven question is to decide if an envy-free allocation exists for a given fair division instance.
This problem is clearly in the complexity class $\classNP$, as the existence of an envy-free allocation can be certified by an envy-free allocation.
In many settings, this problem is $\classNP$-complete.

For the notions of efficiency, we study \emph{Pareto-optimality} and \emph{social welfare optimality} that are commonly considered in the previous literature.
An allocation is Pareto-optimal if there is no other allocation where every agent's utility is weakly better and at least one agent's utility is strictly better.
An allocation is social welfare optimal if it maximizes the (utilitarian) social welfare, defined as the sum of all agents' utilities.
Clearly, social welfare optimality is a stronger requirement than Pareto-optimality.
In terms of social welfare, there is a total order on all allocations: for any two allocations, we can always compare their social welfare and see which one is more efficient.
In terms of Pareto-optimality, this becomes a partial order: it is possible that two allocations are incomparable in that neither one Pareto-dominates the other.
Nevertheless, an allocation satisfying any of the two efficiency criteria is guaranteed to exist: the allocation space is finite, and we can always find the most efficient one (or the locally most efficient one in the case of Pareto-optimality).
However, considering efficiency without fairness can be problematic: in many scenarios, allocating all items to a single agent can be an efficient allocation, but this allocation has a very poor performance in fairness and is certainly undesirable.
Therefore, it is natural to ask the following question.

\begin{itemize}
    \item[] \emph{Given a fair division instance, does there exist an allocation that is both fair and efficient?}
\end{itemize}
In this paper, we study the computational complexity of the problem above.

Although the existence of an efficient allocation is understood, checking if a given allocation satisfies the efficient notions is not necessarily easy.
This problem is clearly in $\classcoNP$, as an inefficient allocation is certified by a more efficient one.
For Pareto-optimality, this problem is $\classcoNP$-complete even for very restrictive preference functions of the agents~\citep{de2009complexity,aziz2019efficient}, and even for two agents (see Lemma~\ref{lem:decidingPO2agents}).
For social welfare, checking if an allocation maximizes it is easy for additive valuations, as we only need to check if each item is given to the agent who values it the most.
However, it can easily become $\classcoNP$-complete for general monotone valuations that are not necessarily additive, as it can be hard to see if there is a better allocation with a larger social welfare.

Therefore, the hardness of the decision problem we are studying can come from two aspects: deciding the existence of an envy-free allocation, and checking if this is the (locally) most efficient one.
As we have seen, the former problem can be $\classNP$-complete and the latter problem can be $\classcoNP$-complete.
On the other hand, this problem is in $\sigma2$: for a \yes instance where a fair and efficient allocation exists, we can find an allocation and see if it is suboptimal to any other allocation.
The existential certificate for $\sigma2$ can be the envy-free allocation, and the for-all certificate can be an allocation that is potentially more efficient.

For Pareto-optimality, celebrated results have shown that with a general number of agents, this decision problem is $\sigma2$-complete for binary monotone valuations~\citep{bouveret2008efficiency} or additive valuations~\citep{de2009complexity}.
\citet{bouveret2008efficiency} leave it as an open problem whether $\sigma2$-completeness continues to hold for more restrictive cases, such as constant numbers of agents or $k$-ary additive valuations.

For social welfare optimality, this problem cannot be $\sigma2$-complete unless $\sigma2=\pi2$, as this problem is also in $\pi2$.
We can in fact swap the orders of the two certificates.
The verifier takes two allocations $A'$ and $A$ as inputs, and outputs \yes if and only if 1) $A$ is envy-free and 2) the social welfare of $A'$ is no more than that of $A$. 
For a \yes instance, it is certainly true that for any allocation $A'$ we can find an envy-free allocation $A$ with a weakly higher social welfare, and in fact this allocation $A$ need not even depend on $A'$.
For a \no instance, $A'$ can be set to an allocation with the maximum social welfare, and any envy-free allocation $A$ has social welfare strictly smaller than that of $A'$.\footnote{It can be easily proved that the problem is in the complexity class $\text{S}_2^{\text{p}}$ by this argument, a complexity class contained in $\sigma2\cap\pi2$. We will however not elaborate on this, as we will show later in Sect.~\ref{sect:complexityclasses} that this problem is in $\delta2$, which is contained in $\text{S}_2^{\text{p}}$.}
In fact,~\citet{bouveret2008efficiency} have shown the problem is $\delta2$-complete for monotone valuations even with two agents.
For the special case of monotone binary valuations, the complexity drops to $\theta2$-complete for a general number of agents (see Proposition 16 in~\citet{bouveret2008efficiency}).

The objective of this paper is to carefully examine the exact computational complexity for the decision problem of deciding the existence of fair and efficient allocations under different settings.

\subsection{Our Results}
We study the following decision problems in this paper:
\begin{enumerate}
    \item Given a fair division instance, does there exist an envy-free and social welfare optimal allocation?
    \item Given a fair division instance, does there exist an envy-free and Pareto-optimal allocation?
    \item Given a fair division instance and an allocation, does it satisfy Pareto-optimality or maximize the social welfare?
\end{enumerate}

We consider two types of valuation functions: the more restrictive additive functions and the more general monotone set functions.
We also consider the special case with a constant number of agents and the special case where the valuation functions are $k$-ary (where an item's marginal value when included in a set of items can only be in $\{0,1,\ldots,k-1\}$ for some constant $k\geq2$).
The results we obtained are shown in Table~\ref{tab:results} and Table~\ref{tab:results2}.

{\small
\begin{table}[t]
    \centering
    \setlength{\tabcolsep}{3pt}
    \renewcommand\arraystretch{1.1}
    \begin{tabular}{|l||ccc|ccc|}
    \hline
       \multirow{2}{*}{\textbf{EF+MSW}}  & \multicolumn{3}{c|}{Additive Valuations} & \multicolumn{3}{c|}{General Monotone Valuations} \\
         & binary & $k$-ary & general & binary & $k$-ary & general\\
    \hline
    $n=2$ & $\classP$ & $\classP$ & $\classNP$-complete & $\theta2$-complete & $\theta2$-complete & $\delta2$-complete\\
    constant $n$ & $\classP$ & $\classP$ & $\classNP$-complete & $\theta2$-complete & $\theta2$-complete & $\delta2$-complete\\
    general $n$ & $\classNP$-complete & $\classNP$-complete & $\classNP$-complete & $\theta2$-complete & $\theta2$-complete & $\delta2$-complete\\
    \hhline{=||======}
    \multirow{2}{*}{\textbf{EF+PO}} & \multicolumn{3}{c|}{Additive Valuations} & \multicolumn{3}{c|}{General Monotone Valuations} \\
         & binary & $k$-ary & general & binary & $k$-ary & general\\
     \hline
    $n=2$ & $\classP$ & $\classP$ & $\classNP$-complete & $\theta2$-complete & $\theta2$-complete & $\sigma2$-complete\\
    constant $n$ & $\classP$ & $\classP$ & $\sigma2$-complete & $\theta2$-complete & $\theta2$-complete & $\sigma2$-complete\\
    general $n$ & $\classNP$-complete & $\sigma2$-complete & $\sigma2$-complete & $\sigma2$-complete & $\sigma2$-complete & $\sigma2$-complete\\
    \hline
    \end{tabular}
    \caption{Results for the first two problems of deciding the existence of fair and efficient allocation (Intuitions are discussed in Sect.~\ref{sect:complexityclasses}, proofs are available in Sect.~\ref{sec:infty0k-po} and Sect.~\ref{sect:formal}, where results can be located in Sect.~\ref{sect:combine})}
    \label{tab:results}
\end{table}

\begin{table}[t]
\centering
\begin{tabular}{|l||ccc|ccc|}
\hline
\multirow{2}{*}{\textbf{MSW Checking}} 
& \multicolumn{3}{c|}{Additive Valuations} 
& \multicolumn{3}{c|}{General Monotone Valuations} \\
& binary & $k$-ary & general & binary & $k$-ary & general\\
\hline

$n=2$
& \multicolumn{3}{c|}{\multirow{3}{*}{
\parbox{3.cm}{\centering
\textbf{}\\
$\classP$\\
}}}
& \multicolumn{3}{c|}{\multirow{3}{*}{
\parbox{3cm}{\centering
\textbf{}\\
$\classcoNP$-complete\\
}}}
\\

constant $n$ & & & & & & \\
general $n$  & & & & & & \\
\hhline{=||======}
    \multirow{2}{*}{\textbf{PO Checking}} & \multicolumn{3}{c|}{Additive Valuations} & \multicolumn{3}{c|}{General Monotone Valuations} \\
         & binary & $k$-ary & general & binary & $k$-ary & general\\
     \hline
    $n=2$ & $\classP$ & $\classP$ & $\classcoNP$-complete &\multicolumn{3}{c|}{\multirow{3}{*}{
\parbox{3cm}{\centering
\textbf{}\\
$\classcoNP$-complete\\
}}}
\\
    constant $n$ & $\classP$ & $\classP$ & $\classcoNP$-complete & & &\\
    general $n$ & $\classP$ & $\classcoNP$-complete & $\classcoNP$-complete & & &\\
    \hline
    \end{tabular}
    \caption{Results for the third problem of efficiency checking (Proofs are available in Sect.~\ref{sec:efficiency-checking-proof})}
    \label{tab:results2}
\end{table}
}

Interestingly, we find that many of our problems are in the gap between the first level of the polynomial hierarchy (i.e., $\classNP$ and $\classcoNP$) and the second level of the polynomial hierarchy (i.e., $\sigma2$ and $\pi2$).
Formally speaking, they are in $(\sigma2\cap\pi2)\setminus(\classNP\cup\classcoNP)$ assuming the polynomial hierarchy does not collapse to the first level (i.e., assuming $\classNP\neq\classcoNP$).
In more detail, we show that many of these problems are complete in the complexity classes $\theta2$ and $\delta2$, both classes are contained in $(\sigma2\cap\pi2)$ and contain $\classNP\cup\classcoNP$.
Since both $\theta2$ and $\delta2$ contain $\classNP\cup\classcoNP$, a problem complete in $\theta2$ or $\delta2$ is both $\classNP$-hard and $\classcoNP$-hard.
Assuming $\classNP\neq\classcoNP$, these problems are in the gap $(\sigma2\cap\pi2)\setminus(\classNP\cup\classcoNP)$  between the first two levels of the polynomial hierarchy.

Compared to previous results, our paper provides a more fine-grained complexity analysis by considering more restrictive settings with respect to the number of agents and types of values.
It is an interesting phenomenon that different combinations of the studied restrictions lie on (or in between) different levels of the polynomial hierarchy.
On the other hand, we discover that different combinations of restrictions yield problems that have clear features of typical complete problems in complexity classes, which provide many insights that help understand these complexity classes.

Finally,~\citet{bouveret2008efficiency} leave the open problems on the complexity of deciding the existence of an envy-free and Pareto-optimal allocation for additive valuations, or even more restrictive $k$-ary additive valuations. \citet{de2009complexity} answer the former one by showing that the problem remains $\sigma2$-complete for additive valuations.
We extend this result and resolve the latter one by showing that $\sigma2$-completeness continues to hold for $3$-ary additive valuations with a general number of agents, or for additive valuations with a constant number of agents.

We will go through the above-mentioned complexity classes in Sect.~\ref{sect:complexityclasses}.
We will also provide insight for all results in Table~\ref{tab:results} and Table~\ref{tab:results2} by using these fair division problems as running examples for these complexity classes, and we will see why these problems are typical problems that are complete in these classes.
Although the proofs for all the results are available in Sect.~\ref{sec:infty0k-po}, Sect.~\ref{sect:formal}, and Sect.~\ref{sec:efficiency-checking-proof}, we expect Sect.~\ref{sect:complexityclasses} provides sufficient intuitions that serve as proof sketches.

\subsection{Related Work}

\paragraph{Computational Complexity of Maximizing Social Welfare.}
An allocation with the maximum social welfare can be computed in polynomial time for additive valuations, where each item is allocated to the agent with the highest value to it.
However, the problem becomes hard when deciding whether a social welfare-maximizing allocation is compatible with fairness guarantees.
For monotone valuations,~\citet{bouveret2008efficiency} have shown that deciding the existence of an envy-free allocation with maximum social welfare is $\delta2$-complete for the two-agent case, and $\theta2$-complete for a general number of agents with monotone dichotomous valuations, which is a special case of monotone binary valuations.
For general additive valuations,~\citet{aziz2023computing} show that deciding the existence of an EF1 or EFX allocation with maximum social welfare is $\classNP$-complete, except for the two-agent case with EF1 guarantee (which is in $\classP$).

Another line of work considers fair and efficient allocations from the aspect of price of fairness, which denotes the loss in social welfare when fairness is required~\citep{barman2020optimal,bei2021price,li2024complete}, or the complexity and approximability of maximizing social welfare under fair constraints~\citep{aziz2023computing,bu2025approximability,barman2019fair,barman2020optimal}.

\paragraph{Computational Complexity of Pareto-Optimality.}
Given an allocation, deciding whether it is Pareto-optimal is $\classcoNP$-complete, even for tri-valued additive valuations (where each agent's value to each item belongs to $\{p,q,r\}$)~\citep{de2009complexity,aziz2019efficient}.
For the special case with bi-valued additive valuations (where each agent's value to each item belongs to $\{p,q\}$), Pareto-optimality can be decided in polynomial time~\citep{aziz2019efficient}.
When considering fairness, deciding the existence of a Pareto-optimal and envy-free allocation for a general number of agents is $\sigma2$-complete under binary monotone valuations~\citep{bouveret2008efficiency}, or under general additive valuations~\citep{de2009complexity}.

For weaker fairness notions, it is known that Pareto-optimality is compatible with envy-freeness up to one item (EF1) for additive valuations by maximizing Nash social welfare~\citep{caragiannis2019unreasonable}, and can be computed in pseudo-polynomial time~\citep{barman2018finding}.
For $k$-ary valuations, such an allocation is polynomial-time computable~\citep{garg2024computing}.
For envy-freeness up to any item (EFX), Pareto-optimality is incompatible with EFX, and deciding the existence of a Pareto-optimal and EFX allocation is $\classNP$-hard~\citep{garg2023computing}.
Positive results are shown for binary or bi-valued additive valuations, where such an allocation exists and can be computed in polynomial time~\citep{halpern2020fair,babaioff2021fair,amanatidis2021maximum,garg2023computing}.

\paragraph{Other Aspects in Fair Division.}
As envy-free allocations may not exist, a variety of work considers relaxations of envy-freeness, including envy-freeness up to one item (EF1)~\citep{budish2011combinatorial,lipton2004approximately}, envy-freeness up to any item (EFX)~\citep{caragiannis2019unreasonable,plaut2020almost,chaudhury2020efx,berger2022almost,amanatidis2021maximum}, etc.
We refer the readers to the survey~\citep{aziz2022algorithmic} for further details.

\section{Preliminaries}
\subsection{Fair Division Model}
A set $M$ of $m$ goods/items is to be allocated to a set of $n$ agents $N=\{1,\ldots,n\}$.
For notational convenience, we reserve the symbols $n$ and $m$ respectively for the numbers of agents and items \emph{only in this section}, to allow the freedom of using $n$ and/or $m$ to refer to the parameters of the hard problems from which our problems are reduced in the later sections.

Each agent $i$ has a \emph{valuation function} $v_i(\cdot)$ that takes a subset of items $S$ as input and outputs a non-negative value representing agent $i$'s value on the bundle $S$, and we assume $v_i$ satisfies (i) \emph{normalization}: $v_i(\emptyset)=0$ and (ii) \emph{monotonicity}: $v_i(S)\geq v_i(T)$ if $T\subseteq S\subseteq M$.
A valuation function $v_i(\cdot)$ is \emph{additive} if agent $i$ has a non-negative value $v_{ig}$ on each item $g\in M$, and $v_i(S)=\sum_{g\in S}v_{ig}$.
Given a positive integer $k\geq2$, we say that a valuation function is \emph{$k$-ary} if $v_i(S\cup\{g\})-v_i(S)\in\{0,1,\ldots,k-1\}$ for any $S\subseteq M$ and $g\in M$.
Notice that this would imply $v_{ig}\in\{0,1,\ldots,k-1\}$ for any $g\in M$ for additive valuation function $v_i(\cdot)$.
A valuation function is \emph{binary} if it is $k$-ary with $k=2$.

We also make the following natural assumptions on $v_i(\cdot)$ for the convenience of studying computational complexity:
\begin{enumerate}
    \item[(i)] (Integrality): $v_i(S)\in\mathbb{Z}_{\geq0}$ for any $S\subseteq M$;
    \item[(ii)] (Poly-time computability): $v_i(S)$ can be computed in polynomial time with respect to $m$ and $n$.\footnote{It is tempting to make this assumption more general by assuming an oracle access to $v_i(\cdot)$. However, this will be problematic when studying complexity, as we can do many extra undesirable things with this oracle. For example, if the item set $M$ represents the vertex set in an undirected graph $G_i$ and $v_i(S)$ is defined by the size of the maximum clique in the subgraph induced by $S$, an oracle access $v_i(\cdot)$ is (at least weakly) more powerful than an $\classNP$ oracle. As we will see in Sect.~\ref{sect:complexityclasses}, we can solve all problems in the complexity class $\delta2$ in polynomial time. This is certainly not what we want.}
    It is implied the length of the binary string representing the maximum possible value $v_i(M)$ is bounded by a polynomial of $m$ and $n$. 
\end{enumerate}

A fair division instance is given by $(N,M,\{v_i\}_{i\in N})$.
An \emph{allocation} is an ordered partition $(A_1,\ldots,A_n)$ of $M$ where $A_i$ is the set of items allocated to agent $i$.
We will interchangeably call $A_i$ the \emph{bundle} received by agent $i$.
We require all items to be allocated.
However, if we consider the same class of problems studied in this paper with partial allocations allowed, as we will remark in Remark~\ref{rmk:partial}, all our results in the paper continue to hold. As a quick overview of this, when we are performing a reduction, we will always make sure that 1) for a \yes instance, a valid complete allocation exists, and 2) for a \no instance, no valid allocation exists, even allowing partial allocations.

\begin{definition}[Envy-freeness]
    Given a fair division instance $(N,M,\{v_i\}_{i\in N})$, an allocation $(A_1,\ldots,A_n)$ is \emph{envy-free} if $v_i(A_i)\geq v_i(A_j)$ for any pair of agents $i$ and $j$.
\end{definition}

In words, an allocation is envy-free if, for any pair of agents $i$ and $j$, agent $i$ believes her allocated bundle is weakly more valuable than agent $j$'s bundle.
We say that agent $i$ envies agent $j$ if $v_i(A_i)<v_i(A_j)$.

\begin{definition}
    Given a fair division instance $(N,M,\{v_i\}_{i\in N})$, an allocation $(A_1',\ldots,A_n')$ \emph{Pareto-dominates} $(A_1,\ldots,A_n)$ if all the $n$ inequalities $\{v_i(A_i')\geq v_i(A_i)\}_{i\in N}$ hold and at least one of them is strict.
    An allocation $(A_1,\ldots,A_n)$ is \emph{Pareto-optimal} if it is not Pareto-dominated by any allocation.
\end{definition}

\begin{definition}
    Given a fair division instance $(N,M,\{v_i\}_{i\in N})$, the \emph{social welfare} of an allocation $(A_1,\ldots,A_n)$ is defined by $\sum_{i=1}^nv_i(A_i)$. An allocation $(A_1,\ldots,A_n)$ is \emph{social welfare optimal} if it has the maximum social welfare among all allocations.
\end{definition}
It is straightforward to see that a social welfare optimal allocation is always Pareto-optimal.

We also need the following technical tool.
\begin{definition}\label{def:truthtable}
    Given a fair division instance $(N,M,\{v_i\}_{i\in N})$, the \emph{valuation truth table}, denoted by $\mathcal{T}(N,M,\{v_i\}_{i\in N})$, or by just $\mathcal{T}$ when there is no ambiguity on the instance considering, is a table of size $(T+1)^{n^2}$ where $T=\max_{i\in N}v_i(M)$.
    A cell in the table is specified/indexed by a $n\times n$ matrix $[u_{ij}]_{i\in N,j\in N}\in\{0,1,\ldots,T\}^{n\times n}$.
    The value stored in this cell is ``true'' if there is an allocation $(A_1,\ldots,A_n)$ such that $v_i(A_j)=u_{ij}$ for every $i,j\in N$, and it is ``false'' otherwise.
\end{definition}

Notice that the size of the table is polynomial in $m$ and $n$ if $n$ is a constant and valuations are $k$-ary.
It does not matter if the valuation functions are additive or not, as $T\leq mk$ holds for $k$-ary valuations.

\subsection{\efpo and \efmsw Problems}
In this section, we formally define the two decision problems \efpo and \efmsw studied in this paper.
Both decision problems' instances are fair division instances $(N,M,\{v_i\}_{i\in N})$.
The \yes instances of \efpo are those admitting envy-free and Pareto-optimal allocations; for \efmsw, those admitting envy-free and social welfare optimal allocations.

We consider both problems with various special cases, such as a fixed number of agents, additive valuations, and/or $k$-ary valuations.
For \efpo, we use three parameters to specify the exact problem/special case, and we will write \efpo-[\ding{172},\ding{173},\ding{174}], illustrated as follows:
\begin{itemize}[leftmargin=*]
    \item The first parameter \ding{172} specifies the maximum number of agents, and it takes a value from $\{2,3,\ldots,\}\cup\{\infty\}$, where $\infty$ indicates that the number of agents can be arbitrary (depending on the input instance) and other values indicate that the number of agents is bounded by a fixed/constant number;
    \item The second parameter \ding{173} takes a value of either \add or \mono, where \add indicates all valuations are additive and \mono indicates the valuations can be general monotone functions;
    \item The third parameter \ding{174} takes a value from $\{2,3,\ldots\}\cup\{\infty\}$, where a value $k\in\{2,3,\ldots\}$ indicates that all valuation functions are $k$-ary and $\infty$ indicates that this type of assumptions is not made.
\end{itemize}

\begin{example}
\efpo-$[2,\mono,\infty]$ refers to the decision problem \efpo where there are two agents in every instance, the valuation functions need not to be additive, and no $k$-ary assumption is made; \efpo-[$5,\add,2$] refers to the decision problem \efpo with at most $5$ agents and additive binary valuations.     
\end{example}

We define \efmsw-[$\cdot,\cdot,\cdot$] similarly.
Let $\add \le \mono$ and $x \le \infty$ for any $x\in \mathbb{Z}^+$.
Then for any $a,b,c,a',b',c'$ with $a\leq a'$, $b\leq b'$, and $c\leq c'$, \efpo-[$a,b,c$] is a special case of \efpo-[$a',b',c'$] (meaning that an \efpo-[$a,b,c$] instance is also an \efpo-[$a',b',c'$] instance), and is ``no harder than'' \efpo-[$a',b',c'$].
The same holds for \efmsw.

\section{Complexity Classes between First Two Levels of Polynomial Hierarchy; How \efpo and \efmsw Are Placed}
\label{sect:complexityclasses}
In this section, we introduce the complexity classes $\theta2,\delta2$ between the first level of the polynomial hierarchy ($\classNP$ and $\classcoNP$) and the second level of the polynomial hierarchy ($\sigma2$ and $\pi2$).
We will use the fair division problems \efmsw-[$\cdot,\cdot,\cdot$] and \efpo-[$\cdot,\cdot,\cdot$] as running examples to explain these complexity classes, as we find that these fair division problems are very typical examples that help better understand these complexity classes.
The discussion in this section also provides significant intuitions for the results in Table~\ref{tab:results}, and we expect that the readers can mostly understand how the results in Table~\ref{tab:results} come from by reading this section.
The descriptions in this section are not intended to be formal.
Formal treatments of the complexity concepts can be found in many textbooks.

\subsection{Complexity Theory Basics and Overview}
We begin by briefly going through basic complexity concepts and defining the well-known complexity classes $\classP,\classNP,\classcoNP,\sigma2,$ and $\pi2$.

Let $\Sigma$ be an \emph{alphabet-set}, which is typically set to binary $\Sigma=\{0,1\}$, and let $\Sigma^\ast$ be the set of strings with arbitrary lengths using the alphabet-set $\Sigma$. 
A \emph{language}, or a \emph{decision problem}, is a subset $L$ of $\Sigma^\ast$.
A \yes instance of $L$ is a string $x$ with $x\in L$, and a \no instance of $L$ is a string $x$ with $x\notin L$.
For example, for \efpo, $L$ contains all the strings that encode fair division instances in each of which an envy-free and Pareto-optimal allocation can be found.
When referring to the fair division problems, we assume without loss of generality that each string represents a valid fair division instance.
The \emph{complement} of a language $L$, denoted by $\bar{L}$, is defined by $\bar{L}=\{x\mid x\notin L\}$.
Given two languages $L$ and $L'$, we say that $L$ \emph{Karp-reduces to} $L'$ if there is a polynomial-time Turing machine that takes a string $x$ as input and outputs another string $x'$ such that $x\in L$ if and only if $x'\in L'$.
All reductions in this paper are Karp reductions, and we will simply say $L$ reduces to $L'$ from now on.

A \emph{complexity class} $X$ is a set of languages.
A language $L$ is $X$-hard if every language in $X$ reduces to $L$.
A language $L$ is $X$-complete if $L\in X$ and $L$ is $X$-hard.
From this definition of hardness, it is straightforward to see that a $Y$-hard language is also $X$-hard if $X\subseteq Y$.
We assume the readers are familiar with the five complexity classes $\classP,\classNP,\classcoNP,\sigma2,$ and $\pi2$, and we give a quick review of $\sigma2$ and $\pi2$ below.
When we say the running time of a Turing machine is polynomial, it is with respect to the length of $x$.
\begin{itemize}[leftmargin=*]
    \item $\textbf{P}$ is the class of languages that can be decided by a deterministic Turing machine in polynomial time.
    \item $\textbf{NP}$ is the class of languages that can be decided by some nondeterministic polynomial-time Turing machine.
    Equivalently, a language $L\in\classNP$ if and only if there exists a polynomial-time Turing machine $\mathbb{M}$ such that for any input $x$, $x\in L$ if and only if there exists a polynomial-length certificate $u$ such that $\mathbb{M}(x,u)=1$.

    \item $\textbf{coNP}$ is defined by $\{L\mid\bar{L}\in \classNP\}$.
    Equivalently, a language $L\in\classcoNP$ if and only if there exists a polynomial-time Turing machine $\mathbb{M}$ such that for all input $x$, $x\in L$ if and only if for any polynomial-length certificate $u$, we have $\mathbb{M}(x,u)=1$.

    \item $\bm\Sigma_{\bm2}^\textbf{p}$: a language $L\in\sigma2$ if and only if there exists a polynomial-time Turing machine $\mathbb{M}$ such that for any input $x$, $x\in L$ if and only if there exists a polynomial-length certificate $u_1$, such that for any polynomial-length certificate $u_2$, we have $\mathbb{M}(x,u_1,u_2)=1$.

    \item $\bm\Pi_{\bm2}^\textbf{p}$ is defined by $\{L\mid\bar{L}\in\sigma2\}$. Equivalently, a language $L\in\pi2$ if and only if there exists a polynomial-time Turing machine $\mathbb{M}$ such that for any input $x$, $x\in L$ if and only if for any polynomial-length certificate $u_1$, there exists a polynomial-length certificate $u_2$ such that $\mathbb{M}(x,u_1,u_2)=1$.
\end{itemize}

\paragraph{A brief overview of $\theta2$ and $\delta2$.}
Here, we briefly describe the complexity classes $\theta2$ and $\delta2$. We will elaborate on them more in the later subsections, with fair division problems as running examples.

To introduce the complexity classes $\theta2$ and $\delta2$, we need the notion of an \emph{NP oracle}.
Informally, a Turing machine with an NP oracle is a Turing machine that can query whether $x\in L$ and get the answer (\yes or \no) in one step for any language $L\in\classNP$.
We refer the readers to standard complexity textbooks for the precise definition of a Turing machine with an NP oracle.
Notice that an NP oracle can also solve any $\classcoNP$ problem in one step, as we only need to flip the answer.
Thus, an NP oracle has the same power as a coNP oracle.
An equivalent definition for $\sigma2$ is the set of languages that can be decided by a nondeterministic polynomial-time Turing machine with an NP oracle, and we write $\sigma2=\classNP^{\classNP}$.
Correspondingly, $\pi2=\classcoNP^{\classNP}$.

If we consider problems that can be solved by a \emph{deterministic} Turing machine in polynomial time with an NP oracle, this is exactly the complexity class $\delta2$.
\begin{itemize}[leftmargin=*]
\item $\bm\Delta_{\bm2}^\textbf{p}$, defined by $\classP^{\classNP}$, is the class of languages that can be decided by some polynomial-time Turing machine with an NP oracle. 
\end{itemize}
It is clear that $\delta2$ is contained in both $\sigma2$ and $\pi2$: $\delta2\subseteq \sigma2\cap\pi2$.

Notice that, in the definition of $\delta2$, we can apply the NP oracle for up to polynomially many times in an \emph{adaptive} way.
If we draw the decision tree where each node represents a query and each leaf represents a final decision (\yes or \no), the depth of the tree is polynomial in $|x|$ and the total number of nodes is exponential.
Therefore, if we insist on \emph{non-adaptive} queries to the NP oracle, the total number of queries can be exponential. We will elaborate on this more later.

A subset of $\delta2$ is $\theta2$, which is defined as the decision problems that can be solved in polynomial time with polynomially many \emph{non-adaptive} queries to an NP oracle.
\begin{itemize}[leftmargin=*]
\item $\bm\Theta_{\bm2}^\textbf{p}$, defined by $\classP^{\classNP}_{\parallel}$, is the class of languages that can be decided by some polynomial-time oracle machine that non-adaptively applies $\classNP$ oracles for polynomially many times.
\end{itemize}
\citet{hemachandra1987strong} and \citet{buss1991truth} independently realized that $\theta2$ is the same as $\classP^{\classNP[\log ]}$, where $\classP^{\classNP[\log ]}$ is the set of problems that can be decided by a Turing machine in polynomial time with $O(\log(|x|))$ adaptive queries to an NP oracle.
That is, a logarithmic number of adaptive queries has the same power as a polynomial number of non-adaptive queries.
It is easy to see that a logarithmic number of adaptive queries can be simulated by a polynomial number of non-adaptive queries, as the number of nodes in a binary decision tree with a logarithmic depth is polynomial.
The other direction is harder to see.

We can further restrict the number of queries to the NP oracle and get sub-class of $\theta2$.
If the number of queries is bounded by $1$, the complexity class is denoted by $\classP^{\classNP[1]}$, and we have $\classNP\cup\classcoNP\subseteq \classP^{\classNP[1]}$ as any problem in $\classNP$ or $\classcoNP$ can be solved directly with one query of an NP oracle.

Putting together, the relationship between these complexity classes is given below.

\begin{proposition}
$\classP\subseteq\classNP\cap\classcoNP\subseteq\classNP\cup\classcoNP\subseteq\classP^{\classNP[1]}\subseteq\theta2\subseteq\delta2\subseteq\sigma2\cap\pi2$.
\end{proposition}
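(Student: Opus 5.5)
The plan is to verify each of the six inclusions in the chain separately, directly from the oracle-machine and certificate definitions given above. No result harder than the definitions is needed. The only inclusion that requires a genuine argument is the last one, and even that is standard.

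\textbf{The first four inclusions.}
\begin{itemize}
\item $\classP\subseteq\classNP\cap\classcoNP$: a polynomial-time decider can serve as a verifier that simply ignores its certificate, so it places $L$ in both $\classNP$ and $\classcoNP$.
\item $\classNP\cap\classcoNP\subseteq\classNP\cup\classcoNP$: this is trivial set inclusion.
\item $\classNP\cup\classcoNP\subseteq\classP^{\classNP[1]}$: if $L\in\classNP$, the machine makes the single query ``$x\in L$?'' and copies the answer. If $L\in\classcoNP$, it queries ``$x\in\bar L$?'' (with $\bar L\in\classNP$) and flips the answer.
\item $\classP^{\classNP[1]}\subseteq\theta2$: one query is trivially a non-adaptive set of polynomially many queries. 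Equivalently, one query is $O(\log|x|)$ adaptive queries, using the characterization $\theta2=\classP^{\classNP[\log]}$ cited above.
\end{itemize}

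\textbf{$\theta2\subseteq\delta2$.} A machine that issues its polynomially many queries non-adaptively is in particular a polynomial-time machine with an NP oracle. Non-adaptivity is only an extra restriction, so the inclusion is immediate.

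\textbf{$\delta2\subseteq\sigma2\cap\pi2$.} This is the main step. First I will reduce to one side: $\delta2$ is closed under complement, since flipping the output of a deterministic oracle machine is free. Hence it suffices to show $\delta2\subseteq\sigma2$, because then $L\in\delta2$ gives $\bar L\in\delta2\subseteq\sigma2$, i.e.\ $L\in\pi2$.

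For $\delta2\subseteq\sigma2$, let $\mathbb{M}$ be a polynomial-time machine with oracle for some $\classNP$ language $L'$, where $L'$ is witnessed by a verifier $V$. The $\sigma2$ proof proceeds as follows.

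\begin{enumerate}
\item \emph{Existential certificate $u_1$.} This is a guessed transcript of $\mathbb{M}$'s computation: the sequence of queries $q_1,\ldots,q_t$, the claimed answers $b_1,\ldots,b_t$, and for every query with $b_j=$ \yes a witness $w_j$ for $q_j\in L'$.
\item \emph{Universal certificate $u_2$.} This is a tuple of candidate witnesses $w'_j$, one for each query with $b_j=$ \no.
\item \emph{The final verifier.} It checks three things:
  \begin{itemize}
  \item the transcript is consistent with a deterministic run of $\mathbb{M}$ on $x$ when the answers $b_j$ are fed back;
  \item $V(q_j,w_j)=1$ for every \yes answer, and $V(q_j,w'_j)=0$ for every \no answer;
  \item the run accepts.
  \end{itemize}
\end{enumerate}

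\textbf{Where care is needed.} Since $\mathbb{M}$ is deterministic, there is exactly one correct transcript, and the true answers make every check pass. Correctness then requires the converse: any $u_1$ that survives all $u_2$ must have all answers correct. A wrong \yes cannot survive, because its witness fails $V$. A wrong \no cannot survive either: some $u_2$ supplies a valid witness, and the verifier then rejects.

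So the accepted $u_1$ is the true computation, and $x\in L$ iff $\exists u_1\forall u_2$ such that the verifier accepts. All lengths stay polynomial because $t$ and the query sizes are bounded by $\mathbb{M}$'s running time. This completes the chain.
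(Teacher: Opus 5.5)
Your proof is correct. For the first five inclusions it matches the paper. The paper gives no separate proof of this proposition; it relies on the remarks around the definitions: an $\classNP$ or $\classcoNP$ question is answered by one oracle query (flipping the answer for $\classcoNP$), a single query is trivially non-adaptive, and non-adaptive querying is a special case of adaptive querying.

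The routes differ only in the last inclusion, which the paper calls ``clear''. Its implicit argument uses the oracle characterizations $\sigma2=\classNP^{\classNP}$ and $\pi2=\classcoNP^{\classNP}$ quoted just before. A deterministic polynomial-time oracle machine is a special case of both a nondeterministic and a co-nondeterministic one, so $\classP^{\classNP}$ lies in both classes.

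You instead work directly from the $\exists\forall$ certificate definitions the paper writes down. You guess the transcript existentially together with witnesses for the \yes answers, refute the \no answers universally, and obtain $\pi2$ from the closure of $\delta2$ under complement. In effect you reprove the direction $\classNP^{\classNP}\subseteq\sigma2$ of the standard equivalence, restricted to deterministic base machines.

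The paper's route is a one-liner but depends on that cited equivalence. Yours is self-contained relative to the paper's definitions at the cost of a few more lines, and the complementation step spares you a separate $\forall\exists$ construction.

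One formality you should state: a universal certificate $u_2$ that does not parse as a tuple of candidate witnesses must not cause rejection. For example, parse every string canonically into such a tuple. Otherwise a malformed $u_2$ would make the verifier reject even when $x\in L$, breaking completeness.
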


It is widely believed that $\classNP\neq\classcoNP$; otherwise, the polynomial hierarchy collapses to the first level.
However, as we will show next, if $\classNP\neq\classcoNP$, even this containment $\classNP\cup\classcoNP\subseteq\classP^{\classNP[1]}$ is proper.\footnote{To the best of our knowledge, we did not find such a proof in the previous literature. We therefore provide a proof of this simple fact here.}
To see this, we need the following well-known proposition.

\begin{proposition}
    If $L$ is both $\classNP$-hard and $\classcoNP$-hard, then either $L\in\classNP$ or $L\in\classcoNP$ implies $\classNP=\classcoNP$.
\end{proposition}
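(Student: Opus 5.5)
By symmetry it suffices to treat the case $L\in\classNP$; the case $L\in\classcoNP$ follows by applying the same argument to $\bar{L}$, or by swapping the roles of the two classes throughout. The plan is to use the $\classcoNP$-hardness of $L$ together with closure of $\classNP$ under Karp reductions to show $\classcoNP\subseteq\classNP$. We then upgrade this to equality by taking complements. The $\classNP$-hardness hypothesis is not actually needed for this case; it is what handles the symmetric case.

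First, fix an arbitrary language $L'\in\classcoNP$. Since $L$ is $\classcoNP$-hard, there is a polynomial-time computable $f$ with $x\in L'$ if and only if $f(x)\in L$. Since $L\in\classNP$, there is a polynomial-time verifier $\mathbb{M}$ and a polynomial $p$ such that $y\in L$ if and only if some certificate $u$ with $|u|\le p(|y|)$ satisfies $\mathbb{M}(y,u)=1$.

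Next, define the verifier $\mathbb{M}'(x,u)=\mathbb{M}(f(x),u)$. Because $|f(x)|$ is polynomial in $|x|$, the certificate length $p(|f(x)|)$ is still polynomial in $|x|$, and $\mathbb{M}'$ runs in polynomial time. Hence $L'\in\classNP$, and therefore $\classcoNP\subseteq\classNP$.

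Finally, for $L''\in\classNP$ we have $\bar{L''}\in\classcoNP\subseteq\classNP$, so $L''\in\classcoNP$. This gives $\classNP\subseteq\classcoNP$, and together with the previous step, $\classNP=\classcoNP$. For $L\in\classcoNP$, the same argument with the $\classNP$-hardness of $L$ gives $\classNP\subseteq\classcoNP$ directly, and complementation again yields equality. There is no real obstacle here; the only point needing care is checking that certificate lengths stay polynomial after composing with the reduction $f$.
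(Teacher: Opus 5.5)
Your proof is correct. Composing the $\classcoNP$-hardness reduction with an $\classNP$ verifier gives $\classcoNP\subseteq\classNP$, complementation upgrades this to equality, and the case $L\in\classcoNP$ follows symmetrically (for instance via $\bar{L}$, which inherits both hardness properties). The paper states this proposition as well-known without proof, and your argument is the standard one it relies on.
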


With the above proposition, assuming $\classNP\neq\classcoNP$, a language that is both $\classNP$-hard and $\classcoNP$-hard cannot be in $\classNP\cup\classcoNP$.
To show $\classNP\cup\classcoNP\subsetneq\classP^{\classNP[1]}$, it suffices to find $L\in\classP^{\classNP[1]}$ that is both $\classNP$-hard and $\classcoNP$-hard.
The following language does the job:
$$L=\left\{(x,\phi)_{x\in\{0,1\},\phi\mbox{ is a \sat instance}}\mid (x=1\wedge\phi\mbox{ is satisfiable})\vee (x=0\wedge\phi\mbox{ is unsatisfiable})\right\}.$$
This language is clearly in $\classP^{\classNP[1]}$, as we only need one query to decide the satisfiability of $\phi$, and we can output \yes or \no depending on the value of $x$.
It is $\classNP$-hard: the reduction simply maps a \sat instance $\phi$ to $(1,\phi)$.
It is also $\classcoNP$-hard: the reduction simply maps an \unsat instance $\phi$ to $(0,\phi)$.

For the complexity classes $\theta2$ and $\delta2$ that contains $\classNP\cup\classcoNP$, for any language $L$ that is complete in $\theta2$ or $\delta2$, it is also both $\classNP$-hard and $\classcoNP$-hard, and we can conclude that $L$ is outside $\classNP\cup\classcoNP$ (assuming $\classNP\neq\classcoNP$).

We have identified many fair division problems that are complete in $\theta2$ and $\delta2$ (as shown in Table~\ref{tab:results}).
By the analysis above, these problems are in the ``gap'' between the first two levels of polynomial hierarchy, namely, in $(\sigma2\cap\pi2)\setminus(\classNP\cup\classcoNP)$, assuming $\classNP\neq\classcoNP$.

As a final remark, many other complexity classes between the first two levels of the polynomial hierarchy are not discussed in this paper.
Notable examples include 1) $\d2$, which is between $\classP^{\classNP[1]}$ and $\classP^{\classNP[2]}$ and typical problems that are $\d2$-complete includes deciding if the maximum clique of the graph has size exactly $k$, and 2) $\text{S}_2^p$, which contains $\delta2$ and is contained in $\sigma2\cap\pi2$.

\paragraph{On fair division problems.}
All problems with \efmsw-[$\cdot,\cdot,\cdot$] and \efpo-[$\cdot,\cdot,\cdot$] are in $\sigma2$.
To see this, the existential certificate in the definition of $\sigma2$ is an envy-free allocation $A$, and the for-all certificate is a potential allocation $A'$ with a higher social welfare (for \efmsw) or Pareto-dominates $A$ (for \efpo).
The difficulties of these problems can come from two directions: 1) deciding the existence of an envy-free allocation, and 2) deciding if this envy-free allocation is most efficient.
If 2) can be decided in polynomial time, then the problem is in $\classNP$, as we do not require the for-all certificate in this case.
Interestingly, however, if the decision problem 2) is $\classcoNP$-hard, the fair division problem can still be in $\classNP$.
An example is \efpo-$[2,\add,\infty]$ (see Lemmas~\ref{lem:member:20infty-po} and~\ref{lem:decidingPO2agents}).
Therefore, the exact complexity of each fair division problem not only depends on whether 1) or 2) is hard, but also on how the difficulties for 1) and 2) are related.
This will be elaborated in the following sub-sections.

\subsection{Fair Division Problems in $\classP$}
We first identify the easiest problems. These are exactly \efpo-$[n,\add,k]$ and \efmsw-$[n,\add,k]$ for constants $n$ and $k$: those instances with constant numbers of agents and additive $k$-ary valuation functions.

As we have remarked after Definition~\ref{def:truthtable}, for each such instance, the valuation truth table $\mathcal{T}$ has a polynomial size.
If the truth table is constructed, checking the existence of an envy-free and Pareto-optimal allocation can obviously be done in polynomial time:
the index of the cell already tells if the corresponding allocation is envy-free; for each ``envy-free index'' that has value ``true'', it suffices to check the Boolean value of each cell that represents a Pareto-dominating allocation to see if any of them has value ``true''.
As a remark, for each cell that has value ``true'', it may correspond to more than one allocation.

Therefore, the problem is in $\classP$ if $\mathcal{T}$ can be constructed in polynomial time.
This is true for additive valuations with constant numbers of $n$ and $k$: it can be done by the standard dynamic programming method, which has also been applied in other fair division papers~\citep{aziz2023computing,han2023average,bu2025approximability}.

However, the dynamic programming method will fail in any of the two cases below:
\begin{enumerate}
    \item the number of agents $n$ is not a constant, or the valuation function is not $k$-ary (in which case the table $\mathcal{T}$ is not of polynomial size), or
    \item the valuation function is not additive (in which case building a recurrence relation in the dynamic programming is infeasible).
\end{enumerate}

\subsection{Fair Division Problems in $\classNP$}
Typical fair+efficiency problems in $\classNP$ are those such that it is easy to check if an allocation is efficient.
For example, for all problems in \efmsw-$[\cdot,\add,\cdot]$, i.e., all problems in \efmsw with additive valuations, it is straightforward to check if an allocation is social welfare optimal, as an allocation is social welfare optimal if and only if each item is given to the agent who values it most; this establishes $\classNP$ membership.
As discussed in the previous subsection, \efmsw is easy and belongs to $\classP$ for $n$ and $k$ being both constants.
However, it becomes $\classNP$-hard when either $n$ or $k$ is not a constant.
For additive valuations that are not $k$-ary, even for two agents, deciding the existence of envy-free allocations is $\classNP$-hard by a straightforward reduction from the well-known $\classNP$-complete problem \partition.
For a general number of agents, even with binary valuations, this decision problem is $\classNP$-hard due to \citet{aziz2015fair}.

The problem \efpo-$[\cdot,\add,2]$ (i.e., instances with additive binary valuations) is also in $\classNP$, as an allocation is Pareto-optimal if and only if it is \emph{non-wasteful}, meaning that no agent receives an item with value $0$ (recall that each agent's value on each item can only be $0$ or $1$).
In addition, it is $\classNP$-hard due to \citet{aziz2015fair}.

The only exception is \efpo-$[2,\add,\cdot]$ (i.e., instances with additive valuations and only two agents).
Although it is $\classcoNP$-hard to check if an envy-free allocation is Pareto-optimal (see Lemma~\ref{lem:decidingPO2agents}), the problem is still in $\classNP$.
This is due to a property that is unique for two agents and additive valuations: if an allocation is envy-free, then any of its Pareto-improvements is also envy-free (to see this, for two agents, an agent does not envy the other if and only if she receives at least $\frac12$ fraction of the value of the entire item set $M$; any Pareto-improvement preserves this property).
Therefore, we know that it is a \yes instance as long as an envy-free allocation exists.
In addition, this problem is also $\classNP$-complete, as checking the existence of an envy-free allocation is $\classNP$-complete by the straightforward reduction from \partition.

For \efpo, the containment in $\classNP$ does not hold for a general additive cases with more agents.
The problem becomes $\sigma2$-complete.
\citet{de2009complexity} shows that the problem \efpo-$[\infty,\add,\infty]$ is $\sigma2$-complete.
In Sect.~\ref{sect:hardness}, we extend De Keijzer et al.'s result and show that the problem remains $\sigma2$-complete even for a constant number of agents.

\subsection{Complexity Class $\theta2$ and Fair Division Problems Contained in}
As we have mentioned before, $\theta2$ contains those problems that can be solved with a polynomial number of non-adaptive queries to an NP oracle.
Equivalently, a problem in $\theta2$ can be solved with a logarithmic number of adaptive queries.
Intuitively, for those problems that are complete in $\theta2$, there should be sufficiently many leaves in the decision tree that correspond to ``accept'', and there should be sufficiently many leaves that go to ``reject''.
If the number of leaves that represent ``accept'' is small, we only need to check those paths that lead to ``accept'', we can ``compress'' the queries along each path to two NP queries: combining those queries with answers \yes to a single large CNF Boolean formula which is supposed to be satisfiable, which corresponds to one NP query; combining those queries with answers \no to a single large Boolean formula which is supposed to be unsatisfiable\footnote{notice that the combination should use logical operator ``OR'', so the combined formula is not in CNF form, but it is still in $\classcoNP$ to check if this formula is unsatisfiable.}, which corresponds to the other NP query.
Similar things happen if the number of leaves corresponding to ``reject'' is small.
In this way, the number of queries can be made substantially smaller, and the problem is unlikely to be complete in $\theta2$.

One typical $\theta2$-complete problem is \vcmember due to~\citet{hemaspaandra2005complexity}.

\begin{problem}[\vcmember]\label{problem:vcmember}
    Given an undirected graph $G=(V,E)$ and a vertex $x\in V$, decide if there is a minimum-size vertex cover $S$ that contains $x$.
\end{problem}

To solve \vcmember, we can make the following two queries for each $k=1,\ldots,|V|$:
\begin{enumerate}
    \item Does there exist a vertex cover of size strictly smaller than $k$;
    \item Does there exist a vertex cover of size exactly $k$ that contains $x$.
\end{enumerate}
Both queries are obviously NP queries.
It is easy to see that the \vcmember instance is a \yes instance if and only if the answers to the above two queries are \no and \yes for some value $k$.
In addition, letting $k^\ast$ be the size of the minimum vertex cover, the answers to both queries are \no for each $k<k^\ast$, and the answers to both queries are \yes for each $k>k^\ast$.
We need to exactly locate $k^\ast$ to figure out if the \vcmember instance is a \yes instance.
Since $k^\ast$ is not known in advance and the minimum vertex cover is a classical $\classNP$-hard optimization problem, we have to non-adaptively make queries for all the values of $k$, or adaptively make logarithmically many queries by a binary search.

For our fair division problems, with general monotone valuation functions, both \efpo and \efmsw are in $\theta2$ for $n$ and $k$ being both constants.
Consider the value truth table $\mathcal{T}$.
It has a polynomial number of cells, and the truth-value to each cell can be figured out by an NP query.
As we have seen earlier, once the table is constructed, it is routine to check if there is an allocation that is envy-free and efficient (social welfare optimal for \efmsw, Pareto-optimal for \efpo).
On the other hand, we can easily find some $\classNP$-complete problem to reduce from, and show that obtaining the truth-value of each cell is $\classNP$-hard.
Thus, different from the scenarios with additive valuations (where the value of each cell can be solved by dynamic programming), we have to make a query to obtain the value of each cell.

The problem \efmsw-$[\infty,\mono,k]$ is also in $\theta2$ for any constant $k$.
Although the table $\mathcal{T}$ no longer has a polynomial size, the problem is still relatively simpler for \efmsw, compared with \efpo.
For \efmsw, we do not have to construct $\mathcal{T}$.
Instead, for each possible social welfare $t$, we can make the following two NP queries:
\begin{enumerate}
    \item Does there exist an allocation with social welfare at least $t$;
    \item Does there exist an envy-free allocation with social welfare at least $t$.
\end{enumerate}
Notice that a \yes answer to the second query implies a \yes answer to the first, and we get a \no instance of \efmsw if and only if the answers to the two queries are different for some value $t$ (i.e., \yes to the first question and \no to the second question).
Since the maximum possible social welfare is $nmk$ for $k$-ary valuations, we need only a polynomial number of non-adaptive queries, or a logarithmic number of adaptive queries with a binary search.

On the other hand, all the fair division problems mentioned above are complete in $\theta2$.
We use \efmsw-$[2,\mono,2]$ to demonstrate the $\theta2$-completeness.
When solving the problem by NP oracles, we face the same challenges as they are in \vcmember: we do not know what is the optimal social welfare, and we have to search over possible values of optimal social welfare.
On the other hand, if the optimal social welfare is known, or if the optimal social welfare is known to be in a small range, the problem can no longer be $\theta2$-complete, as fewer NP queries can solve the problem.
Therefore, when we are doing the reduction, the hard \efmsw-$[2,\mono,2]$ instances must be those where the optimal social welfare depends on an $\classNP$-complete problem and can be any value in a \emph{wide range}.
This implies that the value for each of the two agents must be able to vary over a wide range: if agent $1$'s value is bounded by some constant $c$, we know that the optimal social welfare is in the narrow range $[v_2(M),v_2(M)+c]$ (in this case we only need $2(c+1)$ queries, two for each possible social welfare value).
As we will see in Sect.~\ref{sect:hardness}, such a reduction exists, and the reduction is from \vcmember.
All the other $\theta2$-complete fair division problems are proved by reductions from \vcmember. 

\subsection{Complexity Class $\delta2$ and Fair Division Problems Contained in}
As mentioned earlier, problems in $\delta2$ are precisely the ones that can be solved by making a polynomial number of adaptive queries to an NP oracle.
By the similar discussions at the beginning of the previous sub-section, for a problem complete in $\delta2$, the numbers of ``accepting paths'' and ``rejecting paths'' in the decision tree must both be large.
If one of them is polynomial, then the problem can be solved by making a polynomial number of non-adaptive queries, and it is in $\theta2$.
The following problem is known to be $\delta2$-complete due to~\citep{krentel1986complexity}.

\begin{problem}[\maxoddSAT] \label{problem:maxoddSAT}
Given a Boolean formula $\phi$ with variables $x_1,\ldots,x_n$, determine if $\phi$ is satisfiable and its lexicographically last satisfying assignment has $x_n=1$.
\end{problem}

For a \yes instance of \maxoddSAT, $\phi$ has to be satisfiable, and the ``maximum'' solution must be an odd number.
This problem can be solved with $2n$ adaptive queries (for $n$ variables), as we can iteratively and tentatively assign $1$ or $0$ for each $x_i$ to see if the resultant formula is still satisfiable.
Exactly half of the paths in the decision tree lead to \yes.
This is why the problem is hard in $\delta2$.
In fact, many other problems regarding the parity of optimal solutions are $\delta2$-hard.
It should be noted that the range of the optimal solution must be in $[0,T]$ for some $T$ with a polynomial \emph{length in binary representation}, not polynomial \emph{value}.
If the value of $T$ is polynomial, the problem can be solved in $O(T)$ queries, and the problem is in $\theta2$.
For example, checking the parity of the optimal solution to the Traveling Salesman Problem is $\delta2$-complete, while checking the parity of the optimal maximum clique solution is in $\theta2$.

For fair division problems, all problems in \efmsw are in $\delta2$.
Similar to the discussions in the previous sub-section, we can make two queries for each possible social welfare value $t$.
For the problems in \efmsw-$[\cdot,\mono,\infty]$ where there are no $k$-ary assumptions, the upper bound of optimal social welfare $t$ is an exponentially large number with a polynomial length in binary representation.
We have to use a polynomial number of adaptive queries with a binary search.

\subsection{Fair Division Problems in $\sigma2$}
For the remaining fair division problems in Table~\ref{tab:results}, i.e., those \efpo problems that are not discussed in the previous sub-sections, there is no clear way to solve them by an NP oracle in polynomial time.
In addition, there is no clear evidence that they are also in $\pi2$.
As it turns out, most of them are $\sigma2$-complete.
Below, we list the $\sigma2$-complete problems we used for reduction.

The first problem we use is the complement of the well-known $\pi2$-complete problem \aeCNF, which is $\sigma2$-complete.

\begin{problem}[Complement of \aeCNF]\label{problem:aecnf}
        Given a \textsc{3CNF} formula $\phi$ with variables partitioned into $V_{\forall}$ and $V_{\exists}$, determine if there exists a Boolean assignment to the variables in $V_\forall$ such that $\phi$ evaluates to ``false'' for all Boolean assignments of $V_\exists$.
\end{problem}

Another $\sigma2$-complete problem is the following one due to~\citet{berman2002complexity}.

\begin{problem}[\easubsetsum]\label{problem:easubsetsum}
    Given two multi-sets of positive integers $V_\exists,V_\forall$ and an integer $T$, decide:
    \begin{itemize}
        \item[\yes:] there exists $U_\exists\subseteq V_\exists$ such that for all $U_\forall\subseteq V_\forall$ the sum of all integers in $U_\exists\cup U_\forall$ is \emph{not} $T$;
        \item[\no:] for all $U_\exists\subseteq V_\exists$ there exists $U_\forall\subseteq V_\forall$ such that the sum of all integers in $U_\exists\cup U_\forall$ is $T$.
    \end{itemize}
\end{problem}

Intuitively, \aeCNF is handy when we are showing the $\sigma2$-completeness for problems with general monotone valuations.
We can construct two items $x_i$ and $\neg x_i$ for each variable.
We can let an agent take exactly one from each group $\{x_i,\neg x_i\}$ to represent a value assignment (e.g., we can set the valuation of the agent to simply $0$ if none of $\{x_i,\neg x_i\}$ is included in the bundle). We can check if there is at least one item/literal chosen in each clause and define the value to $0$ if not. This simulates satisfiability.
Some other gadgets with different purposes are constructed to make the reduction work.

For additive valuations with a constant number of agents, the problem \easubsetsum is more useful.
For a fixed number of agents and polynomially bounded integer values, the valuation truth table can be constructed in polynomial time. 
Therefore, a hardness reduction for this setting must avoid polynomially bounded utilities, unless the corresponding complexity classes collapse.
This makes \easubsetsum more suitable. On the other hand, the reduction used by~\citet{de2009complexity} constructs agents' valuations that are bounded by polynomials of the input, which does not work here.

When considering $k$-ary additive valuations, the construction in~\citet{de2009complexity} does not work as the valuations are polynomial.
The above two $\sigma2$-complete problems are also hard to adapt here, as we may need to construct values equal to the number of variables/clauses in the \aeCNF instance, or the integers in the \easubsetsum instance, which are not constant.
Instead, in Sect.~\ref{sec:infty0k-po}, we introduce another $\sigma2$-complete problem, \neeakrs, which is a $\sigma2$-variant of the well-known NP-complete \krs problem.

\begin{problem}[\neeakrs]\label{problem:neeakrs}
    Given a bipartite graph $G=(U\cup V,E)$ with $U$ partitioned into $U_{\exists}$ (with $|U_\exists|\geq2$) and $U_{\forall}$ and a positive integer $\kappa\ge3$, decide:
    \begin{itemize}
        \item[\yes:] there exists a non-empty subset $S_{\exists}\subseteq U_{\exists}$, such that for any $S_{\forall}\subseteq U_{\forall}$, $G$ does not contain any $\kappa$-regular subgraph $G'=(U'\cup V',E')$ with $U'=S_{\exists}\cup S_{\forall}$.
        
        \item[\no:] for any non-empty subset $S_{\exists}\subseteq U_{\exists}$, there exists some $S_{\forall}\subseteq U_{\forall}$ such that $G$ contains a non-empty $\kappa$-regular subgraph $G'=(U'\cup V',E')$ with $U'=S_{\exists}\cup S_{\forall}$.
    \end{itemize}
\end{problem}

Intuitively, in our proof based on the bipartite graph, we construct additional vertices and edges to get a new graph $\mathcal{G}$.
Each vertex represents an agent and each edge represents an item.
Each vertex agent has a positive value to its adjacent edge item, which is bounded by some constant $k$ (which is related to $\kappa$).
Under a \no instance of \neeakrs, for any envy-free allocation, we may find a corresponding $\kappa$-regular subgraph, which results in a ``Pareto-improvement cycle'' in $\mathcal{G}$ by ``pushing items along the cycle''.
However, the resulting allocation is no longer envy-free.
On the other hand, under a \yes instance, there exists an envy-free allocation with no corresponding $\kappa$-regular subgraph, which prevents the undesirable Pareto-improvement.

\section{\efpo under $k$-ary Additive Valuations}\label{sec:infty0k-po}
In this section, we show the $\sigma2$-hardness of \efpo-$[\infty,\add,3]$.
In Sect.~\ref{sec:kregular}, we present the $\sigma2$-complete variants of the \krs problem.
Then, we give our main proof in Sect.~\ref{sec:infty0k-po-sub}, where we first show a variant of \efpo with additional constraints on each agent's received utility (\boundefpo) is $\sigma2$-hard, and then reduce it to \efpo-$[\infty,\add,3]$.
This solves the open problem proposed by~\citet{bouveret2008efficiency}.

\subsection{$\sigma2$-Completeness of \krs Variants}\label{sec:kregular}
The following two problems, Problem~\ref{problem:eakrs} and the nonempty variant defined in Problem~\ref{problem:neeakrs}, are $\sigma2$-complete.
Compared with Problem~\ref{problem:eakrs}, Problem~\ref{problem:neeakrs} requires the existentially quantified subset to be non-empty for a \yes instance.

\begin{problem}[\eakrs]\label{problem:eakrs}
    Given a bipartite graph $G=(U\cup V,E)$ with $U$ partitioned into $U_{\exists}$ and $U_{\forall}$ and a positive integer $\kappa\ge3$, decide:
    \begin{itemize}
        \item[\yes:] there exists $S_{\exists}\subseteq U_{\exists}$, such that for any $S_{\forall}\subseteq U_{\forall}$, $G$ does not contain any non-empty $\kappa$-regular subgraph $G'=(U'\cup V',E')$ with $U'=S_{\exists}\cup S_{\forall}$.
        
        \item[\no:] for any $S_{\exists}\subseteq U_{\exists}$, there exists $S_{\forall}\subseteq U_{\forall}$ such that $G$ contains a non-empty $\kappa$-regular subgraph $G'=(U'\cup V',E')$ with $U'=S_{\exists}\cup S_{\forall}$.
    \end{itemize}
\end{problem}

\begin{proposition}\label{prop:kregular-sigma2}
    For $\kappa=3$, the problem \eakrs is $\sigma2$-complete, even if the bipartite graph has maximum degree $\kappa+1$ and contains no multiple edges.
\end{proposition}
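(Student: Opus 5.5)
Membership in $\sigma2$ is immediate: the existential certificate is $S_\exists\subseteq U_\exists$, and the universal certificate is a pair $(S_\forall,E')$ with $E'\subseteq E$. The verifier checks in polynomial time that $E'$ does \emph{not} witness a non-empty $\kappa$-regular subgraph whose $U$-side is exactly $S_\exists\cup S_\forall$. The real work is hardness.

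\textbf{Source problem and the quantifier match.} I would reduce from the complement of \aeCNF (Problem~\ref{problem:aecnf}), modelling the gadgets on the classical reduction showing that deciding whether a bipartite graph of maximum degree $4$ has a $3$-regular subgraph is $\classNP$-complete (Plesník-style, via \textsc{3SAT}). The quantifiers line up as follows. A \yes instance of the complement of \aeCNF asks for an assignment to $V_\forall$ such that no assignment to $V_\exists$ satisfies $\phi$. A \yes instance of \eakrs asks for $S_\exists$ such that no choice of $S_\forall$ admits a $3$-regular subgraph. So the outer variables $V_\forall$ should be encoded by $U_\exists$, the inner variables $V_\exists$ by $U_\forall$, and ``$\phi$ satisfiable'' should correspond to ``a non-empty $3$-regular subgraph exists''.

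\textbf{Variable encoding.} The constraint $U'=S_\exists\cup S_\forall$ forces every chosen $U$-vertex to have degree exactly $3$ in the subgraph, and every unchosen $U$-vertex to be absent. I would use it as follows. For each outer variable $x$, place a single vertex $u_x$ in $U_\exists$, with $u_x\in S_\exists$ meaning $x=$ true. This matters because then every subset $S_\exists$ is a legitimate assignment, so there are no ``malformed'' existential choices that could trivially produce a \yes. Each $u_x$ sits inside a variable gadget, a cycle-like ladder of degree-$\le4$ bipartite pieces, in which including $u_x$ forces the ``true'' side of the ladder and excluding it forces the ``false'' side. Inner variables get analogous gadgets with their selector vertex in $U_\forall$. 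All remaining auxiliary $U$-vertices (ladder internals, clause gadgets, the skeleton below) are also placed in $U_\forall$, so that the universal player may choose them freely. Clause gadgets are attached to literal ports exactly as in the $\classNP$-hardness construction: a clause gadget can be completed to degree $3$ iff at least one of its incident literal ports is on the ``true'' side.

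\textbf{Non-emptiness and the main obstacle.} The main obstacle is that a $3$-regular subgraph may be disconnected or partial. The universal player could then pick a small regular piece, for instance inside one variable ladder, that ignores the clauses, and the instance would always be \no. To block this, I would connect all gadgets through a rigid skeleton with the following property, checked by a degree-propagation argument: in the maximum-degree-$4$ graph, every vertex of degree $3$ is forced to take all of its edges, and every vertex of degree $4$ must drop exactly one edge. The goal is that any non-empty $3$-regular subgraph must contain the whole skeleton, every variable gadget in a consistent true/false state, and every clause gadget. With that in place, the correspondence becomes: a non-empty $3$-regular subgraph with $U$-side $S_\exists\cup S_\forall$ exists iff the outer assignment encoded by $S_\exists$ extends to a satisfying assignment of $\phi$. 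This holds even for $S_\exists=\emptyset$, since the skeleton vertices lie in $U_\forall$. Hence the two problems' \yes instances coincide.

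Finally, I would verify the side conditions: maximum degree $4=\kappa+1$ and no multiple edges. Any parallel edges that arise in the gadgets would be subdivided by inserting small bipartite $K_{3,3}$-minus-an-edge connectors, which preserve both the degree bound and the forcing behaviour.
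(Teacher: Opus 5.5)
Your high-level architecture is sound, and it is essentially what the paper obtains by composing known reductions. You send the outer variables $V_\forall$ to $U_\exists$ with one selector vertex each, so every $S_\exists$ is a legal assignment. You put all auxiliary vertices in $U_\forall$, and you use a forced common structure to handle $S_\exists=\emptyset$. This matches McLoughlin's \eaxc instance (one truth-setting set per universal variable in $C_1$, with each variable's truth-setting sets forced all-or-none) pushed through Plesn\'ik's X3C construction, where a set vertex lies in a $3$-regular subgraph iff that set is in the corresponding exact cover. Your $K_{3,3}$-minus-an-edge connector for parallel edges also works.

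The gap is that every load-bearing step is asserted rather than built. The one argument you give for the skeleton property (degree-$3$ vertices keep all their edges, degree-$4$ vertices drop exactly one) only constrains vertices already in the subgraph. It does not show that a non-empty $3$-regular subgraph must reach the skeleton. Suppose some $3$-regular sub-piece of a variable ladder contains $u_x$ but avoids the skeleton and the clause gadgets. It would witness against $S_\exists=\{u_x\}$ whether or not that assignment extends to a satisfying one. Likewise, any such piece lying inside $U_\forall\cup V$ kills $S_\exists=\emptyset$. Either can turn a \yes instance into a \no instance. Ruling these pieces out is precisely the nontrivial content of Plesn\'ik's gadgets, and you supply no replacement.

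The clause gadget has the same problem. ``Completable iff at least one port is true'' is not automatic under exact degree constraints, because when two or three literals of a clause are true, the unused true ports must still be saturated. For any vertex set $H$, a $3$-regular bipartite subgraph satisfies $3\bigl(|U'\cap H|-|V'\cap H|\bigr)=e_U-e_V$, where $e_U$ and $e_V$ count the used edges leaving $H$ from its $U$-side and $V$-side. So, for example, slack hung off a port by a single edge can never be used. A balanced slack mechanism, such as the garbage collection in the Garey--Johnson/McLoughlin reductions, has to be designed and verified. Also, the construction the paper relies on (Plesn\'ik) starts from X3C, not 3SAT, so ``exactly as in the NP-hardness construction'' does not point to a clause gadget you can copy.

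The economical repair is the paper's route. Reduce from \eaxc, which already packages the quantified formula with consistency and garbage-collection structure. Apply Plesn\'ik's construction verbatim, which gives maximum degree $4$ and no multiple edges, and put the vertices of $C_1$ into $U_\exists$. Faithfulness of set vertices then gives the equivalence directly. Non-emptiness is also handled, because every non-empty $3$-regular subgraph there contains all element vertices and hence encodes an exact cover.
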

\begin{proof}
    Fix $\kappa=3$ in this proof.
    The problem clearly belongs to $\sigma2$, as the existential certificate can be a subset $S_{\exists}\subseteq U_{\exists}$, and the universal certificate can be the $\kappa$-regular subgraph described in the problem.

    To show the $\sigma2$-hardness, we provide a reduction from the known $\sigma2$-complete problem \eaxc~\citep{mcloughlin1984complexity}.
    \begin{problem}[\eaxc]
        Given a set $X$ of $3q$ elements and two disjoint collections $C_1, C_2$ of $3$-element subsets of $X$. decide if there exists a subcollection $C'_1\subseteq C_1$ such that for any subcollection $C'_2\subseteq C_2$, the union $C'_1\cup C'_2$ is not an exact cover of $X$.
        In an exact cover, each element in $X$ appears exactly once.
    \end{problem}
    We follow the construction given by~\citet{plesnik1984note}, who shows that deciding whether a bipartite graph $G=(U\cup V,E)$ contains a $\kappa$-regular subgraph is $\classNP$-complete by reducing from \xc (given a set $X$ of $3q$ elements and a collection $C$ of the subsets, decide if there exists a subcollection of $C$ that is an exact cover) with groundset $X$ and a collection of subsets $C$.
    Note that the resulting bipartite graph has maximum degree $\kappa+1$ and contains no multiple edges.
    We omit the details here. 
    It suffices to know that in their construction, each subset in $C$ corresponds to a vertex in $U$ with degree $\kappa$, and each element in $X$ corresponds to a vertex in $V$.
    There are additional structures (including extra vertices) to guarantee that any $\kappa$-regular subgraph will contain all vertices corresponding to elements in $X$.
    For each exact cover $C'\subseteq C$ of $X$, there is a unique $\kappa$-regular subgraph that contains all vertices corresponding to $C'$ and contains no vertices corresponding to $C\setminus C'$.
    Conversely, for each $\kappa$-regular subgraph, the vertices contained in the subgraph that corresponds to $C$ form an exact cover.
    
    In our problem, given an \eaxc instance with set $X$ and two collections $C_1$ and $C_2$, we apply the above construction with $C=C_1\cup C_2$ to obtain a bipartite graph $G=(U\cup V,E)$.
    Let $U_{\exists}\subseteq U$ be the set of vertices corresponding to $C_1$, and let $U_{\forall}=U\setminus U_{\exists}$.
    Assume the \eaxc instance is a \no instance.
    Consider an arbitrary subset $U'_{\exists}\subseteq U_{\exists}$ in the \eakrs instance.
    It corresponds to a subcollection $C'_1\subseteq C_1$, and we can find $C'_2\subseteq C_2$ such that $C'_1\cup C'_2$ is an exact cover of $X$.
    The exact cover will correspond to a $\kappa$-regular subgraph in $G$ by the analysis above, where no vertex in $U_\exists\setminus U'_{\exists}$ is included.
    On the other hand, assume the \eaxc instance is a \yes instance, and let $C'_1\subseteq C_1$ be the subcollection that does not extend to an exact cover.
    Let $U'_{\exists}\subseteq U_{\exists}$ be the corresponding subset.
    There is no $\kappa$-regular subgraph that contains all vertices in $U'_{\exists}$ and no vertices in $U_{\exists}\setminus U'_{\exists}$, for otherwise it indicates the existence of an exact cover with $C'_1$.
    Therefore, we conclude the $\sigma2$-hardness of \eakrs.
\end{proof}

We now consider the nonempty variant defined in Problem~\ref{problem:neeakrs}.

\begin{proposition}\label{prop:nekregular-sigma2}
    For $\kappa=3$, the problem \neeakrs is $\sigma2$-complete, even if the bipartite graph has maximum degree $\kappa+1$ and contains no multiple edges.
\end{proposition}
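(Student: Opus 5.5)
Membership in $\sigma2$ follows exactly as in Proposition~\ref{prop:kregular-sigma2}. The existential certificate is a non-empty $S_\exists\subseteq U_\exists$; the verifier rejects if $S_\exists$ is empty. The universal certificate is a candidate $S_\forall$ together with a candidate $\kappa$-regular subgraph on $U'=S_\exists\cup S_\forall$. Both can be checked in polynomial time.

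For hardness, I will reduce from \eakrs with $\kappa=3$. By Proposition~\ref{prop:kregular-sigma2}, this problem is already $\sigma2$-hard on bipartite graphs of maximum degree $\kappa+1$ with no multiple edges. The only difference between the two problems is that in \eakrs the existential player may choose $S_\exists=\emptyset$. So I need to remove that option without changing the answer.

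\textbf{Construction.} Given an instance $G=(U\cup V,E)$ with partition $U_\exists,U_\forall$, I take the disjoint union of $G$ with a fresh gadget $H=(\{a,b\}\cup W,E_H)$. In $H$, the vertices $a,b$ are placed into $U_\exists$, and $H$ is designed so that \emph{every} choice of $S_\exists$ containing $a$ or $b$ is a \no-witness for that choice. I then argue the following.

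\begin{enumerate}
\item The new $U_\exists$ has size at least $2$.
\item A non-empty $S'_\exists$ works in the new instance iff $S'_\exists\cap\{a,b\}=\emptyset$ and $S'_\exists$ works in $G$.
\end{enumerate}

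This alone still excludes $S_\exists=\emptyset$ from $G$. To handle that case, it is cleaner to modify $G$ itself so that $\emptyset$ is never a \yes-witness. I add a disjoint $3$-regular bipartite component $K$, for example $K_{3,3}$ with its left side placed in $U_\forall$. Then for $S_\exists=\emptyset$ the universal player always has a non-empty $3$-regular subgraph, namely $K$. Moreover, $K$ can be added to any other subgraph without disturbing it, since components are disjoint.

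I must check that adding $K$ does not change the answer for non-empty $S_\exists$. This is the main obstacle. The condition in Problem~\ref{problem:neeakrs} for \yes forbids \emph{any} $\kappa$-regular subgraph with $U'=S_\exists\cup S_\forall$. The universal player could take $S_\forall$ to be the left side of $K$ plus nothing else, and then $U'$ contains $S_\exists\neq\emptyset$, whose vertices must then have degree $3$ in $G'$. So $K$ alone never witnesses anything for non-empty $S_\exists$. Every vertex of $U'$ must lie in $G'$ with degree $\kappa$, and $K$ only helps when $U'\subseteq$ left side of $K$.

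So for non-empty $S_\exists$, a $\kappa$-regular subgraph with left side $S_\exists\cup S_\forall$ exists in $G\cup K$ iff one exists in $G$ with left side $S_\exists\cup(S_\forall\cap U)$. The $K$-part can be taken or not as $S_\forall$ dictates, since $K$ is itself $3$-regular on its full left side.

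The choice of which left vertices of $K$ enter $S_\forall$ is harmless. Either all three are taken, in which case $K$ is used, or none are. A partial choice only hurts the universal player, so it never creates a new witness.

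\textbf{Correctness.} Consider an original \no instance. For every non-empty $S_\exists$ the original witness still works, with no $K$-vertices in $S_\forall$. For a \yes instance, I need a non-empty good $S_\exists$. If the original \yes-witness were $\emptyset$ only, there is a problem. I therefore also add to $U_\exists$ two fresh vertices of degree $0$, which ensures $|U_\exists|\ge2$. A degree-$0$ vertex in $U'$ can never be $\kappa$-regular, so choosing $S_\exists$ to be one isolated vertex is always a \yes-witness.

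That is too strong, so instead I will make the fresh vertices "neutral". Each fresh existential vertex $u$ gets its own private $K_{3,3}$-like gadget containing $u$ as a left vertex, with the rest of its left vertices in $U_\forall$. Then $u$ can always be completed to degree $3$.

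With this gadget, $S_\exists\cup\{u\}$ behaves exactly like $S_\exists$ in $G$. In particular $\{u\}$ behaves like $\emptyset$ in the original problem. The universal player can complete $u$ inside its gadget, and must also satisfy the other requirements in $G$. Here the nonemptiness requirement of $G'$ in \eakrs is met by the gadget.

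Equivalences then follow in both directions. All gadgets have maximum degree $3$ and no multiple edges, so the degree bound $\kappa+1$ is preserved. Verifying this neutrality lemma carefully is the step I expect to require the most care.
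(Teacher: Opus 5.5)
Your membership argument is fine, and your componentwise analysis of non-empty existential choices is correct. The hardness reduction, however, has a genuine gap: it cannot handle \eakrs instances whose \emph{only} winning existential choice is $S_\exists=\emptyset$, and none of your gadgets repairs this.

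With the private $K_{3,3}$ around a fresh existential vertex $u$, the choice $\{u\}$ is always losing. The universal player adds the two universal partners of $u$ and uses the empty subgraph on the $G$-side. The gadget alone is then a non-empty $3$-regular subgraph with the required left side. So $\{u\}$ simulates $\emptyset$ only with the non-emptiness requirement dropped; your remark that ``the nonemptiness requirement is met by the gadget'' says exactly this. But in \eakrs that requirement is the only reason $\emptyset$ can ever be a winning choice. The component $K$ has the same effect: it makes $\emptyset$ losing and thereby changes the answer.

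For a concrete failure, let $G$ be two disjoint copies of $K_{3,3}$, each with one left vertex in $U_\exists$ and two in $U_\forall$. Every right vertex has only two universal neighbours, so no non-empty $3$-regular subgraph avoids $U_\exists$. Hence the \eakrs instance is \yes via $S_\exists=\emptyset$. But every non-empty $S_\exists$ loses (the universal player takes the copies containing it), so your transformed instance is \no.

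In fact no disjoint gadget can work. In $G\sqcup H$ the universal player's task splits by component, and a component with empty existential part always admits the empty subgraph. So the new instance is \yes iff $G$ or $H$ has a \emph{non-empty} winning set. Its answer is therefore either trivially \yes or blind to whether $\emptyset$ wins in $G$. Restricting to the graphs of Proposition~\ref{prop:kregular-sigma2} does not help. There $S_\exists=\emptyset$ corresponds to the all-false assignment of $V_\forall$ in McLoughlin's construction, which may be the only falsifying one.

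The missing idea is to ensure, \emph{before} the graph is built, that every \yes instance has a non-empty witness. The paper does this at the formula level. It adds a fresh universal variable $x$ and the clause $(x\vee x\vee\neg x)$ to the instance of the complement of \aeCNF, then applies McLoughlin's and Plesn\'ik's constructions. Satisfiability is unaffected, and a falsifying universal assignment can always be taken with $x$ true. Then $C_1'$ contains the truth-setting set of $x$, so the corresponding $S_\exists$ is non-empty. In the \no direction, the \no condition of \eakrs quantifies over all $S_\exists$ and hence implies that of \neeakrs. The degree bound and the absence of multiple edges are inherited from those constructions.
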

\begin{proof}
    The problem clearly belongs to $\sigma2$.
    To show the $\sigma2$-hardness, we reduce from the complement of \aeCNF based on the construction by~\citet{mcloughlin1984complexity} and~\citet{plesnik1984note}.

    \citet{mcloughlin1984complexity} shows the $\sigma2$-completeness of \eaxc by reducing from the complement of \aeCNF with formula $\phi$ with variables partitioned into $V_{\forall}$ and $V_{\exists}$, denoted by $(\phi, V_{\forall}, V_{\exists})$.
    In particular, for each variable in the $3$CNF formula, several truth-value-setting sets are constructed.
    The first collection $C_1$ in the \eaxc instance contains one truth-value-setting set for each universally quantified variable in $V_{\forall}$, while the second collection $C_2$ contains all other sets.
    The insight is that, any exact cover will contain either all or none truth-value-setting sets for each variable.
    A satisfying assignment will correspond to an exact cover by including all truth-value-setting sets for the variables assigned true, and excluding all truth-value-setting sets for those assigned false.
    Conversely, a satisfying assignment can be constructed from an exact cover by setting the variable to true if its truth-value-setting sets are included in the exact cover, and false otherwise.

    We present our construction for the \neeakrs problem as follows.
    Given an instance $(\phi, V_{\forall}, V_{\exists})$ of the complement of \aeCNF, we add a new variable $x$ to $V_{\forall}$ that does not appear in $\phi$.
    We also add a new clause $(x\vee x\vee \neg x)$ to $\phi$.
    Let the new instance be $(\phi^\dag,V_{\exists}^\dag,V_{\forall}^{\dag})$.
    Note that $\phi$ and $\phi^\dag$ have the same satisfiability (we can either set $x$ to true or false without affecting the satisfiability).
    We first apply the construction by~\citet{mcloughlin1984complexity} on $(\phi^\dag,V_{\exists}^\dag,V_{\forall}^{\dag})$ to obtain an \eaxc instance $(X, C_1, C_2)$, then apply the construction in Proposition~\ref{prop:kregular-sigma2} on $(X, C_1, C_2)$ to obtain a \neeakrs instance $\left(G=(U\cup V, E), U_{\exists}, U_{\forall}\right)$.
    In particular, the bipartite graph $G$ has maximum degree $\kappa+1$ with no multiple edges.

    The constructed \neeakrs instance is a \yes instance if and only if the complement of the \aeCNF instance is a \yes instance.
    Assume $(\phi,V_{\forall},V_{\exists})$ is a \yes instance, that is, there exists an assignment to $V_{\forall}$ such that $\phi$ cannot be satisfied under any assignment to $V_{\exists}$.
    The same assignment for $V_{\forall}^\dag$ with $x$ set to true will ensure that $\phi^\dag$ cannot be satisfied under any assignment to $V_{\exists}^\dag$.
    By the analysis of~\citet{mcloughlin1984complexity}, there exists a non-empty subcollection $C'_1\subseteq C_1$ to prevent any exact cover in $(X,C_1,C_2)$, where $C'_1$ contains only the truth-value-setting sets for variables in $V_{\forall}$ that are assigned true (hence $C'_1$ will contain the truth-value-setting sets for $x$).
    Consequently, by the analysis of Proposition~\ref{prop:kregular-sigma2}, there exists a non-empty subset in $U_{\exists}$ (contains the vertex corresponding to the truth-value-setting set for $x$) to prevent any $\kappa$-regular subgraph in $G$, thus the \neeakrs instance is a \yes instance.

    On the other hand, assume $(\phi,V_{\forall},V_{\exists})$ is a \no instance.
    Then, any assignment to $V_{\forall}^\dag$ can be extended to a satisfying assignment to $\phi^\dag$ as $(x\vee x\vee\neg x)$ is always satisfiable.
    According to the above analysis, the \eaxc instance is a \no instance, thus the \neeakrs instance is also a \no instance.
    This completes the proof.
\end{proof}

\subsection{$\sigma2$-hardness of \efpo-$[\infty,\add,3]$}\label{sec:infty0k-po-sub}
In this part, we first show the $\sigma2$-hardness of an intermediate problem called \emph{\boundefpo}.
The problem takes as input a \emph{bounded} fair division instance $(N,M,\{v_i\}_{i\in N}, \{(u_i^+, u_i^-)\}_{i\in N})$.
Compared to the classical fair division instance, each agent $i\in N$ is associated with two non-negative integers $u_i^+$ and $u_i^-$, representing the upper and lower bounds of her target utility received from the allocation.
An allocation $(A_1,\ldots,A_n)$ is said to be \emph{feasible to agent $i$} if $u_i^-\le v_i(A_i)\le u_i^+$, and is said to be \emph{feasible} if it is feasible to all agents.
\begin{problem}
    For a bounded fair division instance $(N,M,\{v_i\}_{i\in N}, \{(u_i^+, u_i^-)\}_{i\in N})$, decide:
    \begin{itemize}
        \item[\yes:] there exists a feasible, envy-free and Pareto-optimal allocation, where feasibility indicates that $u_i^-\le v_i(A_i)\le u_i^+$ for every $i\in N$;
        \item[\no:] every Pareto-optimal allocation is either not envy-free or infeasible.
    \end{itemize}
\end{problem}
The original \efpo problem can be viewed as a special case of \boundefpo with $u_i^+=\infty$ and $u_i^-=0$ for each agent $i\in N$.

We first show the problem is $\sigma2$-hard for $4$-ary valuations.
The proof for $3$-ary valuations is based on the following construction.
\begin{lemma}\label{lem:hard:infty04-bpo}
    \boundefpo-$[\infty,\add,4]$ is $\sigma2$-hard.
\end{lemma}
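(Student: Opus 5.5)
We reduce from \neeakrs with $\kappa=3$, which is $\sigma2$-complete by Proposition~\ref{prop:nekregular-sigma2} even when the bipartite graph $G=(U\cup V,E)$ has maximum degree $\kappa+1=4$ and no multiple edges. The construction follows the intuition in Sect.~\ref{sect:complexityclasses}.

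\textbf{Agents and items.} Each vertex of $G$, together with a few auxiliary gadget vertices, becomes an agent. Each edge becomes an item. The item of edge $\{u,w\}$ is valued only by its two endpoint agents, with values in $\{1,2,3\}$, and every other agent values it at $0$.

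\textbf{Role of the bounds.} The utility bounds $(u_i^-,u_i^+)$ make an allocation encode a subgraph. They should force each $V$-agent to receive a fixed number of its incident edge-items. Each $U_\forall$-agent should be able to hold either all or none of its edges at equal utility. Each $U_\exists$-agent should have two feasible utility levels, one encoding $u\in S_\exists$ and one encoding $u\notin S_\exists$.

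\textbf{Ensuring envy-freeness.} We choose the values so that an agent only cares about its own incident items. Envy-freeness then reduces to local comparisons between adjacent agents. Auxiliary ``private'' items, valued by a single agent, can pad utilities so that envy-freeness holds in the intended allocations.

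\textbf{Pareto improvements as regular subgraphs.} A Pareto improvement within additive valuations corresponds to pushing items around cycles or exchanges. For agent $w$, pushing the item of edge $\{u,w\}$ from $w$ to $u$ means $u$ gains $v_u$ of the item and $w$ loses its $v_w$ value. We set the values so that a weak improvement for every vertex in a set forces each $U$-vertex in it to exchange exactly $\kappa=3$ edges. This makes the set of items that change hands a $3$-regular subgraph $G'$.

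\textbf{Where the gadgets enter.} The $U_\exists$ vertices in $S_\exists$ are exactly those whose current bundle allows them to participate in such an exchange. At least one agent must strictly gain, and this is where the non-emptiness in \neeakrs and a gadget agent adjacent to $U_\exists$ come in. Since degrees are at most $4$ and at most one extra gadget edge is added, values in $\{0,1,2,3\}$ suffice, which gives $4$-ary valuations.

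\textbf{Correctness, \yes direction.} Take the witness $S_\exists\neq\emptyset$ and build the corresponding feasible envy-free allocation. Any Pareto improvement would yield a $3$-regular subgraph with $U'\cap U_\exists=S_\exists$, which contradicts the witness property.

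\textbf{Correctness, \no direction.} Take any feasible envy-free allocation. Feasibility forces it to encode some $S_\exists$. If $S_\exists\neq\emptyset$, the $3$-regular subgraph guaranteed by the \no instance gives a Pareto improvement. If $S_\exists=\emptyset$, a gadget must supply a direct Pareto improvement, or the bounds must make this encoding infeasible.

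\textbf{Main obstacle.} The hardest part is designing the values and bounds so that every Pareto improvement of a feasible envy-free allocation is exactly a $3$-regular subgraph consistent with $S_\exists$. Improvements need not stay feasible or envy-free, so we must rule out ``unintended'' improvements, such as moving private items or partial exchanges where some agent gains by value asymmetry. I would first try making each edge-item worth $1$ to its $V$-endpoint and $2$ or $3$ to its $U$-endpoint. I would then count value balance around the changed edges to show that every vertex in the changed set has degree exactly $3$.
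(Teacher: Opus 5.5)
Your proposal is an outline rather than a construction, and the one step you argue explicitly, the \yes direction, fails as stated. A Pareto improvement only needs \emph{some} vertices of $S_\exists$ to take part. What it produces is a $3$-regular subgraph whose $U_\exists$-part is some nonempty $X\subseteq S_\exists$, not necessarily $S_\exists$ itself. The witness property only rules out $U'\cap U_\exists=S_\exists$, and a valid witness may well have proper subsets that extend to $3$-regular subgraphs. Nothing in your design forces all of $S_\exists$ to participate, and with additive valuations a partial exchange still gives a strict gain. So a two-state encoding ($u\in S_\exists$ or not) cannot work.

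The missing idea is a third state. Take an inclusion-maximal $H_\exists\subsetneq S_\exists$ that \emph{does} extend to a $3$-regular subgraph $H$, and build $H$ into the allocation (the paper's Type-I agents). Leave only $I_\exists=S_\exists\setminus H_\exists\neq\emptyset$ active (Type-II). Any improvement then yields a $3$-regular subgraph with $\exists$-part $H_\exists\cup X$ for some nonempty $X\subseteq I_\exists$, which contradicts maximality. In the \no direction, the subgraph for $H_\exists\cup I_\exists$ replaces $H$ and gives the improvement. Making this rigorous needs a characterization of all feasible non-wasteful allocations (Proposition~\ref{prop:tuple-alloc}). In it, absorbing active vertices into the embedded subgraph changes no utility except that of each corresponding $c_i^2$, which rises by $1$. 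The compensating items $g_i^a,g_i^\forall,g_i^b$ and the rerouting through $z_1$, $N_{c^1}$, $z_2$ exist to achieve exactly this.

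Your role specification also clashes with $3$-regularity at degree-$4$ vertices. A $U$-agent in the subgraph gives away exactly $3$ of its edge items and is compensated by one item of value $3$; it does not hold ``all or none''. A $V$-agent holds either exactly $3$ edge items or its compensating item $g_i^b$, not ``a fixed number'' of edges.

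Second, you leave the case $S_\exists=\emptyset$ open. This is exactly where envy-freeness must act as a constraint, rather than something padded away with private items. In the paper, $I_\exists\neq\emptyset$ holds if and only if $z_2$ holds an item of $M_{c^2}$, while her own utility is pinned at $3t_1$ either way. The agents of $N_f$ value every item of $M_{c^1}\cup M_{c^3}$ at $1$. Envy-freeness among themselves plus Pareto-optimality over the $(t_1-1)(6t_1+1)$ items of $M_f$ pin their utility at $t_1-1$. So they envy $z_2$ precisely when all $t_1$ of her items come from $M_{c^1}\cup M_{c^3}$, i.e.\ when $I_\exists=\emptyset$.

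The requirement $I_\exists\neq\emptyset$ is a disjunction over the $U_\exists$-agents, which per-agent utility intervals do not express directly. Your alternative, a direct Pareto improvement out of the empty encoding, is not specified. Without the maximal-subset embedding and this envy-based non-emptiness gadget, neither direction of your reduction is established.
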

\begin{proof}
    We present a reduction from \neeakrs, which is shown to be $\sigma2$-complete in Proposition~\ref{prop:nekregular-sigma2}.
    We focus on the case with $\kappa=3$.
    Recall that for a given \neeakrs instance $$\left(G=(U\cup V,E), U_{\exists}, U_{\forall}\right)$$ with $\kappa=3$, it is a \yes instance if and only if there exists a non-empty subset $S_\exists\subseteq U_\exists$, such that for any $S_\forall\subseteq U_\forall$, $G$ does not contain any $3$-regular subgraph $G'=(S_\exists\cup S_\forall\cup V', E')$ for any $V'\subseteq V$ and $E'\subseteq E$.
    
    We first construct a bounded fair division instance $\mathcal{I}=(N,M,\{v_i\}_{i\in N}, \{(u_i^+, u_i^-)\}_{i\in N})$ (note that at this stage, the construction is still incomplete and does not finish the proof.
    We will establish some useful insights on $\mathcal{I}$).
    Assume that $|U|=|V|=t$ (by adding isolated vertices to the smaller side of the bipartition; any added left-side vertices should belong to $U_\forall$), and let $t_1=|U_{\exists}|$ and $t_2=|U_{\forall}|$.
    Notice that $t_1>1$ by the definition of the problem.
    
    We first describe the construction of the set of agents, as well as the bounds on their target utilities $u_i^+$ and $u_i^-$.
    \begin{itemize}
        \item A set of $t=t_1+t_2$ agents $N_a$ that corresponds to $U$. This set $N_a$ is further partitioned into a set of $t_1$ agents $N_\exists=\{a_1^\exists,\ldots,a_{t_1}^{\exists}\}$, each corresponds to a vertex in $U_\exists$, and a set of $t_2$ agents $N_{\forall}=\{a_1^\forall,\ldots,a_{t_2}^{\forall}\}$, each corresponds to a vertex in $U_\forall$.
        For agent $i \in N_a$, assume its corresponding vertex in $G$ has degree $d_i$, then we set $u_i^+ = u_i^- = d_i$.
        In the following, when we refer to an agent in $N_a$ without distinguishing $N_\exists$ and $N_\forall$, the superscripts in $a_i^\exists$ and $a_i^\forall$ may be omitted and we simply write $a_i$.
        
        \item A set of $t$ agents $N_b=\{b_1,\ldots,b_t\}$, each corresponds to a vertex in $V$.
        Let $u_i^+ = u_i^- = 3$ for each agent $i\in N_b$.

        \item A set of $t_1$ agents $N_{c^1}=\{c_1^1,\ldots,c_{t_1}^1\}$.
        Let $u_i^+ = u_i^- = 3$ for each agent $i\in N_{c^1}$.

        \item A set of $t_1$ agents $N_{c^2}=\{c_1^2,\ldots,c_{t_1}^2\}$.
        Let $u_i^+ =\infty$ and $u_i^- = 1$ for each agent $i\in N_{c^2}$.

        \item A set of $t_1$ agents $N_{c^3}=\{c_1^3,\ldots,c_{t_1}^3\}$.
        Let $u_i^+ =\infty$ and $u_i^- = 0$ for each agent $i\in N_{c^3}$.

        \item Two additional agents $z_1$ and $z_2$.
        Let $u_{z_1}^+ = u_{z_1}^- = 3t$, and $u_{z_2}^+ = u_{z_2}^- = 3t_1$.
    \end{itemize}
    
    We now present the construction of the set of items, along with the agents who positively value them.
    \begin{itemize}
        \item Each edge $e_i=(u_j,v_{j'})\in E$ corresponds to an item $g^e_i$.
        Both agent $a_j$ corresponding to $u_j$ and agent $b_{j'}$ corresponding to $v_{j'}$ have value $1$ to $g^e_i$.
        The set of all such items constructed from edges is denoted by $M_e$.

        \item A set of $t_1$ items $M_{a}=\{g_1^a,\ldots,g_{t_1}^a\}$.
        Each agent $a_i^\exists\in N_{\exists}$ has value $3$ to item $g_i^a$, and each agent $c_i^2\in N_{c^2}$ has value $2$ to item $g_i^a$.
        
        \item A set of $t$ items $M_b=\{g_1^b,\ldots,g_t^b\}$.
        Each agent $b_i\in N_b$ has value $3$ to item $g_i^b$, and agent $z_1$ has value $3$ to each item in $M_b$.
        
        \item A set of $t_1$ items $M_{\exists}=\{g_1^{\exists},\ldots,g_{t_1}^{\exists}\}$.
        Each agent $c_i^1\in N_{c^1}$ has value $3$ to item $g_i^{\exists}$, and agent $z_1$ has value $3$ to each item in $M_{\exists}$.

        \item A set of $t_2$ items $M_{\forall}=\{g_1^{\forall},\ldots,g_{t_2}^{\forall}\}$.
        Each agent $a_i^\forall\in N_{\forall}$ has value $3$ to item $g_i^{\forall}$, and agent $z_1$ has value $3$ to each item in $M_{\forall}$.

        \item A set of $t_1$ items $M_{c^1}=\{g_1^1,\ldots,g_{t_1}^1\}$.
        Each agent $c_i^1\in N_{c^1}$ has value $3$ to item $g_i^1$, and agent $z_2$ has value $3$ to each item in $M_{c^1}$.

        \item A set of $t_1$ items $M_{c^2}=\{g_1^2,\ldots,g_{t_1}^2\}$.
        Each agent $c_i^2\in N_{c^2}$ has value $3$ to item $g_i^2$, and agent $z_2$ has value $3$ to each item in $M_{c^2}$.

        \item A set of $t_1$ items $M_{c^3}=\{g_1^3,\ldots,g_{t_1}^3\}$.
        Each agent $c_i^3\in N_{c^3}$ has value $3$ to item $g_i^3$, and agent $z_2$ has value $3$ to each item in $M_{c^3}$.
    \end{itemize}
    An agent's value to an item is $0$ if it is not explicitly defined above.

    An illustration of the main construction is given in Fig.~\ref{fig:efpo}.
    Noticing that each item is positively valued by exactly two agents, the edges in the figure correspond to items (ignore the directions of the edges at this moment).
    Agents in $N_{c^3}$ and items in $M_{c^3}$ are not shown in the figure for conciseness.
    We say that an allocation is \emph{non-wasteful} if each item/edge is only allocated to one of its endpoints.
    Clearly, a Pareto-optimal allocation must be non-wasteful, and we will focus only on non-wasteful allocations from now on.

    As we have mentioned at the beginning, the construction of the instance is not fully completed yet.
    An additional set of agents $N_f$ and an additional set of items $M_f$ will be further constructed at the end, after we have explored some useful properties of the current incomplete instance.
    
    \begin{figure}[h]
    \centering

    \resizebox{0.85\textwidth}{!}
    {
    \begin{tikzpicture}[
        node distance=1.5cm and 3cm,
        g_agent/.style={rectangle, draw=black!90, rounded corners, minimum height=0.8cm, font=\small, align=center, font=\bfseries},
        agent/.style={rectangle, draw=black!90, fill=white, rounded corners, minimum size=0.8cm, font=\small, align=center, font=\bfseries},
        item_transferred/.style={->, >={Stealth[length=3mm]},  thick, color=red!80!black,rounded corners=4pt},
        item_transfer/.style={->, >={Stealth[length=3mm]},  thick, color=black,rounded corners=4pt},
        edge/.style={-, thick, color=gray!50!black,rounded corners=4pt},
        label_text/.style={font=\textbf\itshape, align=center, color=black},
        section_box/.style={draw, dashed, rounded corners, inner sep=8pt},
    ]

    \node[agent] (z2) at (-0.5, 3.75) {$z_2$};

    \node[g_agent, minimum width=5.5cm] (nc2) at (0, 2) {Agents $N_{c^2}$};

    \node[g_agent, align=right, minimum width=5.5cm,label=left:{Set $U_\exists$}] (nexists) at (0, -1) {$H_\exists\qquad\quad I_\exists\qquad\qquad\qquad\qquad$};

    \node[] at (0, -2) {\bfseries Agents $N_\exists$};

    \node[g_agent, minimum width=2.5cm,label=right:{Set $U_\forall$}] (nforall) at (5, -1) {Agents $N_\forall$};

    \node[g_agent, minimum width=9cm, label=right:{Set $V$}] (nb) at (1.75, -3.5) {Agents $N_b$};

    \node[g_agent, minimum width=3cm, rotate=0] (nc1) at (-5.2, 2) {Agents $N_{c^1}$};

    \node[agent] (z1) at (-8, 0) {$z_1$};

    \draw[item_transferred] (z1.north) -- node[near start, right, label_text] {$M_\forall$} ++(0,4.6) -- ++(12.23,0) -- ($(nforall.north west)!0.2!(nforall.north east)$);

    \node[color=red!70!black] at (-3.7, 4.4) {\textbf{Pareto-improvement cycle}};

    \draw[item_transferred] ($(nforall.south west)!0.2!(nforall.south east)$) -- ($(nb.north west)!0.777!(nb.north east)$);

    \draw[item_transferred] ($(nc1.north west)!0.2!(nc1.north east)$) |- ($(z2.north west)!0.2!(z2.south west)$) node[near start, left, label_text] {$M_{c^1}$};
    \draw[edge] ($(nc1.north west)!0.4!(nc1.north east)$) |- ($(z2.north west)!0.4!(z2.south west)$);
    \draw[edge] ($(nc1.north west)!0.6!(nc1.north east)$) |- ($(z2.north west)!0.6!(z2.south west)$);

    \draw[item_transferred] ($(z1.north east)!0.2!(z1.south east)$) -| ($(nc1.south west)!0.2!(nc1.south east)$) node[near start, above, label_text] {$M_\exists$};
    \draw[edge] ($(z1.north east)!0.4!(z1.south east)$) -| ($(nc1.south west)!0.4!(nc1.south east)$);
    \draw[edge] ($(z1.north east)!0.6!(z1.south east)$) -| ($(nc1.south west)!0.6!(nc1.south east)$);

    \draw[item_transferred]  ($(nb.south west)!0.777!(nb.south east)$) -- ++(0,-0.8) -- ++(-12.4,0) -- ++(0,4.3);
    \draw[item_transferred]  ($(nb.south west)!0.324!(nb.south east)$) -- ++(0,-0.4) -- ++(-8.1,0) -- ++(0,3.9);
    \draw[edge] (nb.west) -| ($(z1.south west)!0.78!(z1.south east)$) node[near start, above, label_text] {$M_b$};

    \draw[item_transfer] (z2) -- ($(nc2.north west)!0.1!(nc2.north east)$) node[near start, left, label_text] {$M_{c^2}$};
    \draw[item_transfer] ($(nc2.south west)!0.1!(nc2.south east)$) -- ($(nexists.north west)!0.1!(nexists.north east)$) node[near end, left, label_text] {$M_a$};
    \draw[item_transfer] ($(nexists.south west)!0.1!(nexists.south east)$) -- ($(nb.north west)!0.062!(nb.north east)$) node[midway, left, label_text] {$M_e$};

    \draw[item_transfer] ($(nc2.north west)!0.35!(nc2.north east)$) -- (z2);
    \draw[item_transfer] ($(nexists.north west)!0.35!(nexists.north east)$) -- ($(nc2.south west)!0.35!(nc2.south east)$);

    \draw[item_transfer] (z2) -- ($(nc2.north west)!0.88!(nc2.north east)$);
    \draw[item_transfer] ($(nexists.north west)!0.88!(nexists.north east)$) -- ($(nc2.south west)!0.88!(nc2.south east)$);

    \draw[item_transferred] (z2) -- ($(nc2.north west)!0.53!(nc2.north east)$);
    \draw[item_transferred] ($(nc2.south west)!0.53!(nc2.south east)$) -- ($(nexists.north west)!0.53!(nexists.north east)$);
    \draw[item_transferred] ($(nexists.south west)!0.53!(nexists.south east)$) -- ($(nb.north west)!0.324!(nb.north east)$);

    \node[align=center,font=\itshape,fill=white,inner sep=-0pt] at (-2.2, 0.8) {Type-I\\Agents};
    \node[align=center,font=\itshape,fill=white,inner sep=0.1pt] at (-0.9, 0.8) {Type-II\\Agents};
    \node[align=center,font=\itshape,fill=white,inner sep=-0.pt] at (2.05, 0.8) {Type-III\\Agents};

    \begin{scope}[on background layer]
        \node[section_box, fit=($(nc2.north west)!0.04!(nc2.north east)$)($(nexists.south west)!0.16!(nexists.south east)$), label={[shift={(2,-0.5)}]}] {};

        \node[section_box, fill=red!5, fit=($(nforall.north west)!0.1!(nforall.north east)$)($(nb.south west)!0.8!(nb.south east)$), label={[shift={(2,-0.5)}]mid:\small \textcolor{red!70!black}{\text{$3$-regular subgraph}}}] {};

        \node[section_box, draw=none, fill=red!5, fit=($(nb.south west)!0.32!(nb.south east)$)($(nc2.north west)!0.62!(nc2.north east)$), label={[shift={(2,-0.5)}]}] {};

        \node[section_box, fit=($(nexists.south west)!0.67!(nexists.south east)$)($(nc2.north west)!0.28!(nc2.north east)$), label={[shift={(2,-0.5)}]}] {};

        \node[section_box,fit=($(nexists.south west)!0.96!(nexists.south east)$)($(nc2.north west)!0.8!(nc2.north east)$), label={[shift={(2,-0.5)}]}] {};
    \end{scope}

    \end{tikzpicture}
    }
    \caption{Visualization of main construction in Lemma~\ref{lem:hard:infty04-bpo}. Vertices represent agents and edges represent items. 
    At this stage, each item is positively valued by two agents, and agents positively value the adjacent items. 
    The graph shows a partial feasible allocation induced by some feasible tuple, where a directed black edge from a source agent to a sink agent represents that the edge item is allocated to the sink agent in the allocation. 
    In a \no instance, a Pareto-improvement cycle of the feasible allocation is formed by the directed red edges, which corresponds to a $3$-regular subgraph of $G$. 
    The cycle includes some of the Type-II Agents. 
    The utility of each agent $c_i^2$ in $N_{c^2}$ on the cycle strictly increases by $1$ by receiving item $g_i^2$ in $M_{c^2}$ with value $3$, and giving away item $g_i^a$ with value $2$ in $M_a$ to agent $a_i^\exists$ in $N_\exists$. 
    The bundle for each agent in $N_c^2$ belonging to Type-I and Type-III will not change throughout the Pareto-improvement. In a \yes instance, there is no such Pareto-improvement cycle for some feasible allocations.}
    \label{fig:efpo}
    \end{figure}

    Intuitively, to guarantee $z_2$'s feasibility, she needs to receive at least $t_1$ items from $M_{c^1}\cup M_{c^2}\cup M_{c^3}$.
    We will show that, in a \no instance of the \neeakrs problem, in any feasible and non-wasteful allocation, if $z_2$ receives some items from $M_{c^2}$, there exists a Pareto-improvement where $z_2$ gives away all items taken from $M_{c^2}$ to agents in $N_{c^2}$ and receives items only from $M_{c^1}\cup M_{c^3}$.
    We will then construct a set of agents $N_f$ that has value $1$ to each item in $M_{c^1}\cup M_{c^3}$ and each receives a value of at most $t_1-1$ in any envy-free allocation (the detailed construction will be shown later).
    Therefore, if $z_2$ receives no less than $t_1$ items only from $M_{c^1}\cup M_{c^3}$, the allocation is no longer envy-free to agents in $N_f$.
    On the other hand, in a \yes instance, $z_2$ can receive items from $M_{c^2}$ in a Pareto-optimal allocation, thus guaranteeing envy-freeness from the $N_f$ to $z_2$.

    We first define a collection of allocations which are called ``induced from a feasible tuple''.
    We will show every allocation that cannot be ``induced from a feasible tuple'' will violate either feasibility or non-wastefulness.

    For simplicity, we say that $H=(H_a\cup H_b\subseteq N_a\cup N_b, O_e\subseteq M_e)$ is a subgraph of $G$, if the corresponding vertices and edges from which the agents in $H_a\cup H_b$ and items in $O_e$ are constructed, form a subgraph of $G$.
    We further assume that $H_a=H_\exists\cup H_\forall$, where $H_\exists\subseteq N_\exists$ and $H_\forall\subseteq N_\forall$.

    Consider the following tuple $$\left(H=(H_\exists\cup H_\forall\cup H_b\subseteq N_\exists\cup N_\forall\cup N_b, O_e\subseteq M_e), I_\exists\subseteq N_{\exists}, I_{c^1}\subseteq N_{c^1}, I_{c^3}\subseteq N_{c^3}\right).$$
    We say the tuple is \emph{feasible} if it satisfies: $H$ is a $3$-regular bipartite subgraph of $G$ that can possibly be empty, $H_\exists\cap I_\exists=\emptyset$, $|H_\exists|=|I_{c^1}|$, and $|I_{c^3}|=t_1-|H_\exists\cup I_\exists|$.

    A feasible tuple partitions agents in $N_\exists\cup N_{c^2}$ into three types, which is shown in Fig.~\ref{fig:efpo}.
    \begin{itemize}
        \item Type-I Agents contain agents $a_i^\exists\in H_\exists$ and $c_i^2$ for the same $i$.
        \item Type-II Agents contain agents in $a_i^\exists\in I_\exists$ and $c_i^2$ for the same $i$.
        \item Type-III Agents contain the remaining agents in $N_\exists$ and $N_{c^2}$.
    \end{itemize}

    For each feasible tuple, we denote the corresponding allocation $A$ by $$A\equiv\left(H=(H_\exists\cup H_\forall\cup H_b\subseteq N_\exists\cup N_\forall\cup N_b, O_e\subseteq M_e), I_\exists\subseteq N_{\exists}, I_{c^1}\subseteq N_{c^1}, I_{c^3}\subseteq N_{c^3}\right).$$
    The bundle $A_i$ for each agent $i$ in this allocation is constructed as follows. 
    
    \begin{itemize}[align=left, leftmargin=*, labelsep=1em]
        \item[\it{Type-I Agents.}] For each agent $a_i^\exists\in H_\exists$, let $A_{a_i^\exists}=\{g_j^e\in M_e\setminus O_e\mid v_{a_i^\exists}(g_j^e)=1\}\cup\{g_i^a\}$.
        Verbally, each Type-I agent in $N_a$ will get all her positively valued edge items except those three involved in the $3$-regular subgraph, and additionally get $g_i^a$ to meet feasibility.
        Let $A_{c_i^2}=\{g_i^2\}$.

        \item[\it{Type-II Agents.}] For each agent $a_i^\exists\in I_\exists$, let $A_{a_i^\exists}=\{g_j^e\in M_e\mid v_{a_i^\exists}(g_j^e)=1\}$.
        Verbally, she will get all her positively valued edge items.
        Let $A_{c_i^2}=\{g_i^a\}$.
        
        \item[\it{Type-III Agents.}] For each agent $a_i^\exists\in N_\exists\setminus(H_\exists\cup I_\exists)$, let $A_{a_i^\exists}=\{g_j^e\in M_e\mid v_{a_i^\exists}(g_j^e)=1\}$, and $A_{c_i^2}=\{g_i^a, g_i^2\}$.
        
        \item[$N_\forall$.] For each agent $a_i^\forall\in H_\forall$, let $A_{a_i^\forall}=\{g_j^e\in M_e\setminus O_e\mid v_{a_i^\forall}(g_j^e)=1\}\cup\{g_i^\forall\}$, that is, she will get all her positively valued edge items except those three involved in the $3$-regular subgraph, and additionally get $g_i^\forall$ to meet feasibility.
        For each $a_i\in N_\forall\setminus H_\forall$, let $A_{a_i^\forall}=\{g_j^e\in M_e\mid v_{a_i^\forall}(g_j^e)=1\}$, that is, she will get all her positively valued edge items.
        
        \item[$N_b$.] For each agent $b_i\in H_b$, let $A_{b_i}=\{g_j^e\in O_e\mid v_{b_i}(g_j^e)=1\}$, that is, she will get the three edge items involved in the $3$-regular subgraph left by $H_\exists\cup H_\forall$.
        For each agent $b_i\in N_b\setminus H_b$, let $A_{b_i}=\{g_i^b\}$.
        
        \item[$N_{c^1}.$] For each agent $c_i^1\in I_{c^1}$, let $A_{c_i^1}=\{g_i^\exists\}$.
        For each agent $c_i^1\in N_{c^1}\setminus I_{c^1}$, let $A_{c_i^1}=\{g_i^1\}$.
        
        \item[$N_{c^3}.$] For each agent $c_i^3\in I_{c^3}$, let $A_{c_i^3}=\emptyset$.
        For each agent $c_i^3\in N_{c^3}\setminus I_{c^3}$, let $A_{c_i^3}=\{g_i^3\}$.
        
        \item[$z_1$.] Let $A_{z_1}=\{g_i^b\mid b_i\in H_b\}\cup\{g_i^\exists\mid c_i^1\in N_{c^1}\setminus I_{c^1}\}\cup\{g_i^\forall\mid a_i^\forall\in N_\forall\setminus H_\forall\}$.
        
        \item[$z_2$.] Let $A_{z_2}=\{g_i^2\mid a_i^\exists\in I_\exists\}\cup\{g_i^1\mid c_i^1\in I_{c^1}\}\cup\{g_i^3\mid c_i^3\in I_{c^3}\}$.
    \end{itemize}
    \medskip

    The following proposition shows the correspondence between a desired allocation and a feasible tuple.
    \begin{proposition}\label{prop:tuple-alloc}
        An allocation is feasible and non-wasteful if and only if it can be induced by a feasible tuple.
    \end{proposition}
    \begin{proof}
        It is easy to verify that a feasible tuple is sufficient for a feasible and non-wasteful allocation, as each item is allocated to an agent who positively values it, and each agent's utility meets the constraints.
        We show the necessity in the following by constructing a feasible tuple from a feasible and non-wasteful allocation.
    
        In a feasible and non-wasteful allocation, as $u_{a_i}^+=u_{a_i}^-=d_{a_i}$ for each agent $a_i\in N_a$, agent $a_i$ needs to receive either all items from $M_e$ with value $1$, or $d_{a_i}-3$ items from $M_e$ with value $1$ together with item $g_i^a$ (resp., $g_i^{\forall}$) if $a_i\in N_\exists$ (resp., $a_i\in N_\forall$).
        In the latter case, the three items will be allocated to agents in $N_b$.
        Denote the set of agents $a_i^\exists\in N_\exists$ that receive $g_i^a$ as $H_\exists$, the set of agents $a_i^\forall\in N_\forall$ that receive $g_i^{\forall}$ as $H_\forall$, and the set of items that are not allocated to $N_a$ as $O_e$.
        As $u_{b_i}^+=u_{b_i}^-=3$ for each agent $b_i\in N_b$, $b_i$ needs to receive either $g_i^b$, or three items from $M_e$ with value $1$.
        Denote the set of agents $b_i\in N_b$ that do not receive $g_i^b$ as $H_b$.
        It follows straightforwardly that $|H_b|=|H_\exists\cup H_\forall|$ and $H=(H_\exists\cup H_\forall\cup H_b, O_e)$ forms a $3$-regular subgraph of $G$.
    
        As each agent in $N_{c^2}$ is required to receive at least one item, if $g_i^a$ is allocated to $a_i^\exists\in H_\exists$, agent $c_i^2$ will take $g_i^2$.
        For the remaining agents in $N_{c^2}$, each may either receive both items $g_i^a$ and $g_i^2$, or receive $g_i^a$ only and leave $g_i^2$ to agent $z_2$.
        Denote the set of agents $a_i^\exists \in N_\exists$ where $c_i^2$ belongs to the latter case (receive $g_i^a$ only) by $I_\exists$.
    
        Agent $z_1$ has already been allocated $|H_b|$ items from $M_b$ that are not allocated to agents in $N_b$, and $|N_\forall\setminus H_\forall|$ items from $M_\forall$ that are not allocated to agents in $N_\forall$.
        To guarantee that $z_1$ receives $t$ items in total, $z_1$ will take additional $t_1-|H_\exists|$ items from $M_\exists$.
        Consequently, $|H_\exists|$ items from $M_\exists$ will be given to agents in $N_{c^1}$, and $|H_\exists|$ items from $M_{c^1}$ will be given to $z_2$ correspondingly.
        Denote the set of agents in $N_{c^1}$ that receive items from $M_\exists$ by $I_{c^1}$.
        
        Agent $z_2$ has already been allocated $|I_\exists\cup H_\exists|$ items from $M_{c^1}$ and $M_{c^2}$, each with value $3$.
        To guarantee that $z_2$ receives a value of $3t_1$ in total, $z_2$ will take additional $t_1-|I_\exists\cup H_\exists|$ items from $M_{c^3}$, and the remaining items from $M_{c^3}$ will be given to agents in $N_{c^3}$.
        Denote the set of agents in $N_{c^3}$ that receive no item by $I_{c^3}$.
    
        By now, we obtain a feasible tuple $\left(H=(H_\exists\cup H_\forall\cup H_b, O_e), I_\exists, I_{c^1}, I_{c^3}\right)$.
    \end{proof}

    In the next proposition, we show the correspondence between the instance $G$ and agent $z_2$'s bundle in a Pareto-optimal allocation.
    As we have mentioned before, after completing our construction, if agent $z_2$ receives no items from $M_{c^2}$, i.e., $A_{z_2}\cap M_{c^2}=\emptyset$, the instance will fail envy-freeness.
    
    \begin{proposition}\label{prop:feasiblePO-3regular}
        There exists a feasible and Pareto-optimal allocation $(A_1,\ldots,A_n)$ where $A_{z_2}\cap M_{c^2}\neq\emptyset$ 
        if and only if the \neeakrs instance is a \yes instance.
    \end{proposition}
    \begin{proof}
        We first provide a high-level idea of the proof.
        Fix a feasible allocation when given a feasible tuple.
        To see whether the allocation is Pareto-optimal, we consider whether there exists a ``Pareto-improvement cycle'' where, by transferring items along the cycle, some agents' utilities will strictly increase, while others' utilities do not decrease.
        Note that if an agent in $N_{c^2}$ that belongs to Type-III, her allocation cannot be changed without decreasing her utility (since we are considering Pareto-optimality, we restrict ourselves to non-wasteful allocations).
        Therefore, Type-III Agents will not be involved in the Pareto-improvement cycle.
        The Pareto-improvement must involve some Type-II Agents, for otherwise, the social welfare will not increase.
        We will also show that the Pareto-improvement can only involve agents of Type-II, without involving any agents in $N_c^2$ of Type-I.
        Therefore, the existence of such a Pareto-improvement cycle can be roughly viewed as the existence of a $3$-regular subgraph that contains some Type-II agents and some agents belonging to $N_\forall$.
    
        We now provide a formal proof.
        Consider a \yes instance of \neeakrs.
        Without loss of generality, let $U_0=\{u_1^\exists,\ldots,u_{\ell_2}^\exists\}\subseteq U_\exists$ be the non-empty subset with no $3$-regular subgraph in $G$ when combined with any subset of $U_\forall$.
        However, there may exist a subset of $U_0$ that forms a $3$-regular subgraph with some subset of $U_\forall$.
        Let $\{u_1^\exists,\ldots,u_{\ell_1}^\exists\}$ be such a maximal subset in $U_0$, and the $3$-regular subgraph be $$G'=\left(U'\cup V'=\{u_1^\exists,\ldots,u_{\ell_1}^\exists\}\cup\{u_1^\forall,\ldots,u_{\ell_3}^\forall\}\cup\{v_1,\ldots,v_{\ell}\}, E'=\{e_1,\ldots,e_{3\ell}\}\right),$$ where $\ell=\ell_1+\ell_3$ (if there is no subset of $U_0$ that extends to a $3$-regular subgraph, we set $U'=V'=E'=\emptyset$).
        Consider the following allocation $A=(A_1,\ldots,A_n)$ induced by the feasible tuple
        $$A\equiv \left(H, I_\exists=\{a_{\ell_1+1}^\exists,\ldots,a_{\ell_2}^\exists\}, I_{c^1}=\{c_1^1,\ldots,c_{\ell_1}^1\}, I_{c^3}=\{c_{\ell_2+1}^3,\ldots,c_{t_1}^3\}\right)$$ 
        with $H$ corresponds to $G'$
        $$H=\left(\{a_1^\exists,\ldots,a_{\ell_1}^\exists\}\cup\{a_1^\forall,\ldots,a_{\ell_3}^\forall\}\cup\{b_1,\ldots,b_{\ell}\},\{g_1^e,\ldots,g_{3\ell}^e\}\right).$$
        According to Proposition~\ref{prop:tuple-alloc}, allocation $A$ is feasible and non-wasteful.
        Further, as $I_\exists\neq\emptyset$, we have $A_{z_2}\cap M_{c^2}\neq\emptyset$.
    
        We claim that no allocation Pareto-dominates $A$.
        Suppose, for contradiction, that it is Pareto-dominated by a non-wasteful allocation $B=(B_1,\ldots,B_n)$.
        To maintain the utility, each agent $c_i^3\in N_{c^3}\setminus I_{c^3}$ must still receive $g_i^3$ in $B$.
        Each agent $c_i^2$ where $i\in \{1,\ldots,\ell_1\}$ must receive at least item $g_i^2$, and each agent $c_i^2$ where $i\in \{\ell_2+1,\ldots,t_1\}$ must receive both items $g_i^a$ and $g_i^2$.
        Hence, the only potential way to increase social welfare is to transfer some $g_i^a$ from agent $c_i^2$ to $a_i^\exists$ where $a_i^\exists\in I_\exists$, and reallocate other items accordingly.
        Let $x=|\bigcup_{i\in N_{c^2}}A_{c_i^2}\setminus \bigcup_{i\in N_{c^2}}B_{c_i^2}|>0$ (since the social welfare of $B$ should strictly increase compared to $A$) and $y=|\bigcup_{i\in N_{c^2}}B_{c_i^2}\setminus \bigcup_{i\in N_{c^2}}A_{c_i^2}|\ge 0$.
        As each agent $c_i^2$ who loses item $g_i^a$ in $B$ must receive $g_i^2$, it follows that $x\le y$.
        Further, to ensure that the social welfare of the remaining agents $N\setminus N_{c^2}$ does not decrease, it follows that $x\ge y$.
        Therefore, we have $x=y$.
        Moreover, notice that each agent $c_i^2$ in Type-I can only receive item $g_i^a$ from $a_i^\exists$ without giving out item $g_i^2$ to agent $z_2$.
        This implies that there are exactly $x$ agents in $N_{c^2}$ that belong to Type-II, where each agent $c_i^2$ gives item $g_i^a$ to agent $a_i^\exists$ and takes item $g_i^2$ from agent $z_2$.
        Denote this subset by $\{c_{\ell_1+1}^2,\ldots,c_{\ell_1+x}^2\}$, which is a subset of Type-II Agents.
        Hence, the set of agents in $N_{\exists}$ that receive $g_i^a$ in $B$ becomes $\{a_1^\exists,\ldots,a_{\ell_1}^\exists\}\cup\{a_{\ell_1+1}^\exists,\ldots,a_{\ell_1+x}^\exists\}$.
        Further, the social welfare of agents in $N\setminus N_{c^2}$ remains unchanged.
        Consequently, each of the above agents' utility in $B$ remains unchanged, and $B$ is still a feasible allocation.
        According to Proposition~\ref{prop:tuple-alloc}, allocation $B$ is also induced by a feasible tuple containing the $3$-regular subgraph $H'=(\{a_1^\exists,\ldots,a_{\ell_1+x}^\exists\}\cup H'_\forall\cup H'_b, O'_e)$ for some $H'_\forall\subseteq N_\forall, H'_b\subseteq N_b$, and $O'_e\subseteq M_e$.
        However, this contradicts the assumption that $\{u_1^\exists,\ldots,u_{\ell_1}^\exists\}$ is a maximal subset of $U_0$ that contains a $3$-regular subgraph with some subset of $U_\forall$, as $\{u_1^\exists,\ldots,u_{\ell_1}^\exists\}\subset \{u_1^\exists,\ldots,u_{\ell_1+x}^\exists\}$.
    
        We now turn to the other direction.
        Assume the \neeakrs instance is a \no instance.
        Consider any feasible allocation $A=(A_1,\ldots,A_n)$ where $A_{z_2}\cap M_{c^2}\neq\emptyset$, we will show that it is Pareto-dominated by another allocation that does not satisfy the requirement, hence completing the proof.
        By Proposition~\ref{prop:tuple-alloc}, we may assume $A$ is induced by a feasible tuple $\left(H=\left(H_\exists\cup H_\forall\cup H_b, O_e\right), I_\exists, I_{c^1}, I_{c^3}\right)$ where $I_\exists\neq\emptyset$.
        Let $H'_\exists=H_\exists\cup I_\exists$.
        As the instance is a \no instance, there always exists $H'_\forall\subseteq N_\forall$, $H'_b\subseteq N_b$, and $O'_e\subseteq M_e$ such that $H'=(H'_\exists\cup H'_\forall\cup H'_b, O'_e)$ forms a $3$-regular subgraph of $G$.
        Let $I'_{c^1}\subseteq N_{c^1}$ be an arbitrary set with $|I'_{c^1}|=|H'_\exists|$.
        Consider another feasible allocation $B=(B_1,\ldots,B_n)$ induced by the feasible tuple
        $$B\equiv\left(H'=\left(H'_\exists\cup H'_\forall\cup H'_b, O'_e\right), \emptyset, I'_{c^1}, I_{c^3}\right).$$
        Each agent $c_i^2$ where $a_i^\exists\in I_\exists$ is allocated item $g_i^a$ in $A$ and $g_i^2$ in $B$, hence her utility strictly increases by $1$.
        We can verify that the utilities of other agents remain unchanged.
        Therefore, allocation $B$ Pareto-dominates allocation $A$, and $B_{z_2}\cap M_{c^2}=\emptyset$.
    \end{proof}

    Based on the above propositions, we finish our proof by introducing additional agents and items to the above instance $\mathcal{I}$.
    Denote the new instance by $\widetilde{\mathcal{I}}=(\widetilde{N},\widetilde{M},\{\widetilde{v_i}\}_{i\in \widetilde{N}}, \{(\widetilde{u_i}^+, \widetilde{u_i}^-)\}_{i\in \widetilde{N}})$.

    Let $N_f=\{f_1,\ldots,f_{6t_1+1}\}$ be a set of additional agents.
    Let $M_f=\{h^f_1,\ldots, h^f_{(t_1-1)(6t_1+1)}\}$ be a set of additional items such that each agent in $N_f$ has value $1$ to each item in $M_f$.
    Each agent in $N_f$ also has value $1$ to each item in $M_{c^1}\cup M_{c^3}$.
    Let $\widetilde{u_i}^+=\infty$ and $\widetilde{u_i}^-=0$.
    $N_f$ and $M_f$ are constructed to ensure that, if $A_{z_2}\cap M_{c^2}=\emptyset$, then agents in $N_f$ will envy agent $z_2$ and the allocation is no longer envy-free.

    We further introduce additional items to guarantee envy-freeness of agents in $\mathcal{I}$.
    Let $M_{h^2}=\{h_1^2,\ldots,h_{t_1}^2\}$, and $M_{h^3}=\{h_1^3,\ldots,h_{3t_1}^3\}$ respectively be private items to $N_{c^2}$ and $N_{c^3}$.
    In particular, each agent $c_i^2\in N_{c^2}$ has value $1$ to item $h_i^2$; each agent $c_i^3\in N_{c^3}$ has value $1$ to items $h_{3i-2}^3,h_{3i-1}^3$, and $h_{3i}^3$.
    In addition, for each agent $i\in N_a$, if the degree $d_i$ of the corresponding vertex in $G$ is less than $3$, we construct $3-d_i$ items $h^a$ that each has value $1$ only to agent $i$, and set $\widetilde{u_i}^+=\widetilde{u_i}^-=3$.

    For an allocation $\widetilde{A}$ of instance $\widetilde{\mathcal{I}}$ to be envy-free and Pareto-optimal, each agent $c_i^2\in N_{c^2}$ receives item $h_i^2$, each agent $c_i^3\in N_{c^3}$ receives items $\{h_{3i-2}^3,h_{3i-1}^3$, and $h_{3i}^3\}$, and each agent in $N_a$ receives $h^a$ that are only valued by her.
    Each agent in $N_f$ must receive the same value from $M_f\cup M_{c^1}\cup M_{c^3}$, which is at most $t_1-1$.
    In addition, the $(t_1-1)(6t_1+1)$ items in $M_f$ must be allocated among agents in $N_f$.
    Hence, each agent in $N_f$ receives exactly $t_1-1$ items from $M_f$ in any envy-free allocation.
    The remaining agents and items are exactly those in the instance $\mathcal{I}$, and we denote the allocation among them by $A$, which is a partial allocation of $\widetilde{A}$.
    To guarantee that $\widetilde{A}$ is Pareto-optimal, allocation $A$ must also be Pareto-optimal among the agents and items in $\mathcal{I}$.
    To guarantee that $\widetilde{A}$ is feasible and envy-free, allocation $A$ must also be feasible (for agents where $\widetilde{u_i}^+\neq u_i^+$ or $\widetilde{u_i}^-\neq u_i^-$, feasibility of $A$ is defined under $u_i^+$ and $u_i^-$), and satisfy the condition $A_{z_2}\cap M_{c^2}\neq\emptyset$, for otherwise each agent in $N_f$ has value $t_1$ to $A_{z_2}$ hence envies agent $z_2$.
    Conversely, each feasible and Pareto-optimal allocation with $A_{z_2}\cap M_{c^2}\neq\emptyset$ can be extended to a feasible, Pareto-optimal, and envy-free allocation $\widetilde{A}$ by allocating the additional items as described above.
    
    In conclusion, the bounded fair division instance $\widetilde{\mathcal{I}}$ admits a feasible, envy-free, and Pareto-optimal allocation if and only if there exists a feasible and Pareto-optimal allocation $A$ with $A_{z_2}\cap M_{c^2}\neq\emptyset$ in instance $\mathcal{I}$, consequently, if and only if the \neeakrs instance is a \yes instance according to Proposition~\ref{prop:feasiblePO-3regular}.
\end{proof}

In the proof of Lemma~\ref{lem:hard:infty04-bpo}, there are two types of items that have value $3$ to some agents.
The first type of items, denoted by $(3,3)$-item, contains all the items in $M_\exists\cup M_\forall\cup M_b\cup M_{c^1}\cup M_{c^2}\cup M_{c^3}$, each has value $3$ to two agents.
The second type of items, denoted by $(2,3)$-item, contains the items in $M_a$, each has value $3$ to one agent and $2$ to another.

To extend our result to $k=3$-ary valuations, we will introduce two types of gadgets to simulate those items, where each agent's value to each item is at most $2$ in the gadgets.
By applying the gadgets to above \boundefpo-$[\infty,\add,4]$ instance, we get a new \boundefpo-$[\infty,\add,3]$ instance.
Our goal is to show that, the new instance is a \yes instance if and only if the \boundefpo-$[\infty,\add,4]$ instance is a \yes instance, consequently the \neeakrs instance is a \yes instance.

\begin{lemma}\label{lem:hard:infty03-bpo}
    \boundefpo-$[\infty,\add,3]$ is $\sigma2$-hard.
\end{lemma}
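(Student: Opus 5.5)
We reduce from \boundefpo-$[\infty,\add,4]$, or more precisely from the particular instances $\widetilde{\mathcal{I}}$ built in the proof of Lemma~\ref{lem:hard:infty04-bpo}, since a general reduction from arbitrary $4$-ary instances is not needed. In those instances every value-$3$ item is either a $(3,3)$-item or a $(2,3)$-item, and each is positively valued by exactly two agents. We replace each such item by a small gadget of fresh items and fresh agents in which every value lies in $\{0,1,2\}$. All remaining values (the $0/1$ edge items, $M_f$, and the private $h$-items) are kept unchanged.

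\textbf{The $(3,3)$-gadget.} Suppose item $g$ is valued $3$ by agents $p$ and $q$. Replace $g$ by two items: $g'$, valued $2$ by $p$, and $g''$, valued $1$ by $p$. Add an intermediate agent $w$ with bounds $u_w^-=u_w^+=2$, who values $g'$ at $2$, $g''$ at $1$, and also values one fresh item $g'''$ at $1$. Agent $q$ values $g''$ at $2$ and $g'''$ at $1$. The intention is that feasibility of $w$ forces exactly two outcomes. In the first, $p$ takes $\{g',g''\}$ (value $3$) and $w$ takes $g'''$ together with something worth $1$. In the second, $w$ takes $g'$ and $q$ takes $\{g'',g'''\}$ (value $3$). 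This is a "sent to $p$" versus "sent to $q$" switch, and it transfers exactly value $3$ to one endpoint. I would tune the exact item counts so that $w$ is forced into exactly these two states, adding a private value-$1$ item for $w$ if needed.

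\textbf{The $(2,3)$-gadget.} Item $g^a_i$ is valued $3$ by $a_i^\exists$ and $2$ by $c_i^2$. Split it into $g'$, valued $2$ by both, and $g''$, valued $1$ by $a_i^\exists$ only. Then $g''$ is wasted unless $a_i^\exists$ takes the pair. Giving both to $a_i^\exists$ yields $3$, and giving $g'$ to $c_i^2$ yields $2$, provided $g''$ can then be absorbed by someone else without creating a new Pareto improvement. To handle this, I attach $g''$ to a fresh agent with exact bounds. The result mirrors the original trade-off, in which the total welfare changes by $1$ when the item moves.

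\textbf{Verification.} The key steps, in order, are the following. First, prove an analogue of Proposition~\ref{prop:tuple-alloc}: every feasible non-wasteful allocation of the new instance restricts, gadget by gadget, to one of the intended states, so it induces a feasible tuple of the old instance. Second, show that utilities of the original agents coincide with the old ones, so feasibility and the condition $A_{z_2}\cap M_{c^2}\neq\emptyset$ transfer. Third, show that Pareto-optimality transfers in both directions. Every Pareto improvement in the new instance projects to one in the old instance, because the gadget agents have fixed utilities. Conversely, the improvement cycle in the proof of Proposition~\ref{prop:feasiblePO-3regular} lifts by flipping gadget states. Fourth, check envy-freeness for the new gadget agents, padding them with private items if needed, and check that the $N_f$ agents' values on $M_{c^1}\cup M_{c^3}$ remain at the threshold $t_1$. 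To achieve this, each $N_f$ agent values one designated piece per replaced item at $1$, and $z_2$'s bundle contains one such piece per received item.

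\textbf{Main obstacle.} The hardest step is the third one: ruling out spurious Pareto improvements that exist only inside or across gadgets. Examples are reallocating the split pieces so that an endpoint gains $1$ while a gadget agent keeps its exact utility, or mixing the two states of a gadget. I would first try to make every gadget agent have $u^-=u^+$ with exactly two feasible non-wasteful states. I would then argue, by a counting argument analogous to the $x=y$ argument in Proposition~\ref{prop:feasiblePO-3regular}, that any improvement must change whole gadget states.
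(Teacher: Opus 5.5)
\textbf{There is a genuine gap.} Your outline follows the paper's strategy: replace each value-$3$ item by a gadget whose gadget agents have exact bounds, transfer feasibility and Pareto-optimality, then re-attach $N_f$ and private items. However, the gadgets you specify do not simulate a value-$3$ item, and the gadget design is where the real work lies.

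In your $(3,3)$-gadget, $u_w^-=u_w^+=2$ forces $w$ to hold exactly $\{g'\}$ or $\{g'',g'''\}$ in any non-wasteful allocation.
\begin{itemize}
\item If $w$ holds $\{g'\}$, then $g'''$ must go to $q$, and $g''$ goes to $p$ or $q$. The endpoints receive $(v_p,v_q)\in\{(0,3),(1,1)\}$ from the gadget.
\item If $w$ holds $\{g'',g'''\}$, then $p$ gets only $g'$, giving $(2,0)$.
\end{itemize}
So $p$ can never receive $3$: your state ``$p$ takes $\{g',g''\}$'' leaves $w$ with value $1$. There is also a mixed state in which both endpoints get positive value.

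A private value-$1$ item for $w$ does not fix this. It goes to $w$ in every Pareto-optimal allocation, so $w$ would have utility $2$ in your first state and $3$ in your second, and no exact bound admits both.

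Your $(2,3)$-gadget has the same defect. A fresh exact-bound agent attached to $g''$ holds $g''$ in one state and not in the other, so she needs a compensating item that moves in the opposite direction. The paper builds exactly this. Its $(2,3)$-gadget has one pivot agent who takes either the value-$2$ item $g_1$ or the pair $\{g_2,g'_2\}$ worth $1+1$. Its $(3,3)$-gadget has two pivot agents $a,b$ sharing $g_0$. In both cases the feasible non-wasteful states are exactly the two simulating allocations.

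The second gap is in your third verification step. You argue that every Pareto improvement projects back to $\mathcal{I}$ ``because the gadget agents have fixed utilities.'' But in \boundefpo domination is over \emph{all} allocations. A dominating allocation only needs each gadget agent to get at least $u^-$; she may exceed $u^+$, leaving her gadget in a non-simulating state. In the paper's gadget, for example, $c_i^2$ can take $g_1$ while $b$ takes both $g'_1$ and $g'_2$, for utility $3$.

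The paper handles this with three ingredients your proposal lacks.
\begin{enumerate}
\item[(i)] A local property: under the gadget agents' lower bounds alone, if one endpoint receives a value-$2$ item of a $(3,3)$-gadget, the other endpoint receives nothing from that gadget.
\item[(ii)] A welfare accounting over $N'$, meaning all agents except $N_{c^2}$ and the gadget agents of the $M_a$- and $M_{c^2}$-gadgets. Each pair of gadgets built from $g_i^a$ and $g_i^2$ contributes non-positively to $x-y$. Hence $x=y$ forces every pair to be either unchanged or switched, with at least one switched pair.
\item[(iii)] A normalization that moves $g'_1$ from $b$ back to $c_i^2$ in each switched pair. The result still dominates $A^\dagger$ and now simulates a Pareto improvement of $A$ in $\mathcal{I}$.
\end{enumerate}
Your proposed ``counting argument analogous to $x=y$'' cannot be carried out until you have a gadget with property (i).
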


\begin{proof}
    We begin by describing the two types of gadgets and their properties.
    Note that in the following part, we may reuse the previously introduced notations, which are independent of the analysis above.
    If we do not explicitly define an agent's value to an item, the value is $0$.

    \begin{figure}[h]
    \centering
    \begin{tikzpicture}[
        node distance=1.5cm and 3cm,
        agent/.style={rectangle, draw=black, rounded corners, minimum size=0.6cm, font=\small, align=center, font=\bfseries},
        item_transfer/.style={->, >={Stealth[length=2mm]},  thick, color=red!80!black,rounded corners=4pt},
        edge/.style={-, color=black,rounded corners=4pt},
        label_text/.style={font=\itshape, align=center, color=black},
        label_text2/.style={font=\itshape, align=center, color=black, font=\small},
        section_box/.style={draw, dashed, rounded corners, inner sep=8pt},
        section_box2/.style={draw, dashed, rounded corners, inner sep=8pt}
    ]

    \node[font=\bfseries] at (-1,8) {$(3,3)$-Gadget};
    \node[agent] (111) at (1, 8) {$1$};
    \node[agent] (112) at (2.9, 8) {$2$};
    \draw[edge] (111) -- (112) node[midway, below] {$g$} node[pos=0.1,above,font=\tiny,color=red!80!black] {3} node[pos=0.9,above,font=\tiny,color=red!80!black] {3};

    \draw[->, >={Stealth[length=2mm]}, very thick] (3.7,8) to (4.2,8);

    \node[agent] (12a) at (7,9) {$a$};
    \node[agent] (12b) at (7,7) {$b$};
    \node[agent] (121) at (5,8) {$1$};
    \node[agent] (122) at (9,8) {$2$};
    
    \draw[edge] (12a) -- (12b) node [pos=0.5, right] {$g_0$} node[pos=0.15,left,font=\tiny,color=red!80!black] {1} node[pos=0.85,left,font=\tiny,color=red!80!black] {1};
    \draw[edge] (121) -- (12a) node [pos=0.6, below] {$g_1$} node[pos=0.1,above,font=\tiny,color=red!80!black] {2} node[pos=0.9,above,font=\tiny,color=red!80!black] {$2$};
    \draw[edge] (121) -- (12b) node [pos=0.6, above] {$g'_1$} node[pos=0.1,below,font=\tiny,color=red!80!black] {1} node[pos=0.9,below,font=\tiny,color=red!80!black] {1};
    \draw[edge] (122) -- (12a) node [pos=0.6, below] {$g_2$} node[pos=0.1,above,font=\tiny,color=red!80!black] {1} node[pos=0.9,above,font=\tiny,color=red!80!black] {1};
    \draw[edge] (122) -- (12b) node [pos=0.6, above] {$g'_2$} node[pos=0.1,below,font=\tiny,color=red!80!black] {2} node[pos=0.9,below,font=\tiny,color=red!80!black] {2};

    \node[font=\bfseries] at (-1,5.5) {$(2,3)$-Gadget};
    \node[agent] (211) at (1, 5.5) {$1$};
    \node[agent] (212) at (2.9, 5.5) {$2$};
    \draw[edge] (211) -- (212) node[midway, below] {$g$} node[pos=0.1,above,font=\tiny,color=red!80!black] {2} node[pos=0.9,above,font=\tiny,color=red!80!black] {3};

    \draw[->, >={Stealth[length=2mm]}, very thick] (3.7,5.5) to (4.2,5.5);

    \node[agent] (22a) at (7,5.5) {$a$};
    \node[agent] (221) at (5.2,5.5) {$1$};
    \node[agent] (222) at (9,5.5) {$2$};
    
    \draw[edge] (221) -- (22a) node [pos=0.5, below] {$g_1$} node[pos=0.1,above,font=\tiny,color=red!80!black] {2} node[pos=0.9,above,font=\tiny,color=red!80!black] {2};
    \draw[edge] (22a) to[bend right=45] node[pos=0.5, above] {$g'_2$} node[pos=0.1,below,font=\tiny,color=red!80!black] {1} node[pos=0.9,below,font=\tiny,color=red!80!black] {1} (222) ;
    \draw[edge] (22a) to[bend left=45] node[pos=0.5, below] {$g_2$} node[pos=0.1,above,font=\tiny,color=red!80!black] {1} node[pos=0.9,above,font=\tiny,color=red!80!black] {2} (222) ;
    
    \end{tikzpicture}
    \caption{Visualization of the gadgets. Vertices represent agents and edges represent items. Items are only positively valued by endpoint agents. }
    \end{figure}

    \paragraph{$(3,3)$-Item Gadget.}
    Assume an item $g$ has value $3$ to both agents $1$ and $2$. We construct two
    new agents $a$ and $b$ with $u_a^+=u_b^+=u_a^-=u_b^-=2$, and five items
    $G=\{g_1,g'_1,g_2,g'_2,g_0\}$. We call agents $a$ and $b$ $(3,3)$-gadget
    agents and the five items $(3,3)$-gadget items. Let
    $v_1(g_1)=2$, $v_1(g'_1)=1$, $v_2(g_2)=1$, and $v_2(g'_2)=2$.
    Let $v_a(g_1)=2$, $v_a(g_2)=1$, $v_b(g'_1)=1$, and $v_b(g'_2)=2$.
    Let $v_a(g_0)=v_b(g_0)=1$. We simulate the case where agent $1$ receives
    item $g$ by letting agent $1$ receive $g_1$ and $g'_1$, agent $a$ receives
    $g_2$, and $g_0$, and $b$ receives $g'_2$. Similarly, we simulate the case
    where agent $2$ receives item $g$ by letting agent $2$ receive $g_2$ and
    $g'_2$, $a$ receives $g_1$, and $b$ receives $g'_1$ and $g_0$.
    We refer to these two allocations as \emph{simulating allocations}.

    It is easy to see that the above two allocations are the only feasible and non-wasteful allocations for the $(3,3)$-gadget agents in this gadget.
    For any other non-wasteful allocation where the gadget agents meet their utility lower bounds, there is always some gadget agent whose utility is strictly more than $2$, which violates the requirement on her utility upper bound.

    In addition, in a non-wasteful allocation, it is guaranteed that if one of agents $1$ and $2$ receives an item with value $2$ from the gadget, then the other agent receives no item from this gadget when requiring each gadget agent to meet the utility lower bound.
    Suppose agent $1$ receives item $g_1$, then agent $a$ needs to receive items $g_2$ and $g_0$ to guarantee a utility of $2$.
    Consequently, agent $b$ needs to receive item $g'_2$, for otherwise, agent $b$'s utility is at most $1$.
    Item $g'_1$ can be allocated to either agent $1$ or $b$.
    Thus, nothing is left for agent $2$ from the gadget.

    \paragraph{$(2,3)$-Item Gadget.} Assume an item $g$ has value $2$ to agent $1$ and value $3$ to agent $2$.
    We construct one new agent $a$ and three items $G=\{g_1,g_2,g'_2\}$.
    We call agent $a$ \emph{$(2,3)$-gadget agent} and the three items \emph{$(2,3)$-gadget items}.
    Let $v_1(g_1)=2$, $v_2(g_2)=2$ and $v_2(g'_2)=1$.
    Let $v_{a}(g_1)=2$, and $v_{a}(g_2)=v_{a}(g'_2)=1$, and $u_{a}^+=u_{a}^-=2$.
    In this case, agent $a$ needs to either receive $g_1$, or both $g_2$ and $g'_2$ to guarantee feasibility.
    We simulate the case where agent $1$ receives item $g$ by letting agent $1$ receive $g_1$ while agent $a$ receives $g_2$ and $g'_2$.
    Similarly, we simulate the case where agent $2$ receives item $g$ by letting agent $a$ receive $g_1$ while agent $2$ receives $g_2$ and $g'_2$.
    \medskip

    Based on the above gadgets, we first replace each item that has value $3$ in the \boundefpo-$[\infty,\add,4]$ instance $\mathcal I$ constructed before Proposition~\ref{prop:tuple-alloc} by the gadgets to obtain a \boundefpo-$[\infty,\add,3]$ instance, denoted by $\mathcal{I}^\dag$.
    We will show the feasibility and Pareto-optimality under instance $\mathcal{I}^\dag$.
    In the following, we use ``original agents'' to refer to the agents in $\mathcal{I}$.
    Write $N^\dagger$ and $M^\dagger$ for the agent and item sets of $\mathcal I^\dagger$, respectively, and retain the original agents' utility bounds.

    Assume that $\mathcal{I}$ admits a feasible and Pareto-optimal allocation $A$, induced by a feasible tuple 
    $$\left(H=\left(H_\exists\cup H_\forall\cup H_b, O_e\right), I_\exists, I_{c^1}, I_{c^3}\right)$$ where $I_\exists\neq\emptyset$, which indicates $A_{z_2}\cap M_{c^2}\neq\emptyset$.
    We construct an allocation $A^\dag$ for $\mathcal{I}^\dag$ by letting the gadget items be allocated according to the above description of our simulating allocation.
    As $A$ is feasible, it is straightforward that $A^\dag$ is also feasible and each gadget agent is envy-free.
    To verify it is also Pareto-optimal, assume for contradiction that it is Pareto-dominated by another allocation $B^\dag$.
    We assume without loss of generality that $B^\dag$ is non-wasteful (otherwise, any item allocated to an agent who values it at $0$ can be reassigned to other agents).
    Notice the only possible way to increase social welfare is to reallocate some $(2,3)$-gadget items from $(2,3)$-gadget agents that have value $1$, to the original agents that have value $2$ to them.
    Those original agents correspond to $I_\exists$ and belong to Type-II Agents.
    Let $N'$ denote the set of agents in $\mathcal I^\dagger$ except for agents
    \[
    \begin{aligned}
    N_{c^2}\ &\cup\
    \{\text{$(2,3)$-gadget agents constructed from $M_a$}\}\\
    &\cup\
    \{\text{$(3,3)$-gadget agents constructed from $M_{c^2}$}\}.
    \end{aligned}
    \]
    We focus on the social welfare of agents in $N'$. 
    Let $x$ be the utility gain to $N'$ from the items that are reallocated from agents in $N^\dagger\setminus N'$ to $N'$, that is,
    \[
    x=\sum_{i\in N'}\ \sum_{g\in B_i^\dagger\setminus
                        (\bigcup_{j\in N'}A_j^\dagger)}v_i(g).
    \]
    Let $y$ be the utility loss to $N'$ by the items that are reallocated from $N'$ to agents in $N^\dagger\setminus N'$, that is,
    \[
    y=\sum_{i\in N'}\ \sum_{g\in A_i^\dagger\setminus
                        (\bigcup_{j\in N'}B_j^\dagger)}v_i(g).
    \]
    Every item whose positively valuing agents all belong to $N'$ has the same value to those agents. Thus, the change in the total utility of $N'$ is exactly $x-y$.
    To guarantee the social welfare of $N'$ does not decrease, we have $x\ge y$. 
    On the other hand, we need to guarantee that when reallocating items based on allocation $A^\dagger$, the utility of each agent in $N_{c^2}$ does not decrease. 
    We separately consider the agents in $N_{c^2}$ that belong to Type-I and Type-II whose items get reallocated.

\begin{itemize}
        \item When a Type-II Agent $c_i^2$ gives away a $(2,3)$-gadget item with value $2$ constructed from item $g_i^a$, the increase in $x$ is at most $3$ when the $(2,3)$-gadget agent gives away both items to agent $a_i^\exists$. 
        She needs to receive a $(3,3)$-gadget item with value $2$ from the $(3,3)$-gadget constructed from item $g_i^2$.
        However, as we have shown, when $c_i^2$ receives a $(3,3)$-gadget item with value $2$, then $z_2$ receives no item from this gadget when we further require that the utilities of the two $(3,3)$-gadget agents do not decrease.
        Therefore, the loss of utility for agent $z_2$ is at least $3$.
        If instead $c_i^2$ keeps the $(2,3)$-gadget item worth $2$ to her, the $(2,3)$-gadget agent must retain the other two items, so $a_i^\exists$ receives nothing from this gadget; any loss of $z_2$ can only decrease $x-y$.

        \item A Type-I Agent $c_i^2$ can at most receive a value of $2$ from the $(2,3)$-gadget item constructed from item $g_i^a$.
        This requires the $(2,3)$-gadget agent to receive both items from $a_i^\exists$ to guarantee her utility remains $2$, which results in a utility loss of $3$ for agent $a_i^\exists$.
        Agent $c_i^2$, on receiving the $(2,3)$-gadget item with value $2$, may give away at most one item in the $(3,3)$-gadget item constructed from item $g_i^2$.
        When requiring the utilities of the two $(3,3)$-gadget agents do not decrease, agent $z_2$ receives at most an additional value of $1$ from this gadget.
        If $c_i^2$ does not receive the $(2,3)$-gadget item, she must retain both $(3,3)$-gadget items worth $3$ to her, and $z_2$ receives nothing from that gadget.
        
        \item As each agent of Type-III reaches her maximum possible utility, her bundle cannot be changed, and the three associated gadget agents must also retain their simulating bundles.
    \end{itemize}

    Combining the three cases, we have $x\le y$. Combining with $x\ge y$, we have $x=y$. 
    Further, we identify that the reallocation involves the agents in $N_{c^2}$ of Type-II, and involves no agents in $N_{c^2}$ of Type-I or Type-III.

    Since $x=y$, every agent in $N'$ receives exactly the same utility in $B^\dagger$ as in $A^\dagger$. 
    In particular, every gadget agent in $N'$ receives exactly $2$, so the corresponding gadgets have simulating allocations. 
    This does not yet imply that the gadget agents outside $N'$ receive exactly $2$. 
    We therefore normalize the remaining gadgets before projecting back to $\mathcal I$.

    Each pair of gadgets constructed from $g_i^a$ and $g_i^2$ contributes a non-positive amount to $x-y$, by the three cases above. 
    Hence, equality must hold separately for every such pair. 
    For Type-I and Type-III Agents, equality forces the pair to remain unchanged. 
    For a Type-II Agent $c_i^2$, there are precisely two possibilities for the items received by $a_i^\exists$ and $z_2$: either the pair remains unchanged, or $a_i^\exists$ receives both of her $(2,3)$-gadget items, of total value $3$, and $z_2$ receives neither of her $(3,3)$-gadget items. 
    Call the latter pair \emph{switched}. 
    There must be at least one switched pair; otherwise all agents outside $N'$ would also retain their original utilities, contradicting strict Pareto domination.

    Consider a switched pair. 
    Use the local notation of the $(3,3)$-gadget, with $c_i^2$ playing the role of agent $1$ and $z_2$ the role of agent $2$. 
    Agent $c_i^2$ must receive $g_1$. 
    The utility lower bounds of the gadget agents then force $a$ to receive $g_2$ and $g_0$, and $b$ to receive $g'_2$. 
    The remaining item $g'_1$ is allocated either to $c_i^2$ or to $b$. 
    In the former case, the gadget already has its simulating allocation. 
    In the latter case, $c_i^2$ receives utility $2$ and $b$ receives utility $3$; move $g'_1$ from $b$ to $c_i^2$. 
    Their utilities become $3$ and $2$, respectively, while every other utility is unchanged.
    The $(2,3)$-gadget in this pair is already in its simulating allocation, with its gadget agent receiving the item worth $2$ to her.

    Apply this normalization to every switched pair and denote the resulting allocation by $\overline B^\dagger$. 
    It need not Pareto-dominate $B^\dagger$, but it still Pareto-dominates $A^\dagger$: all agents in $N'$ retain their utilities from $A^\dagger$, all gadget agents receive exactly $2$, and each switched Type-II Agent $c_i^2$ receives $3$ instead of her original utility $2$. 
    All gadgets now have simulating allocations.
    Consequently, $\overline B^\dagger$ simulates an allocation $B$ in $\mathcal I$ that Pareto-dominates $A$, contradicting the assumption that $A$ is Pareto-optimal.

    The other direction is more straightforward.
    Assume that $\mathcal{I}^\dag$ admits a feasible and Pareto-optimal allocation $A^\dag$ such that agent $z_2$ receives at least one item from the $(3,3)$-gadgets constructed from items in $M_{c^2}$.
    As a feasible and non-wasteful allocation within each gadget simulates an allocation of the original item, we can construct an allocation $A$ for $\mathcal{I}$ by letting each original agent receive the original items according to the simulation.
    Each original agent receives the same utility in $A$ as in $A^\dag$, thus $A$ is feasible.
    If $A$ is Pareto-dominated by another non-wasteful allocation $B$, then we can simulate the same Pareto-improvement in $\mathcal{I}^\dag$ by a simulating allocation $B^\dag$.
    This contradicts the assumption that $A^\dag$ is Pareto-optimal.

    We have obtained that $\mathcal{I}^\dag$ admits a feasible and Pareto-optimal allocation $A^\dag$ such that agent $z_2$ receives at least one item from the $(3,3)$-gadgets constructed from items in $M_{c^2}$, if and only if $\mathcal{I}$ admits a feasible and Pareto-optimal allocation $A$ such that agent $z_2$ receives at least one item from $M_{c^2}$.
    Similar to the last step in the proof of Lemma~\ref{lem:hard:infty04-bpo}, we introduce a set $N_f$ of $6t_1+1$ additional agents that has value $1$ to an additional set of $(3t_1-1)(6t_1+1)$ items.
    Each agent in $N_f$ also positively values items constructed from $M_{c^1}$ and $M_{c^3}$ the same as $z_2$.
    For these agents, set the lower and upper utility bounds to $0$ and $\infty$, respectively. 
    All other values of these agents, and all existing agents' values for the additional items, are $0$.

    The common value of the agents in $N_f$ for the existing items is $6t_1<|N_f|$. 
    Therefore, envy-freeness forces them to receive the same integer utility, at most $3t_1-1$.
    Pareto-optimality forces all $(3t_1-1)(6t_1+1)$ additional items to be allocated among them, so each receives exactly $3t_1-1$ of these items and no existing item that she values positively.

    Therefore, a feasible, envy-free, and Pareto-optimal allocation of the augmented instance exists only if $z_2$ receives at least one item from the $(3,3)$-gadgets constructed from items in $M_{c^2}$.

    We can further construct private items that are positively valued only by their owner with value $1$, and will always be allocated to their owner, to guarantee that each agent in $N^\dagger$ is envy-free as long as the $A^\dagger$ is feasible to her.
    Explicitly, for each $i\in N^\dagger$, let $q_i=v_i(M^\dagger)$ and add $q_i$ private items for $i$. 
    Replace her lower bound by $u_i^-+q_i$ and her upper bound by $u_i^++q_i$, with $\infty+q_i=\infty$. 
    These private items are worth $0$ to every other agent, including the agents in $N_f$.
    In a Pareto-optimal allocation they all go to their owner. 
    Thus, the shifted bounds enforce exactly the original bounds on the restriction to $\mathcal I^\dagger$, and the private utility $q_i$ prevents $i$ from envying any other bundle. 
    The number of added items is polynomial.

    For completeness, let $A^\dagger$ be a feasible and Pareto-optimal allocation with the required item from an $M_{c^2}$-gadget in $z_2$'s bundle.
    Extend it by assigning all private items to their owners and giving each agent in $N_f$ exactly $3t_1-1$ of the additional items.
    The allocation is feasible and envy-free. Indeed, the gadgets simulate original items, so an agent in $N_f$ values $z_2$'s bundle at most $3t_1-3$, and any other agent's bundle at most $2\le 3t_1-1$.
    To verify Pareto-optimality, consider a non-wasteful Pareto-improvement.
    The private items remain with their owners. 
    If any item of $M^\dagger$ is allocated to an agent in $N_f$, returning all such items to positively valuing agents in $N^\dagger$ would produce a Pareto-improvement of $A^\dagger$. 
    If no such item is allocated to $N_f$, no original utility can strictly increase, and the fixed total value of the additional items prevents a strict improvement for $N_f$. 
    Both cases contradict a Pareto-improvement.
    Conversely, restricting any feasible, envy-free, and Pareto-optimal allocation of the augmented instance to $\mathcal I^\dagger$ yields a feasible and Pareto-optimal allocation with the required item in $z_2$'s bundle: the agents in $N_f$ take no existing items, and any improvement of the restriction could be extended by leaving all additional items unchanged.

    According to Proposition~\ref{prop:feasiblePO-3regular}, we conclude that, the \boundefpo-$[\infty,\add,3]$ instance admits a feasible, envy-free, and Pareto-optimal allocation if and only if the \neeakrs instance is a \yes instance.
\end{proof}
    
Finally, we show the $\sigma2$-hardness of \efpo-$[\infty,\add,3]$ by further introducing two types of gadgets to simulate the feasibility requirement of $u_i^+$ and $u_i^-$.
In particular, the instance after applying the gadgets admits an envy-free allocation only when the instance \boundefpo-$[\infty,\add,3]$ admits a feasible and Pareto-optimal allocation.
\begin{lemma}\label{lem:hard:infty03-po}
    \efpo-$[\infty,\add,3]$ is $\sigma2$-hard.
\end{lemma}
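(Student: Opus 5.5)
We reduce from \boundefpo-$[\infty,\add,3]$, which is $\sigma2$-hard by Lemma~\ref{lem:hard:infty03-bpo}. We work with the instance produced in that proof, which already has useful structure. Every agent's bounds are of one of three forms: $u_i^+=u_i^-$ (exact target), or $u_i^+=\infty$ with a lower bound, or no constraint at all. Every agent already has private items making her non-envious whenever she is feasible. The plan is to replace each bound by a gadget built from $3$-ary additive values, so that envy-freeness enforces the lower bound and Pareto-optimality together with envy-freeness enforces the upper bound. We must also check that the gadgets introduce no new Pareto-improvements and no envy in the intended allocation.

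\emph{Lower bound gadget.} For an agent $i$ with lower bound $u_i^-$, add a ``witness'' agent $w_i$ whose values copy $v_i$ on the items $i$ may receive, but scaled down into $\{0,1,2\}$. For $k=3$ we cannot scale, so instead add a bundle $D_i$ of $u_i^-$ dummy items valued $1$ by $i$. Then give $i$ a companion agent $w_i$ who values exactly $D_i$ (value $1$ each) and nothing else, plus a private item of value $1$ to $w_i$.

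In any Pareto-optimal allocation, $D_i$ is split between $i$ and $w_i$. Envy-freeness of $w_i$ towards $i$ forces the split to be balanced. We would tune the sizes so that $i$ is envy-free towards $w_i$ exactly when $v_i(A_i\setminus D_i)\ge u_i^-$. The construction I would try first is the symmetric one: $i$ and $w_i$ both value $2u_i^-$ copies, and a threshold created by an extra item of value $1$ to $w_i$ only.

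\emph{Upper bound gadget.} For $u_i^+=u_i^-$, add an agent $y_i$ who values every item $i$ values positively at $1$ (allowed since values are in $\{0,1,2\}$). Also give $y_i$ a pool of $u_i^+$ private-type items, shared only with agents who cannot profit from them, so that Pareto-optimality forces $y_i$ to own this pool. Then $y_i$ does not envy $i$ iff $i$'s bundle contains at most $u_i^+$ of $y_i$'s valued items. This matches the technique already used with $N_f$ in the proof of Lemma~\ref{lem:hard:infty04-bpo}.

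Values of $2$ are a problem: $y_i$ counts items, whereas $i$ counts utility. I would therefore first refine the instance so that each item valued $2$ by $i$ is replaced by two items valued $1$, using the $(2,3)$-gadget idea of Lemma~\ref{lem:hard:infty03-bpo}. After this refinement, utility equals count for the bounded agents. Alternatively, $y_i$ can copy $i$'s values exactly, since they are already $3$-ary, and hold $u_i^+$ units of private value. I prefer this simpler version.

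\emph{Correctness.} In the \yes direction, extend the feasible, envy-free, Pareto-optimal allocation by giving each gadget agent her pool, and check that envy-freeness holds. For Pareto-optimality, any improvement must leave the gadget pools in place. Restricting it to the original items then gives a Pareto-improvement of the original instance. That restriction stays feasible because the gadget agents' envy constraints are no longer needed for domination, only their utilities, which must not drop.

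In the \no direction, take any envy-free, Pareto-optimal allocation of the new instance. The gadgets force all pools to their owners, so the restriction is feasible. It is also Pareto-optimal, since any improvement lifts back.

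The main obstacle is this last lifting step. A Pareto-improvement of the restricted allocation may move items a witness agent values, lowering that witness's utility. So I must design the witness agents $y_i$ to value only items among $i$'s exclusive neighbours, or shift them onto copies, so that improvements in the original instance never touch their bundles.
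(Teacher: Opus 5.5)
Your starting point matches the paper's: take the \boundefpo-$[\infty,\add,3]$ instance of Lemma~\ref{lem:hard:infty03-bpo} and replace each bound by a gadget enforced through envy-freeness and Pareto-optimality. The gap is that both of your gadgets use a single auxiliary agent, and with one agent neither gadget works.

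\textbf{Lower bound.} Take your symmetric version: $D_i$ has $2y$ items valued $1$ by both $i$ and $w_i$, and $w_i$ has one extra private item. Pareto-optimality only forces $D_i$ to be split between $i$ and $w_i$. If $w_i$ holds $s$ of these items, then $w_i$'s envy-freeness gives $s\ge y$, and $i$'s envy-freeness gives $v_i(A_i\setminus D_i)\ge 2s-2y$. So the split $s=y$ is envy-free even when $v_i(A_i\setminus D_i)=0$. This cannot be fixed by tuning sizes: with one companion, nothing pushes the companion's share strictly above $i$'s. The paper instead uses $y$ companions and $y^2$ items, each valued $1$ by $i$ and by every companion. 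Pareto-optimality puts these items with $i$ or the companions. The companions are identical, so envy-freeness gives them equal shares $s$. If $s<y$, then $i$ holds $y(y-s)>s$ and the companions envy $i$. Hence each companion holds exactly $y$, $i$ holds none, and $i$'s non-envy gives $v_i(A_i)\ge y$.

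\textbf{Upper bound.} A single copy $y_i$ of $v_i$ with a pool of $u_i^+$ private items is forced to own its pool. Nothing stops it from also holding original items it values. Pareto-optimality never forces it to return them, and the private items from Lemma~\ref{lem:hard:infty03-bpo} already make the original agents non-envious. Its envy-freeness then only gives $v_i(A_i\cap M)\le u_i^+ + v_i(A_{y_i}\cap M)$, where $M$ is the original item set. Moreover, the restriction to the original instance need not even be an allocation. So your \no-direction step ``the gadgets force all pools to their owners, so the restriction is feasible'' fails.

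The $N_f$ trick you cite works only because there are more identical agents than the total value they share. The paper adds $t+1$ copies of $v_i$, where $t=v_i(M)$, together with $(t+1)x$ pool items. Mutual envy-freeness equalizes the copies' values. The total value $(t+1)x+t$ and integrality then cap each copy at $x$. Pareto-optimality hands out the whole pool, so each copy holds exactly $x$ pool items and nothing else it values. Non-envy toward $i$ then yields $v_i(A_i)\le x$.

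The obstacle you flagged, lifting improvements back, is not the real issue: gadget agents keep their bundles, so their utilities are unchanged. The missing idea is to use many identical gadget agents so that their utilities are pinned exactly.
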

\begin{proof}
    We begin by introducing two types of gadgets, referred to as the upper-bound gadget and the lower-bound gadget.
    The purpose of these gadgets is to encode the explicit feasibility constraints in the bounded fair division instance constructed in the proof of Lemma~\ref{lem:hard:infty03-bpo}, so that any Pareto-optimal and envy-free allocation will necessarily satisfy those constraints by construction.
    Let the bounded fair division instance in the proof of Lemma~\ref{lem:hard:infty03-bpo} be $\mathcal{I}=(N,M,\{v_i\}_{i\in N}, \{(u_i^+, u_i^-)\}_{i\in N})$.

    \paragraph{Upper Bound Gadget.} Consider agent $i$ with $u_i^+ = x$ for some positive integer $x$, where $x$ is polynomial in $n$ and $m$.
    We only need to consider the case that $v_i(M) = t\ge x$, for otherwise, we can directly remove the constraint.
    Note that as $v_i$ is additive and $v_i(g)\in\{0,1,2\}$ for each item $g\in M$ in Lemma~\ref{lem:hard:infty03-bpo}, it follows that $t$ is polynomial in $n$ and $m$.
    We introduce $t+1$ additional agents $\{a_1^+,\ldots,a_{t+1}^+\}$ and $(t+1)x$ additional items $\{g_1^+,\ldots,g_{(t+1)x}^+\}$.
    Each additional item is only positively valued by the additional agents with value $1$.
    For each item $g\in M$, let $v_{a_j^+}(g)=v_i(g)$ for each $j\in\{1,\ldots,t+1\}$.
    The value between other agents and items remains unchanged as in the original instance.
    
    In any envy-free allocation, the additional agents need to receive the same value.
    Combined with the fact that each additional agent's total value to the new instance is $(t+1)x+t$, each additional agent will receive a value at most $x$.
    In any Pareto-optimal allocation, the additional items will be allocated to the additional agents.
    Therefore, each additional agent receives exactly $x$ items from the additional items in any envy-free and Pareto-optimal allocation.
    To ensure the envy-freeness between an additional agent and agent $i$, the items allocated to agent $i$ can have a total value of at most $x$.
    Hence, the upper-bound constraint for agent $i$ is enforced.

    \paragraph{Lower Bound Gadget.} Consider agent $i$ with $u_i^- = y$ for some positive integer $y$, where $y$ is polynomial in $n$ and $m$.
    We introduce $y$ additional agents $\{a_1^-,\ldots,a_{y}^-\}$ and $y^2$ additional items $\{g_1^-,\ldots,g_{y^2}^-\}$.
    Each additional item is positively valued by both agent $i$ and each of the additional agents with value $1$.
    The value between other agents and items remains unchanged as in the original instance.

    In any envy-free allocation, the additional agents need to receive the same number of items among the additional items.
    Moreover, if each additional agent receives fewer than $y$ items, agent $i$ will receive the remaining additional items due to Pareto-optimality, which is at least $y$, violating the envy-freeness between agent $i$ and the additional agents.
    Therefore, each additional agent receives exactly $y$ items from the additional items in any envy-free and Pareto-optimal allocation.
    To ensure that agent $i$ does not envy the additional agents, agent $i$ must receive a value of at least $y$ from $M$, in accordance with the lower bound constraint.

    \medskip
    We now resume our proof.
    The constructed fair division instance to prove Lemma~\ref{lem:hard:infty03-po} is based on $\mathcal{I}$ in Lemma~\ref{lem:hard:infty03-bpo}, where we attach an upper-bound gadget and a lower-bound gadget to each agent $i\in N$ and remove her utility constraints.
    According to our analysis above, a Pareto-optimal and envy-free allocation in the new instance must satisfy the feasibility constraints in $\mathcal{I}$.
    Conversely, any feasible, Pareto-optimal, and envy-free allocation in $\mathcal{I}$ can be extended to a Pareto-optimal and envy-free allocation in the new instance by allocating the additional items to the additional agents evenly.
    Hence, the new instance admits a Pareto-optimal and envy-free allocation if and only if $\mathcal{I}$ in Lemma~\ref{lem:hard:infty03-bpo} admits a feasible, Pareto-optimal, and envy-free allocation.
    This concludes the $\sigma2$-hardness of \efpo-$[\infty,\add,3]$.
\end{proof}

\section{Proofs for Other Results in Table~\ref{tab:results}}
\label{sect:formal}
In this section, we formally prove all remaining results in Table~\ref{tab:results}.
In Sect.~\ref{sect:member}, we prove the memberships for the problems in the respective complexity classes.
Sect.~\ref{sect:hardness} deals with hardness.
The results are combined in Sect.~\ref{sect:combine}.

\subsection{Memberships in Complexity Classes}
\label{sect:member}
First of all, all the problems studied in this paper are in $\sigma2$: the existential certificate is an envy-free allocation, and the for-all certificate is another allocation with potentially larger social welfare or potentially Pareto-dominates the envy-free allocation.

The following four results are easy to see.

\begin{lemma}\label{lem:member:infty0infty-msw}
    \efmsw-$[\infty,\add,\infty]$ is in $\classNP$.
\end{lemma}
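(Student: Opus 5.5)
The plan is to exhibit a polynomial-time verifier whose certificate is an allocation $A=(A_1,\ldots,A_n)$, encoded as a map from items to agents. Its length is polynomial in $m$ and $n$.

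The verifier performs two checks. First, it checks envy-freeness. For every ordered pair $(i,j)$ it computes $v_i(A_i)=\sum_{g\in A_i}v_{ig}$ and $v_i(A_j)$ and compares them. This costs $O(n^2m)$ arithmetic operations on numbers whose bit-length is polynomial, by the integrality and poly-time computability assumptions.

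Second, it checks social welfare optimality using the characterization already noted in Sect.~\ref{sect:complexityclasses}. With additive valuations, the social welfare of any allocation $B$ is $\sum_{g\in M}v_{\sigma_B(g),g}$, where $\sigma_B(g)$ is the agent receiving $g$. This sum is bounded above by $\sum_{g\in M}\max_{i\in N}v_{ig}$, and the bound is attained by giving each item to an agent valuing it most. So $A$ is social welfare optimal if and only if $v_{ig}=\max_{i'\in N}v_{i'g}$ for every $g\in A_i$. This condition is checkable in $O(nm)$ time.

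I would then argue correctness in both directions. If the instance is a \yes instance, the envy-free and social welfare optimal allocation passes both checks. Conversely, any allocation that passes both checks is envy-free and, by the characterization, social welfare optimal.

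There is no real obstacle. The only point needing care is the short exchange argument behind the characterization: if some item is not held by a maximizer, moving it to one strictly increases welfare. Conversely, the per-item bound shows that any allocation meeting the condition attains the maximum.
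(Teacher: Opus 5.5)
Your proposal is correct and follows the paper's own proof: the certificate is the allocation, envy-freeness is checked directly, and social welfare optimality is checked by verifying that every item goes to an agent who values it most. Your explicit exchange and per-item bound argument simply spells out the characterization that the paper states without proof.
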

\begin{proof}
    For instances with additive valuations, an allocation maximizes the social welfare if and only if each item $g$ is allocated to an agent $i$ such that $v_{ig}$ is maximum among $\{v_{1g},\ldots,v_{ng}\}$.
    This property can obviously be checked in polynomial time.
    Therefore, a certificate encoding an allocation is an $\classNP$ certificate, as both envy-freeness and social welfare optimality can be checked in polynomial time. 
\end{proof}

\begin{lemma}\label{lem:member:20infty-po}
    \efpo-$[2,\add,\infty]$ is in $\classNP$.
\end{lemma}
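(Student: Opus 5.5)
The plan is to show that for two agents with additive valuations, an \efpo instance is a \yes instance exactly when an envy-free allocation exists. Membership in $\classNP$ then follows at once: the certificate is an envy-free allocation, and envy-freeness of a given allocation can be checked in polynomial time.

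The key observation, already sketched in Sect.~\ref{sect:complexityclasses}, is the following. With two agents and additive valuations, agent $i$ does not envy the other agent $j$ if and only if $v_i(A_i)\geq v_i(A_j)=v_i(M)-v_i(A_i)$. Equivalently, $v_i(A_i)\geq \frac12 v_i(M)$. This uses that all items are allocated, so $A_j=M\setminus A_i$.

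The main step is this. Suppose $A$ is an envy-free allocation. Among all allocations that weakly Pareto-dominate $A$ (including $A$ itself), choose one that is Pareto-optimal. Such an allocation exists because the allocation space is finite and Pareto-domination is a strict partial order. Concretely, take an allocation maximizing $v_1(A_1')+v_2(A_2')$ subject to $v_i(A_i')\geq v_i(A_i)$ for both $i$. Any allocation dominating it would also satisfy the constraints and have strictly larger sum, so it is Pareto-optimal. Call it $A'$. Then $v_i(A_i')\geq v_i(A_i)\geq \frac12 v_i(M)$ for $i=1,2$, so by the observation $A'$ is envy-free. Hence $A'$ is both envy-free and Pareto-optimal.

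Conversely, any envy-free and Pareto-optimal allocation is in particular envy-free. Therefore the instance is a \yes instance if and only if some envy-free allocation exists. The verifier receives an allocation $(A_1,A_2)$ and checks $2v_i(A_i)\geq v_i(M)$ for $i=1,2$ in polynomial time.

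No step is a real obstacle. The point to be careful about is that the argument uses completeness of the allocation (to write $v_i(A_j)=v_i(M)-v_i(A_i)$) and the restriction to two agents; with three or more agents the equivalence fails. Notably, the verifier never checks Pareto-optimality, which is $\classcoNP$-hard by Lemma~\ref{lem:decidingPO2agents}; it only needs the existence argument above.
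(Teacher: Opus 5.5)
Your proof is correct and follows the paper's argument. Both rest on the same observation: for two agents with additive valuations, envy-freeness is equivalent to $v_i(A_i)\geq\frac12 v_i(M)$ for both agents, and this property survives Pareto-improvements, so an envy-free allocation alone certifies a \yes instance. Choosing a sum-maximizing allocation among those that weakly dominate $A$ simply makes explicit the existence step that the paper states in one line.
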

\begin{proof}
    For two agents, an allocation $(A_1,A_2)$ is envy-free if and only if $v_1(A_1)\geq\frac12 v_1(M)$ and $v_2(A_2)\geq\frac12v_2(M)$.
    Any Pareto-improvements still maintain $v_1(A_1)\geq\frac12 v_1(M)$ and $v_2(A_2)\geq\frac12v_2(M)$, and thus still envy-free.
    Therefore, if there is an envy-free allocation, there is an envy-free allocation that is Pareto-optimal.
    The $\classNP$ certificate can just be the allocation, as we only need to check envy-freeness, which can be done in polynomial time.
\end{proof}

\begin{lemma}\label{lem:member:infty1k-msw}
    For any constant $k\geq2$, \efmsw-$[\infty,\mono,k]$ is in $\theta2$.
\end{lemma}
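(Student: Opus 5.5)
The plan is to follow the sketch given earlier in Sect.~\ref{sect:complexityclasses}: exhibit a polynomial-time algorithm that issues polynomially many \emph{non-adaptive} NP queries. The key parameter is $T^\ast=nmk$. By $k$-arity, $v_i(S)\le |S|(k-1)\le m(k-1)$ for every agent $i$ and bundle $S$. Hence the social welfare of any allocation is an integer in $\{0,1,\ldots,T^\ast\}$, and $T^\ast$ is polynomial in the input size because $k$ is a constant.

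For each $t\in\{0,1,\ldots,T^\ast\}$ I will issue two queries:
\begin{itemize}
\item $Q_1(t)$: does there exist an allocation with social welfare at least $t$?
\item $Q_2(t)$: does there exist an envy-free allocation with social welfare at least $t$?
\end{itemize}
Both are in $\classNP$. The certificate is an allocation, and the verifier computes all $n^2$ values $v_i(A_j)$ in polynomial time using the poly-time computability assumption on valuations. It then checks envy-freeness (for $Q_2$) and compares $\sum_i v_i(A_i)$ with $t$. All $2(T^\ast+1)$ queries are fixed in advance from the input, independent of each other's answers, so they are non-adaptive and polynomially many.

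To decide the instance, let $W^\ast$ be the maximum social welfare. Then $Q_1(t)$ is \yes exactly when $t\le W^\ast$, so $W^\ast$ is the largest $t$ with $Q_1(t)=\yes$; this $t$ exists because $Q_1(0)$ is always \yes. The machine accepts if and only if $Q_2(W^\ast)=\yes$. For correctness, an envy-free allocation with welfare at least $W^\ast$ must have welfare exactly $W^\ast$, so it is envy-free and social welfare optimal. Conversely, an envy-free social welfare optimal allocation witnesses $Q_2(W^\ast)=\yes$. Equivalently, the instance is a \no instance precisely when some $t$ has $Q_1(t)=\yes$ and $Q_2(t)=\no$.

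There is no real obstacle here. The only points needing care are that the polynomial bound on $T^\ast$ uses the constancy of $k$ together with monotonicity and normalization, and that the queries are truly in $\classNP$ for general monotone (non-additive) valuations. The latter follows because each $v_i$ is evaluable in polynomial time by assumption. This puts the problem in $\classP^{\classNP}_{\parallel}=\theta2$.
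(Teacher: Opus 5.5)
Your proposal is correct and follows essentially the same approach as the paper. Both issue, for every $t\le nmk$, two non-adaptive NP queries (does some allocation, respectively some envy-free allocation, reach welfare $t$) and accept iff the envy-free query succeeds at the maximum welfare. Your use of ``at least $t$'' in place of the paper's ``exactly $t$'' is an inessential variation.
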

\begin{proof}
    We will show that we can decide \efmsw-$[\infty,\mono,k]$ by non-adaptively applying an $\classNP$ oracle for polynomially many times.
    For $n$ agents and $m$ items, we have $v_i(M)\leq mk$, so the maximum possible social welfare is $nmk$.
    For each $t=0,1,\ldots,nmk$, we make two queries to the oracle: 1) does there exist an allocation with social welfare $t$, and 2) does there exist an envy-free allocation with social welfare $t$.
    These are clearly $\classNP$ queries, as the corresponding allocation certifies the \yes instance.
    After making $2mnk+2$ queries (which is polynomial in $n$ and $m$), we can easily check in $O(mn)$ time if there is a non-envy-free allocation that has a strictly higher social welfare than that of all envy-free allocations.
\end{proof}

\begin{lemma}\label{lem:member:infty1infty-msw}
    \efmsw-$[\infty,\mono,\infty]$ is in $\delta2$.
\end{lemma}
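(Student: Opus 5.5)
We will give a deterministic polynomial-time algorithm that decides \efmsw-$[\infty,\mono,\infty]$ with polynomially many adaptive queries to an $\classNP$ oracle. The idea is the same as in Lemma~\ref{lem:member:infty1k-msw}. The only difference is that the range of possible social welfare values is now exponential in magnitude, so the linear scan over all $t$ is replaced by a binary search.

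First, fix the search range. By the poly-time computability assumption, each $v_i(M)$ has a binary representation of polynomial length. Let $W=\sum_{i\in N}v_i(M)$; it is an upper bound on the social welfare of any allocation, and $\log_2 W$ is polynomial in $m$ and $n$. The two oracle questions we use are the following, for an integer threshold $t\in[0,W]$.
\begin{enumerate}
    \item[(Q1)] Does there exist an allocation with social welfare at least $t$?
    \item[(Q2)] Does there exist an envy-free allocation with social welfare at least $t$?
\end{enumerate}
Both are in $\classNP$. A certificate is an allocation, and checking envy-freeness and computing the social welfare both take polynomial time, since each $v_i(S)$ is poly-time computable and integral.

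Next, we locate the optimal social welfare $t^\ast$. Because (Q1) is monotone in $t$, a binary search over $[0,W]$ finds the largest $t$ for which (Q1) answers \yes; this value is $t^\ast$. The search uses $O(\log W)$ adaptive queries, which is polynomially many. We then make one query of type (Q2) with threshold $t^\ast$ and output its answer.

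Correctness is immediate. No allocation has social welfare exceeding $t^\ast$, so an envy-free allocation with welfare at least $t^\ast$ is exactly an envy-free, social welfare optimal allocation. All numbers involved have polynomial bit-length, so the whole procedure runs in polynomial time. Hence the problem is in $\classP^{\classNP}=\delta2$.

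The only point needing care is the search range. Values may be exponentially large, so the bound on $t^\ast$ must come from the polynomial \emph{bit-length} of $W$, not from its magnitude. This is exactly what the integrality and poly-time computability assumptions guarantee, and it is why we need adaptive queries here rather than the polynomially many non-adaptive queries used in Lemma~\ref{lem:member:infty1k-msw}.
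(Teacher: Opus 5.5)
Your proof is correct and follows essentially the same route as the paper: a binary search over welfare thresholds in $\{0,\ldots,\sum_{i\in N}v_i(M)\}$ using the two NP queries (Q1) and (Q2). This uses polynomially many adaptive queries because that bound has polynomial bit-length. The only cosmetic difference is that the paper asks both queries at every step and declares a \no instance as soon as their answers disagree, whereas you first pin down $t^\ast$ with (Q1) alone and then make a single (Q2) query, which is a slightly cleaner organization of the same argument.
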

\begin{proof}
    We will show that we can decide \efmsw-$[\infty,\mono,\infty]$ by adaptively applying an $\classNP$ oracle for polynomially many times.
    Let $T$ be $\sum_{i\in N}v_i(M)$.
    By our assumptions of valuation functions, $\log_2T$ is bounded by a polynomial in $m$ and $n$.
    The idea is similar to the proof of Lemma~\ref{lem:member:infty1k-msw}, but with a binary search technique.
    For a value $t\in\{0,1,\ldots,T\}$, we make two queries: 1) does there exist an allocation with social welfare \emph{at least} $t$, and 2) does there exist an envy-free allocation with social welfare \emph{at least} $t$.
    If the answers are both ``yes'', we only need to make queries within $\{t+1,\ldots,T\}$; if the answers are both ``no'', we only need to make queries within $\{0,1,\ldots,t-1\}$; if the answers to the two questions are different, it must be ``yes'' for the first query and ``no'' for the second query, then we know this is a \no instance.
    If the binary search completes with the same answer for both queries in every round, then this is a \yes instance.
\end{proof}

The next two lemmas are based on the value truth table (Definition~\ref{def:truthtable}).

Recall that the size of the table is polynomial in $m$ and $n$ if $n$ is a constant and valuations are $k$-ary.
It does not matter if the valuation functions are additive or not, as $T\leq mk$ holds for $k$-ary valuations.
If the truth table is constructed, checking the existence of an envy-free Pareto-optimal can obviously be done in polynomial time:
the index of the cell already tells if the corresponding allocation is envy-free; for each ``envy-free index'' that has value ``true'', it suffices to check the Boolean value of each cell that represents a Pareto-dominating allocation to see if any of them has value ``true''.
As a remark, for each cell that has the value ``true'', it may correspond to more than one allocation.

\begin{lemma}\label{lem:member:n0k-po}
    For any constant $n\geq2$ and $k\geq2$, \efpo-$[n,\add,k]$ and \efmsw-$[n,\add,k]$ are in $\classP$.
\end{lemma}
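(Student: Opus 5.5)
The plan is to build the valuation truth table $\mathcal{T}$ (Definition~\ref{def:truthtable}) by dynamic programming over the items, and then read the answer to both \efpo and \efmsw off the table.

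\textbf{Size of the table.} Since valuations are additive and $k$-ary, $v_{ig}\in\{0,\ldots,k-1\}$, so $T\le m(k-1)$. Hence $\mathcal{T}$ has at most $(mk+1)^{n^2}$ cells, which is polynomial in the input size because $n$ and $k$ are constants.

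\textbf{Building the table.} Order the items $g_1,\ldots,g_m$. For $\ell=0,\ldots,m$, let $S_\ell\subseteq\{0,\ldots,T\}^{n\times n}$ be the set of matrices $[u_{ij}]$ for which some allocation of $\{g_1,\ldots,g_\ell\}$ has $v_i(A_j)=u_{ij}$ for all $i,j\in N$. Set $S_0=\{\mathbf 0\}$. A matrix $U$ belongs to $S_{\ell+1}$ if and only if there exist $U'\in S_\ell$ and a recipient $j\in N$ with $U=U'+v_{\cdot,g_{\ell+1}}e_j^{\top}$. Here $v_{\cdot,g_{\ell+1}}$ is the column vector $(v_{1g_{\ell+1}},\ldots,v_{ng_{\ell+1}})^\top$, so this update adds $v_{ig_{\ell+1}}$ to entry $(i,j)$ for every $i$. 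Correctness follows by induction on $\ell$, using additivity: $v_i(A_j\cup\{g\})=v_i(A_j)+v_{ig}$. Each step costs at most $n\cdot|S_\ell|$ operations, so $\mathcal{T}$, which equals the indicator of $S_m$, is computed in polynomial time.

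\textbf{Deciding the problems.} Call a true cell $U$ envy-free if $u_{ii}\ge u_{ij}$ for all $i,j$.
\begin{itemize}
\item For \efpo, accept if and only if some true envy-free cell $U$ has no true cell $U'$ satisfying $u'_{ii}\ge u_{ii}$ for all $i$ with at least one strict inequality. This suffices because Pareto-domination depends only on the diagonal utilities. So if any allocation realizing $U$ is Pareto-dominated, then so is every allocation realizing $U$. Conversely, a dominating allocation realizes some true cell $U'$.
\item For \efmsw, compute the maximum of $\sum_i u_{ii}$ over all true cells, and accept if and only if some true envy-free cell attains this maximum.
\end{itemize}
Both checks take time quadratic in the table size.

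The only point needing care is that a single cell may correspond to several allocations. This is harmless, since envy-freeness, Pareto-optimality, and social welfare are all determined by the cell index.
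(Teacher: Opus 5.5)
Your proposal is correct and follows the same approach as the paper: build the valuation truth table $\mathcal{T}$ by dynamic programming over the items, then read off envy-freeness, Pareto-optimality, and social welfare optimality from the cell indices. The paper only cites the dynamic program as standard and calls the final check routine; your explicit recurrence and your observation that all three properties depend only on the cell index supply exactly those omitted details.
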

\begin{proof}
    We can construct the valuation truth table $\mathcal{T}$ in polynomial time by the standard dynamic programming method which has also been applied in other fair division papers~\citep{aziz2023computing,han2023average,bu2025approximability}.
    Then it is routine to check the existence of envy-free and Pareto-optimal allocations.
\end{proof}

\begin{lemma}\label{lem:member:n1k-po}
    For any constant $n\geq2$ and $k\geq2$, \efpo-$[n,\mono,k]$ are in $\theta2$.
\end{lemma}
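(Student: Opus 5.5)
The plan is to show that \efpo-$[n,\mono,k]$ can be decided in polynomial time using polynomially many non-adaptive queries to an $\classNP$ oracle. The method is the one sketched in Sect.~\ref{sect:complexityclasses}: fill in the valuation truth table $\mathcal{T}$ of Definition~\ref{def:truthtable} by querying the oracle once per cell, then decide the instance deterministically from the table. Membership in $\theta2=\classP^{\classNP}_{\parallel}$ then follows from the definition.

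First, the table has polynomially many cells. Since each $v_i$ is $k$-ary and normalized, $v_i(M)\le mk$ by telescoping marginal values, so $T\le mk$. The table therefore has at most $(mk+1)^{n^2}$ cells, which is polynomial because $n$ and $k$ are constants. We can enumerate all index matrices $[u_{ij}]\in\{0,\ldots,T\}^{n\times n}$ without knowing $\mathcal{T}$, after computing $T=\max_i v_i(M)$ with $n$ polynomial-time valuation evaluations. So the list of queries is fixed in advance and the queries are non-adaptive.

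Second, each cell is an $\classNP$ query. Cell $[u_{ij}]$ asks whether there is an allocation $(A_1,\ldots,A_n)$ with $v_i(A_j)=u_{ij}$ for all $i,j$. A certificate is the allocation itself, and it is verified by $n^2$ evaluations of the valuation functions, each polynomial-time by the poly-time computability assumption.

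Third, we post-process the answers deterministically. The instance is a \yes instance if and only if there is a true cell $[u_{ij}]$ such that:
\begin{itemize}
\item it is envy-free, meaning $u_{ii}\ge u_{ij}$ for all $i,j$; and
\item no true cell $[u'_{ij}]$ has $u'_{ii}\ge u_{ii}$ for all $i$ with at least one inequality strict.
\end{itemize}
For correctness, suppose an envy-free Pareto-optimal allocation $A$ exists. Its cell is true and envy-free. If some true cell dominated it on the diagonal, the allocation realizing that cell would Pareto-dominate $A$. Conversely, take any allocation $A$ realizing a qualifying cell. It is envy-free by the first condition. Any Pareto-dominating allocation would realize a true cell violating the second condition, so $A$ is Pareto-optimal.

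The main point is that envy-freeness and Pareto-dominance depend only on the matrix $[v_i(A_j)]$, so it does not matter that one cell may correspond to several allocations. The double loop over cells takes polynomial time. There is no real obstacle beyond confirming that the table's size does not depend on additivity, only on the bound $T\le mk$.
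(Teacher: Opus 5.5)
Your proposal is correct and takes the same approach as the paper. You query the NP oracle non-adaptively once for each of the polynomially many cells of the valuation truth table, then decide envy-freeness and Pareto-optimality deterministically from the filled table. You also supply the details the paper calls routine: the bound $T\le mk$, the allocation as NP certificate, and the correctness of the post-processing step.
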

\begin{proof}
    Since the valuation truth table $\mathcal{T}$ has a polynomial size, we can use an $\classNP$ oracle to get the value of each cell in $\mathcal{T}$, and we have used the oracle for polynomially many times and non-adaptively.
    Each query is an $\classNP$ query as an allocation that agrees with the index of the cell gives a certificate.
    Then it is routine to check the existence of envy-free and Pareto-optimal allocations.
\end{proof}

\subsection{Hardness Results}
\label{sect:hardness}

\begin{lemma}\label{lem:hardness:60infty}
    \efpo-$[6,\add,\infty]$ is $\sigma2$-hard.
\end{lemma}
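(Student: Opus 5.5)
We reduce from \easubsetsum (Problem~\ref{problem:easubsetsum}), which the paper already identifies as the natural source for a constant number of additive agents: values must be exponentially large, otherwise the valuation truth table would make the problem easy. Membership in $\sigma2$ is already known, so only hardness is needed. Given $(V_\exists,V_\forall,T)$ with $V_\exists=\{p_1,\dots,p_r\}$ and $V_\forall=\{q_1,\dots,q_s\}$, we build six additive agents whose roles mirror the \boundefpo construction of Sect.~\ref{sec:infty0k-po}, now with large integer values.

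\emph{Existential choice.} There is a pair of ``choosing'' agents $1,2$ and, for each $p_j$, a pair of items $e_j,\bar e_j$. Both items are scaled by a large factor $L$, with a low-order perturbation $p_j$ on one side, so that envy-freeness between agents $1$ and $2$ forces each agent to take exactly one item of each pair. The set $U_\exists$ is read off from which agent receives the ``heavier'' copy.

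\emph{Universal choice.} For each $q_l$ there is an item that agents $3$ and $4$ value equally (the values again carry a large common scaling plus the $q_l$ digits), so agents $3$ and $4$ have identical marginal rates on these items. Any redistribution of $V_\forall$ between them is then utility-preserving in aggregate.

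\emph{Pareto-improvement trigger.} Agents $5$ and $6$ hold ``slack'' items, with digit weights arranged so that a Pareto improvement exists exactly when a subset of the $q_l$'s brings a designated running sum, namely the $p$-part fixed by the existential choice, to exactly $T$. When that happens, one can push items around a cycle $3\to5\to6\to4\to3$ in which every value difference cancels except a strict gain of $1$ for someone. The target $T$ enters through a fixed item given to agent $5$. Digit separation, via a base larger than $\sum p_j+\sum q_l+T$, ensures that the only way to achieve a weakly-better-for-all reallocation is for the low-order digits to match exactly.

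The proof then has three steps. \textbf{(1) Structure.} Using the scalings, show that every envy-free allocation has a canonical form: agents $1$ and $2$ split each pair, agents $5$ and $6$ hold their designated items, and agent $3$ receives a bundle of value exactly some fixed threshold. Otherwise either the high-order digits create envy or an obvious Pareto improvement appears. \textbf{(2) \no $\Rightarrow$ \no.} Given any canonical envy-free allocation with its induced $U_\exists$, take the $U_\forall$ completing the sum to $T$ and exhibit the explicit dominating allocation through the cycle. \textbf{(3) \yes $\Rightarrow$ \yes.} With $U_\exists$ blocking every completion, build the canonical envy-free allocation and argue it is Pareto-optimal. Any dominating allocation preserves all high-order digits, hence agents $1$ and $2$ keep their bundles up to swaps within pairs of equal value. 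Agent $6$'s weak improvement then forces a low-order identity $\sum_{U_\exists}p+\sum_{U_\forall}q=T$ for some $U_\forall$, a contradiction.

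The main obstacle is step (3), together with making step (1) rigorous. With only six agents we lack the large families of private gadget agents used in Sect.~\ref{sec:infty0k-po}, so envy-freeness and Pareto-optimality must both be controlled purely through digit arithmetic. We must rule out ``mixed'' improvements that change $U_\exists$ itself or exploit partial swaps. The first thing to try is to choose a digit base $B$ large enough that each of the six utilities decomposes into independent digit blocks, and to prove that Pareto dominance must hold digit-block by digit-block, from the top block down. This isolates a single bottom-level subset-sum equation. If six agents turn out to be insufficient for separating the choosing and verifying roles, the fallback is to merge agents $5$ and $6$ and use an extra ``target'' item.
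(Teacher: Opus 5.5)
Your overall route is the paper's: reduce from $\exists\forall$-\textsc{SubsetSum} with six additive agents and exponentially separated digits. But the proposal never fixes a construction, and your own fallback clause shows it is unverified. More seriously, the mechanisms you do describe fail at the two places where the reduction actually lives.

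\textbf{The existential choice is not free.} Suppose envy-freeness between agents $1$ and $2$ is what forces one item per pair. Then it constrains the low-order digits too. With identical values it forces $\sum_{U_\exists}p_j=\sum_{V_\exists\setminus U_\exists}p_j$, and with a perturbation seen by only one of them it still forces $\sum_{U_\exists}p_j\ge\sum_{V_\exists\setminus U_\exists}p_j$. So $U_\exists$ ranges over a restricted family, and the $\exists$ quantifier is not simulated. The paper instead gives each chooser a one-sided threshold through a dummy agent valuing a single item: $a_\exists'$ with $g_\exists$ worth $Y_\exists+W$, and $a_\forall'$ with $g_\forall$. Together with a slack item $g_\exists^-$ worth $W$, envy-freeness then only says that $a_\exists$ and $a_\forall$ each need one item of every existential pair, and the low-order digits stay free.

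\textbf{Nothing couples $U_\exists$ to the subset-sum test.} In your plan the existential items sit with agents $1,2$. These agents are off the cycle $3\to5\to6\to4\to3$ and, by your step (3), keep their bundles. With additive valuations, no identity involving $\sum_{U_\exists}p_j$ can then arise from agent $6$'s weak improvement. In the paper, the complement of $a_\exists$'s picks is held by $a_\forall$, who also holds both items of every universal pair. Since $y_i=3^iW$ dominates all lower-indexed items, any sub-bundle worth exactly $Y_\exists+Y_\forall+T$ must contain all of $a_\forall$'s existential items plus one item per universal pair. That is exactly how $\sum_{U_\exists}x_i+\sum_{U_\forall}x_i=T$ enters.

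\textbf{Nothing forces the trigger state.} Envy-freeness must force the target item to sit with an agent who values it slightly \emph{less} than someone else. Otherwise the envy-free allocation can just be welfare-maximal and trivially Pareto-optimal, and ``a fixed item given to agent $5$'' does not achieve this. The paper uses identical twins $a_{po},a_{po}'$, a huge item $g_{po}$, and an item $g_{poT}$ worth $v(g_{po})+v(g_T)$, where $v_{a_{po}}(g_T)=v_{a_\forall}(g_T)-\frac13$. Envy-freeness between the twins forces equal utilities. The fractional part then forces $g_T$ to go to the twin holding $g_{po}$ rather than to $a_\forall$.

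\textbf{The block-by-block claim is false.} Weak improvement of one agent's utility is a lexicographic comparison, so an agent can gain in a high block and lose in a lower one. The paper replaces this with a global welfare cap. Since $a_\exists',a_\forall'$ must keep $g_\exists,g_\forall$, the total utility of $a_\exists,a_\forall,a_{po},a_{po}'$ is exactly $\frac13$ below its maximum. Hence in any Pareto improvement the integer-valued agents cannot move, and a twin must gain exactly $\frac13$ by replacing $g_T$ with pair items worth exactly $Y_\exists+Y_\forall+T$. Because $a_\exists$'s value is unchanged, the value of the existential complement is unchanged too, since valuations on those items are identical. This yields a forbidden subset-sum solution.
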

\begin{proof}
    We reduce the problem from \easubsetsum (Problem~\ref{problem:easubsetsum}).
    Given a \easubsetsum instance written as $(V_\forall=\{x_1,\ldots,x_m\},V_\exists=\{x_{m+1},\ldots,x_n\},T)$ (where $1<m<n$), we construct the following fair division instance.
    We assume without loss of generality that $0<T<\sum_{i=1}^nx_i$.
    
    Let $W=10\cdot \sum_{i=1}^nx_i$ be a sufficiently large integer.
    For each $i=1,\ldots,n$, let $y_i=3^i\cdot W$.
    Let $Y_\exists=\sum_{i=m+1}^ny_i$.
    Let $Y_\forall=\sum_{i=1}^my_i$.
    There are six agents: $a_\exists,a_\exists',a_\forall,a_\forall',a_{po},a_{po}'$.
    There are $2n+6$ items: $\{c_1,\neg c_1,c_2,\neg c_2,\ldots,c_n,\neg c_n\}\cup\{g_\exists,g_\exists^-,g_\forall,g_T,g_{po},g_{poT}\}$.
    Let $S_i$ be the item-set $\{c_i,\neg c_i\}$ for each $i=1,\ldots,n$.
    The valuation functions of the six agents are defined below:
    \begin{itemize}
        \item[$a_\exists$:] The value for $g_\exists$ is $Y_\exists+W$, and the value for $g_\exists^-$ is $W$. For each $i=m+1,\ldots,n$, the value for each $c_i$ is $y_i+x_i$ and the value for each $\neg c_i$ is $y_i$. The value for each remaining item is $0$.
        \item[$a_\exists'$:] The value for $g_\exists$ is $Y_\exists+W$, and the value for each remaining item is $0$.
        \item[$a_\forall$:] The value for $g_\forall$ is $Y_\exists+2Y_\forall$. For each $i=1,\ldots,n$, the value for each $c_i$ is $y_i+x_i$, and the value for each $\neg c_i$ is $y_i$. The value for $g_T$ is $Y_\exists+Y_\forall+T$. The value for each remaining item is $0$.
        \item[$a_\forall'$:] The value for $g_\forall$ is $Y_\exists+2Y_\forall$, and the value for each remaining item is $0$.
        \item[$a_{po}$:] For each $i=1,\ldots,n$, the value for each $c_i$ is $y_i+x_i$, and the value for each $\neg c_i$ is $y_i$. The value for $g_T$ is $Y_\exists+Y_\forall+T-\frac13$ (we use the fractional number $\frac13$ for the ease of analysis; we can rescale the whole instance to make the valuations integral). The value for $g_{po}$ is $10(Y_\exists+2Y_\forall+W)$. The value for $g_{poT}$ is exactly the value of $g_{po}$ plus the value of $g_T$. The value for each remaining item is $0$.
        \item[$a_{po}'$:] Exactly the same valuation function as $a_{po}$.
    \end{itemize} 
    Notice that this is almost a \emph{restrictive additive} valuation profile: each item's value is either $0$ or some pre-defined value, with the only exception of $g_T$.

    [Completeness] If the \easubsetsum instance is a \yes instance, there exists $U_\exists\subseteq V_\exists$ such that for all $U_\forall\subseteq V_\forall$ we have $\sum_{x_i\in U_\exists}x_i+\sum_{x_j\in U_\forall}x_j\neq T$.
    We show that the following allocation is envy-free and Pareto-optimal.
    For each $i=m+1,\ldots,n$, if $x_i\in U_\exists$, include $\neg c_i$ to agent $a_\exists$'s bundle, and include $c_i$ to agent $a_\forall$'s bundle; if $x_i\notin U_\exists$, then include $c_i$ to agent $a_\exists$'s bundle and include $\neg c_i$ to agent $a_\forall$'s bundle.
    For each $i=1,\ldots,m$, include both $c_i$ and $\neg c_i$ to agent $a_\forall$'s bundle.
    Further include $g_\exists^-$ to agent $a_\exists$'s bundle.
    For the remaining five items $g_\exists,g_\forall,g_T,g_{po},g_{poT}$, allocate $g_\exists$ to $a_\exists'$, allocate $g_\forall$ to $a_\forall'$, allocate $g_T$ and $g_{po}$ to $a_{po}$, and allocate $g_{poT}$ to $a_{po}'$.
    A careful check can verify that the allocation is envy-free.
    We then show that the allocation is Pareto-optimal.
    
    Firstly, the values for both agents $a_\forall'$ and $a_\exists'$ have been maximized, as each of them receives the only item with a positive value.
    Item $g_\exists^-$ must be given to agent $a_\exists$, as only $a_\exists$ values it positively.
    Next, subject to those items $g_\exists$ and $g_\forall$ are taken by $a_\exists'$ and $a_\forall'$ respectively, consider the sum of the values of the remaining four agents over the allocation of the remaining $2n+3$ items.
    This sum is bounded by $W+(2Y_\exists+2Y_\forall+\sum_{i=1}^nx_i)+(20(Y_\exists+2Y_\forall+W)+2(Y_\exists+Y_\forall+T)-\frac13)$, where $W$ is the value contributed by item $g_\exists^-$, $(2Y_\exists+2Y_\forall+\sum_{i=1}^nx_i)$ is the value contributed by $c_1,\neg c_1,\ldots,c_n,\neg c_n$ and $(20(Y_\exists+2Y_\forall+W)+2(Y_\exists+Y_\forall+T)-\frac13)$ is the maximum possible value contributed by $g_{po},g_T,g_{poT}$.
    This sum for the current allocation is almost maximized: it is exactly this upper bound minus $\frac13$ due to that $g_T$ is allocated to $a_{po}$ instead of $a_\forall$.
    Therefore, if there were a Pareto-improvement, an agent's improvement in valuation is less than $1$.
    This is impossible for agent $a_\exists$ and $a_\forall$ since their valuations are integral.
    In addition, the valuations for these two agents must remain unchanged during the Pareto-improvement.
    For this to be possible for agent $a_{po}$ or $a_{po}'$, there must be a bundle with value exactly $10(Y_\exists+2Y_\forall+W)+(Y_\exists+Y_\forall+T)$ (notice that, in the current allocation, each of them receives value $10(Y_\exists+2Y_\forall+W)+(Y_\exists+Y_\forall+T-\frac13)$).
    The first term $10(Y_\exists+2Y_\forall+W)$ must come from the item $g_{po}$, and the second term $(Y_\exists+Y_\forall+T)$ must come from a subset of $\bigcup_{i=1}^nS_i$.
    Therefore, the only possible Pareto-improvement is to let agent $a_{po}$ replace item $g_T$ in her bundle by a subset of items from $\bigcup_{i=1}^nS_i$ with value exactly $(Y_\exists+Y_\forall+T)$.
    We will conclude the completeness of the reduction by showing that this is impossible.

    Let $A_\exists$ be the bundle received by $a_\exists$ with item $g_\exists^-$ excluded (in the current allocation before the Pareto-improvement), and we have $A_\exists\subseteq \bigcup_{i=m+1}^nS_i$.
    Consider the set of items $(\bigcup_{i=m+1}^nS_i)\setminus A_\exists$.
    By our construction, each of the four agents values this bundle $Y_\exists+\sum_{x_i\in U_\exists}x_i$.
    Let $A_\exists'$ be the bundle received by $a_\exists$ after the Pareto-improvement, with item $g_\exists^-$ excluded.
    Since we have seen that the value of $a_\exists$ cannot change at all, the value for $A_\exists$ and $A_\exists'$ must be the same for $a_\exists$, and this should also be true for the other three agents (as all the four agents have identical valuations for items in $\bigcup_{i=m+1}^nS_i$).
    Therefore, the value of $(\bigcup_{i=m+1}^nS_i)\setminus A_\exists'$ must still be $Y_\exists+\sum_{x_i\in U_\exists}x_i$ for all the four agents.
    In addition, for agent $a_{po}$, the subset of items used to replace $g_T$ must contain the whole set $(\bigcup_{i=m+1}^nS_i)\setminus A_\exists'$: if one item is missing from $(\bigcup_{i=m+1}^nS_i)\setminus A_\exists'$, the value of this subset will be less than $Y_\exists$ as $y_i$ for each $i=m+1,\ldots,n$ is larger than the value of the set of all items in $\bigcup_{i=1}^mS_i$.
    After taking over $(\bigcup_{i=m+1}^nS_i)\setminus A_\exists'$, we need to find a subset of items from $\bigcup_{i=1}^mS_i$ that is worth exactly $(Y_\exists+Y_\forall+T)-(Y_\exists+\sum_{x_i\in U_\exists}x_i)=Y_\forall+T-\sum_{x_i\in U_\exists}x_i$.
    To match the term $Y_\forall$, this subset must contain exactly one item from $S_i$ for each $i=1,\ldots,m$.
    However, this would imply there exists a subset $U_\forall\subseteq V_\forall$ such that $\sum_{x_j\in U_\forall}x_j=T-\sum_{x_i\in U_\exists}x_i$, contradicting to that the instance is a \yes instance.

    [Soundness] Suppose the \easubsetsum is a \no instance and suppose there exists an allocation $A$ that is both envy-free and Pareto-optimal.
    We will show that there must be a contradiction.

    Firstly, in $A$, the bundle allocated to one of $a_{po}$ and $a_{po}'$ must contain $\{g_{poT}\}$, and the bundle for the other one must contain $\{g_{po},g_T\}$.
    To see this, items $g_{po}$ and $g_{poT}$ are only valued by these two agents, and they cannot be put in the same bundle as $g_{po}$ is already more valuable than all items in $M\setminus\{g_{po},g_{poT}\}$.
    In addition, both agents' valuations must be exactly the same to keep envy-freeness, since they have identical valuation functions.
    The value for $g_{poT}$ is fractional, so the agent taking $g_{po}$ must further take $g_T$ (which is the only item with a fractional value, other than $g_{poT}$).
    We assume agent $a_{po}$'s bundle contains $\{g_{po},g_T\}$ and agent $a_{po}'$'s bundle contains $\{g_{poT}\}$ without loss of generality.
    
    Secondly, in $A$, agent $a_\exists'$ and $a_\forall'$ must take precisely the bundles $\{g_\exists\}$ and $\{g_\forall\}$ respectively, as $g_\exists$ and $g_\forall$ are only valued items for agents $a_\exists'$ and $a_\forall'$ respectively.
    In addition, $g_\exists^-$ must be in agent $a_\exists$'s bundle for Pareto-optimality.
    Next, to ensure that agent $a_\exists$ does not envy $a_\exists'$ and that agent $a_\forall$ does not envy $a_\forall'$, given that the item $g_T$ has been allocated to agent $a_{po}$, it is easy to see that 1) $a_\exists$ must take exactly one item from $S_i$ for each $i=m+1,\ldots,n$,
    and 2) $a_\forall$ must take exactly one item from $S_i$ for each $i=m+1,\ldots,n$ and all the two items from $S_i$ for each $i=1,\ldots,m$.

    However, this allocation is not Pareto-optimal.
    Given that the \easubsetsum instance is a \no instance, agent $a_\forall$ can choose a subset of items with value exactly $Y_\exists+Y_\forall+T$ and give this subset to agent $a_{po}$ in exchange for the item $g_T$.
    Agent $a_\forall$'s value remains unchanged, and agent $a_{po}$'s value is increased by $\frac13$.
\end{proof}

\begin{lemma}\label{lem:hard:21inf-po}
    \efpo-$[2,\mono,\infty]$ is $\sigma2$-hard.
\end{lemma}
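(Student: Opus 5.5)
We reduce from the complement of \aeCNF (Problem~\ref{problem:aecnf}), which is the natural choice for general monotone valuations, as suggested in Sect.~\ref{sect:complexityclasses}. Let $\phi$ have universal variables $V_\forall=\{x_1,\ldots,x_p\}$, existential variables $V_\exists=\{y_1,\ldots,y_q\}$, and clauses $C_1,\ldots,C_r$. For every variable we create two literal items $x_i,\neg x_i$ (respectively $y_j,\neg y_j$). We add a small number of auxiliary items: a ``bonus'' item $h$ and possibly a few dummy items. There are two agents.

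Agent $1$ is the ``$\exists$-player'' of the $\sigma2$ formula. She should take exactly one literal item from each universal pair, encoding an assignment $\tau$ of $V_\forall$. Agent $2$ receives the complementary universal literals together with all existential items. Envy-freeness is enforced by making both valuations symmetric on the universal pairs: each agent values a bundle by a large constant $L$ per universal pair from which she holds at least one item. With this choice, envy-freeness forces an even split of every pair. Since $v_i$ only needs to be monotone and polynomial-time computable, we may add arbitrary poly-time checkable terms on top.

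Pareto-optimality will encode the $\forall$ quantifier. Agent $2$'s valuation gets an extra term worth $1$, which equals $[\text{the universal literals she holds, together with some choice of one literal per existential pair from her bundle, satisfy }\phi]$. Evaluating this would require an existential search, which is not poly-time. So instead the bonus is tied to item $h$: agent $2$ gets $+1$ if she holds $h$ and the literal items she holds contain exactly one per existential pair making $\phi$ true together with her universal literals. Agent $1$ values $h$ slightly, so $h$ goes to agent $1$ in the intended allocation. The intended allocation gives agent $2$ both items of every existential pair; with both items of a pair, the ``exactly one'' condition fails. A Pareto improvement then consists of agent $2$ dropping one item of each existential pair to agent $1$ (who values existential items by $0$, or by a tiny amount compensating for the loss of $h$) in exchange for $h$. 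Such an improvement exists iff some existential assignment satisfies $\phi$ under the universal assignment held by agent $2$, i.e.\ under $\neg\tau$. After relabelling, this means: $\phi$ is unsatisfiable for all $y$ under the chosen universal assignment iff the allocation is Pareto-optimal. We then calibrate the values (agent $1$ values $h$ at $1$ and each existential item at a weight that makes the exchange exactly neutral for her) so that the exchange is weakly improving for agent $1$ and strictly improving for agent $2$.

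The proof then has two directions. For completeness, given a witnessing $\tau$, the constructed allocation is envy-free by the symmetric $L$-terms plus small slack, and it is Pareto-optimal because any improvement must preserve the universal split (otherwise someone loses $L$). Any improvement must therefore route through the bonus, which would yield a satisfying existential assignment. For soundness, every envy-free allocation must split the universal pairs evenly, so it corresponds to some $\tau$. In a \no instance a satisfying $y$ exists, and the exchange above is a Pareto improvement. We must also handle envy-free allocations in which $h$ or the existential items are placed differently, and show that each of these is either not envy-free or improvable.

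The main obstacle is this last case analysis in soundness. Every possible envy-free allocation must be Pareto-dominated, including those where agent $2$ already holds $h$ with a partial existential assignment. I would handle this by giving agent $1$ a dominant preference structure: existential items are worthless to her, and $h$ is worth slightly less than the $L$-gap. This makes the envy constraints pin down everything except the existential items. For the remaining cases, I would argue directly that agent $2$ holding $h$ with a satisfying choice would violate envy-freeness. To achieve this, I would choose agent $1$'s value of $h$ large enough that losing $h$ makes agent $1$ envious, and let agent $2$ compensate for this only via the universal terms. This step requires tuning the constants carefully.
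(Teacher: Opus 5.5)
Your source problem (the complement of $\forall\exists$-\textsc{3CNF}) and the split of roles (the envy-free allocation encodes the universal assignment, Pareto-optimality encodes the quantifier over $y$) match the paper. However, the construction has a gap that already breaks completeness: nothing locks the universal assignment. Both agents value a universal pair only through the symmetric term ``$L$ if she holds at least one of its items''. So swapping the two literals of any universal pair between the agents changes neither value. Starting from your intended allocation, a Pareto improvement can first move agent $2$ to an arbitrary universal assignment $\sigma$, and then perform your $h$-exchange with some $y$ satisfying $\phi(\sigma,y)$. Hence the intended allocation is Pareto-optimal only if $\phi$ is unsatisfiable under \emph{every} universal assignment, whereas a \yes instance guarantees this only for the witness $\tau$. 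Your remark that any improvement ``must preserve the universal split'' only preserves that each agent holds one item per pair, not which one. The paper's fix is to price the universal literals asymmetrically: $x_i$ is worth $2^i$ to agent $1$ and $2^{i+1}$ to agent $2$. Then any change of the universal assignment that gains agent $1$ some $k>0$ costs agent $2$ at least $2k-1>0$ (her satisfiability bonus is a single unit), and uniqueness of binary representations rules out neutral changes. (Also, envy-freeness alone does not force an even split of the pairs: agent $1$ holding both literals of one pair and agent $2$ both of another is envy-free on the $L$-terms, so Pareto-optimality is needed there.)

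Second, agent $2$'s bonus (she holds $h$ and \emph{exactly one} literal per existential pair, satisfying $\phi$) is not monotone. Adding the other literal of some pair to a bundle that earns the bonus destroys it, so $v_2$ is not an admissible valuation. The natural monotone polynomial-time substitute, ``the bundle hits every clause'', fails in your design, because in the intended allocation agent $2$ holds both literals of every existential pair. The paper instead makes both valuations $0$ on any bundle missing a pair, so that envy-freeness plus Pareto-optimality force each agent to hold exactly one literal of every pair; only then does it apply the monotone clause-hitting test.

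Third, the soundness step you flag is not merely a matter of tuning; your calibrations conflict. If existential items are worthless to agent $1$, the $h$-exchange hurts her and is not a Pareto improvement. If instead $h$ and the existential items are valued additively with $v_1(Y')=v_1(h)$ for the half $Y'$ she receives, two requirements clash. Envy-freeness of the intended allocation (agent $2$ holds all of $Y$) needs agent $1$'s remaining slack to be at least $v_1(Y\setminus Y')$. But that slack is exactly what prevents the envy you want after the exchange.

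The paper needs no exchange item at all. Agent $1$ (holding $h$) is indifferent to which existential literals she holds, and agent $2$ gains one unit by rearranging hers into a satisfying assignment. So Pareto-optimality forces agent $2$'s bundle to satisfy $\phi$ whenever possible. Agent $1$'s value for a bundle containing $g$ but not $h$ jumps by $2^{2n}$ when it hits every clause, so in a \no instance she then envies agent $2$.
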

\begin{proof}
    We reduce the problem from the complement of \aeCNF in Problem~\ref{problem:aecnf}, which is $\sigma2$-complete.
    Let $n$ be the number of variables with $n=n_\forall+n_\exists$ where $|V_\forall|=n_\forall$ and $|V_\exists|=n_\exists$.
    Let $V_\forall=\{x_1,\ldots,x_{n_\forall}\}$ and $V_\exists=\{y_1,\ldots,y_{n_\exists}\}$.
    Assume $n\geq 3$ without loss of generality.

    We construct a fair division instance with two agents and $2n+2$ items.
    For each variable $x_i\in V_{\forall}$ (resp., $y_i\in V_{\exists}$), we construct two items $X_i=\{x_i,\neg x_i\}$ (resp., $Y_i=\{y_i,\neg y_i\}$).
    There are two additional items $h$ and $g$.
    Let $X=\{x_1,\neg x_1,\ldots,x_{n_{\forall}},\neg x_{n_{\forall}}\}$ and $Y=\{y_1,\neg y_1,\ldots,y_{n_{\exists}},\neg y_{n_{\exists}}\}$.
    For each clause $C$ in $\phi$, we view it as a set of three items corresponding to the three literals in $(\bigcup_{i=1}^{n_\forall}X_i)\cup(\bigcup_{i=1}^{n_\exists}Y_i)$.
    
    The valuation of agent $1$ is defined as follows, where $\mathbb{I}(\cdot)$ is the indicator function.
    $$v_1(S)=\left\{\begin{array}{ll}
        0, & \mbox{if }S\cap X_i=\emptyset\mbox{ for some }X_i\mbox{ or } S\cap Y_i=\emptyset\mbox{ for some }Y_i\\
        \sum\limits_{i=1}^{n_{\forall}}2^i\cdot \mathbb{I}(x_i\in S) + 2^{2n}, & \mbox{else if }h\notin S, g\in S,\mbox{ and }S\cap C\neq\emptyset\mbox{ for each clause }C \\
        \sum\limits_{i=1}^{n_{\forall}}2^i\cdot \mathbb{I}(x_i\in S), & \mbox{else if }h\notin S\\
        v_1(S\setminus\{h\})+2^{n+1}, & \mbox{else if }h\in S\\
    \end{array}\right..$$

    Then we define the valuation of agent $2$.
    Note that to agent $2$, whether item $h$ belongs to the bundle does not make a difference to its value.
    $$v_2(S)=\left\{\begin{array}{ll}
        0, & \mbox{if }g\notin S, S\cap X_i=\emptyset\mbox{ for some }X_i,\mbox{ or } S\cap Y_i=\emptyset\mbox{ for some }Y_i\\
        \sum\limits_{i=1}^{n_{\forall}}2^{i+1}\cdot \mathbb{I}(x_i\in S) + 2, & \mbox{else if }S\cap C\neq\emptyset\mbox{ for each clause }C \\
        \sum\limits_{i=1}^{n_{\forall}}2^{i+1}\cdot \mathbb{I}(x_i\in S)+1, & \mbox{otherwise}\\
    \end{array}\right..$$

    We first list some necessary conditions for an allocation to be both envy-free and Pareto-optimal.
    \begin{enumerate}[leftmargin=*]
        \item For each variable $w\in V_{\forall}\cup V_{\exists}$, each bundle must contain exactly one of $\{w,\neg w\}$. Otherwise, there exists an agent $i\in\{1,2\}$ whose value is $0$.
        In this case, to guarantee Pareto-optimality, agent $3-i$ must receive a bundle that attains her maximum possible value.
        It is easy to verify by our construction that the allocation cannot be envy-free.

        \item Item $g$ must be allocated to agent $2$, for otherwise, the value of agent $2$ is $0$, and agent $1$ must receive her maximum possible value.
        Envy-freeness is also violated.

        \item Based on Pareto-optimality, item $h$ must be allocated to agent $1$.
    \end{enumerate}

    To show the completeness, assume that $(\phi,V_{\forall},V_{\exists})$ is a \yes instance.
    We will construct an allocation $(A_1,A_2)$ that is both envy-free and Pareto-optimal.
    Consider the Boolean assignment to $V_{\forall}$ such that $\phi$ cannot be satisfied under any assignment to $V_{\exists}$.
    Denote by $X^2$ the set of items corresponding to this assignment to $V_{\forall}$ and $X^1=X\setminus X^2$.
    Specifically, if $x_i\in V_{\forall}$ is assigned ``true'', let $x_i\in X^2$ and $\neg x_i\in X^1$; otherwise, let $\neg x_i\in X^2$ and $x_i\in X^1$.
    Let $Y^1=\{y_i|y_i\in V_{\exists}\}$ and $Y^2=Y\setminus Y^1$.
    Let the allocation $(A_1,A_2)$ be $(\{h\}\cup X^1\cup Y^1, \{g\}\cup X^2\cup Y^2)$.
    For agent $1$, as $h\in A_1$, we have $v_1(A_1)\ge 2^{n+1}\ge 2^{n_{\forall}+1}$.
    Since $\phi$ cannot be satisfied under the assignment to $V_{\forall}$, there exists some clause $C$ such that $A_2\cap C=\emptyset$, thus $v_1(A_2)\le 2^{n_{\forall}+1}$.
    For agent $2$, we have $v_2(A_1)=0$ and $v_2(A_2)>0$.
    Therefore, the allocation satisfies envy-freeness.
    We now show that the allocation is also Pareto-optimal.
    For the sake of contradiction, if it is Pareto-dominated by another allocation $(B_1,B_2)$, we need to guarantee that $v_1(B_1)>0$ and $v_2(B_2)>0$, thus we may assume that $g\in B_2$, $|B_j\cap X_i|=1$, and $|B_j\cap Y_i|=1$ for each $j\in\{1,2\}$.
    We may also assume that $h\in B_1$ as adding $h$ to agent $1$'s bundle will lead to an increase in agent $1$'s value.
    If $B_1\cap X=A_1\cap X$ (equivalently, $B_2\cap X=A_2\cap X$), we have $v_1(A_1)=v_1(B_1)$ and $v_2(A_2)=v_2(B_2)$ regardless of the allocation of the items in $Y$, as $\phi$ cannot be satisfied when fixing the assignment to the variables in $V_{\forall}$.
    Now assume that $A_1\cap X\neq B_1\cap X$.
    It is easy to verify that if agent $1$ gains an increase of $k$ in the value by changing from $A_1$ to $B_1$, i.e., $v_1(B_1)-v_1(A_1)=k\in\mathbb{Z}^+$, then agent $2$'s value will decrease due to $v_2(B_2)-v_2(A_2)\le-2k+1<0$.
    On the other hand, if $v_2(B_2)-v_2(A_2)=2k$ where $k\in\mathbb{Z}^+$ (or $v_2(B_2)-v_2(A_2)=2k+1$ if $B_2\cap C\neq\emptyset$ for all $C$ and this is not true for $A_2$), then $v_1(B_1)-v_1(A_1)=-k<0$.
    Therefore, no allocation will Pareto-dominate $(A_1,A_2)$.
    
    To show the soundness, assume that $(\phi,V_{\forall},V_{\exists})$ is a \no instance.
    If there exists an allocation $(A_1,A_2)$ that is both envy-free and Pareto-optimal, due to the above necessary conditions, it must be $h\in A_1$, $g\in A_2$, $|A_j\cap X_i|=1$, and $|A_j\cap Y_i|=1$ for each $j\in\{1,2\}$.
    Let $X^2=A_2\cap X$ be an arbitrary set of items in $X$ allocated to agent $2$.
    As there exists an assignment to the variables in $V_{\exists}$ to make $\phi$ true given any assignment to $V_{\forall}$, we can always select a set of items $Y^2\subset Y$ such that $(X^2\cup Y^2)\cap C\neq\emptyset$ for any clause $C$ under any $X^2$.
    In this case, agent $2$ will achieve the maximum possible value $\sum_{i=1}^{n_{\forall}}2^{i+1}\cdot \mathbb{I}(x_i\in X^2)+2$, which will Pareto-dominate all allocations where agent $2$ receives strictly less values when fixing $X^2$.
    However, the allocation is not envy-free to agent $1$, as $v_1(A_1)\le 2^{n+2}<2^{2n}\leq v_1(A_2)$.
    Therefore, no allocation simultaneously achieves envy-freeness and Pareto-optimality.
\end{proof}

\begin{lemma}\label{lem:hard:21infty}
    Both \efpo-$[2,\mono,\infty]$ and \efmsw-$[2,\mono,\infty]$ are $\delta2$-hard.
\end{lemma}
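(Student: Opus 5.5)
Since \efpo-$[2,\mono,\infty]$ is already $\sigma2$-hard by Lemma~\ref{lem:hard:21inf-po} and $\delta2\subseteq\sigma2$, its $\delta2$-hardness is immediate. The real task is therefore \efmsw-$[2,\mono,\infty]$. The plan is a single reduction from \maxoddSAT (Problem~\ref{problem:maxoddSAT}) whose output instance is a \yes instance of \efmsw exactly when it is a \yes instance of \efpo. One construction then covers both problems, although the \efpo part is already settled.

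Given $\phi$ over $x_1,\ldots,x_n$, I create two items $x_i,\neg x_i$ per variable, plus a few special items ($g$, $h$, and possibly a ``parity'' item). An agent's bundle encodes an assignment when it contains exactly one item from each pair. Agent~1 gets the lexicographic weight $\sum_i 2^{n-i}\mathbb{I}(x_i\in S)$ scaled by a large factor, but only if $S$ encodes a satisfying assignment of $\phi$ (checkable in polynomial time). Agent~2's value is essentially constant on such bundles. The scaling makes social welfare maximized exactly when agent~1 receives the lexicographically last satisfying assignment $\alpha^\ast$. There is also a fallback allocation, for when $\phi$ is unsatisfiable, that wins welfare only by a small constant.

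Envy-freeness should then depend on $\alpha^\ast_n$. I would make agent~2's value depend on the complementary pair items, with a bonus for holding $\neg x_n$. The last variable then contributes a low-order term $\pm1$, so that agent~2 envies agent~1 unless $x_n=1$ in agent~1's bundle. Agent~1 should never envy: I add an item $h$ valued $2^{n+1}$ only by agent~1, mirroring Lemma~\ref{lem:hard:21inf-po}, so agent~1's value at the optimum dominates whatever she sees in agent~2's bundle. For the unsatisfiable case, the fallback allocation maximizing welfare must be non-envy-free, for example by giving all the high-value items to one agent. This yields a \no instance, as required.

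The main obstacle is calibrating the weights. Welfare maximization must be driven purely by agent~1's lexicographic term, while the parity bit must flip envy-freeness without changing which allocation maximizes welfare. There may also be ties among welfare-optimal allocations that differ in agent~2's share, which could create spurious envy-free optima. I would first try giving agent~2 a value that is independent of the assignment up to the single bit $x_n$, and check that this bit stays smaller than agent~1's lowest-order weight times the scaling factor. Since a maximum-welfare allocation is Pareto-optimal and here the \efpo optimum coincides with it, the same instance should also witness the claim for \efpo.
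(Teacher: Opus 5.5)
Your \efpo-$[2,\mono,\infty]$ step is correct, and simpler than the paper's route. Lemma~\ref{lem:hard:21inf-po} gives $\sigma2$-hardness, and $\delta2\subseteq\sigma2$ yields $\delta2$-hardness at once. The paper instead proves both halves with one \maxoddSAT reduction, which needs a separate Pareto-optimality soundness argument. The gap is the \efmsw half. Your architecture is the paper's: pair items, lexicographic weights for agent~1 on satisfying assignments, and an agent~2 bonus tied to $\neg x_n$. But the part you defer as ``calibrating the weights'' is essentially the whole proof, and the design you commit to fails there.

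First, ``agent~1 has value only if $S$ encodes a satisfying assignment'' is not monotone. Its monotone closure (maximum over satisfying sub-assignments) is not polynomial-time computable unless $\classP=\classNP$, since already $v_1(M)>0$ decides satisfiability. So, as in the paper, you must do three things. Give every bundle that meets all pairs and doubles some pair an explicit value such as $B+\sum_i w_i\,\mathbb{I}(x_i\in S)$. Make agent~2 worth $0$ on bundles missing a pair. Make agent~2's value $C$ on all other bundles exceed agent~1's entire lexicographic spread. Otherwise agent~1 grabbing a doubled bundle beats the $\alpha^\ast$-allocation even for satisfiable $\phi$. So welfare is not ``driven purely by agent~1's lexicographic term''; agent~2's constant first forces the one-item-per-pair structure.

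Second, the unsatisfiable case is never constructed, and with agent~1 valued as you describe (scaled weights, no large base) it yields a spurious \yes. Give agent~1 the item $h$ plus any assignment containing $x_n$, and give agent~2 the complementary items. Agent~2 gets her maximum $C+1$ and values agent~1's bundle at $C$. Agent~1 values agent~2's non-satisfying assignment at $0$. If $B+\sum_i w_i\le C+1$ (e.g.\ $B=0$), no allocation has larger welfare, so this allocation is envy-free and welfare-optimal.

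The missing idea is the ordering $B>C>\sum_i w_i>\min_i w_i>\text{bonus}$. The paper realizes it with $10W$, $W$, just under $2^{n+5}$, $2^5$, and $1$. Then for unsatisfiable $\phi$ every welfare maximizer hands agent~1 a doubled bundle, leaving agent~2 with $0$ and envious. For satisfiable $\phi$ the maximizers are exactly the $\alpha^\ast$-allocations.

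Also, $h$ does not do what you claim, since $2^{n+1}$ is negligible next to the scaled weights. Agent~1's non-envy at $\alpha^\ast$ comes from the lexicographic maximality of $\alpha^\ast$, or in the paper from making $g$ indispensable to agent~1. Your closing claim that the instance is a \yes instance of \efmsw exactly when it is one of \efpo is unproved, because a \no for \efmsw does not give a \no for \efpo; it is moot given your first step. Alternatively, the \efmsw half can simply be cited from \citet{bouveret2008efficiency}, as the paper notes.
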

\begin{proof}
    Note that it has been shown that \efmsw-$[2,\mono,\infty]$ is $\delta2$-complete in~\citep{bouveret2008efficiency} (which does not imply the hardness of \efpo-$[2,\mono,\infty]$).
    We put our proof because our proof tackles the two problems together.
    
    We present a reduction from the $\delta2$-complete language \maxoddSAT (Problem~\ref{problem:maxoddSAT}).
    Given a \maxoddSAT instance $\phi$ with variables $x_1,\ldots,x_n$, we construct the following fair division instance with two agents.
    The set of items is $\{g,x_1,\neg x_1,x_2,\neg x_2,\ldots,x_n,\neg x_n\}$.
    Let $M_i=\{x_i,\neg x_i\}$.
    Let $W=2^{n+10}$.
    Agent $1$'s valuation function $v_1$ is defined as follows.
    We define $v_1(S)=0$ for the following scenarios:
    \begin{itemize}
        \item if $g\notin S$ or  $S\cap M_i=\emptyset$ for some $M_i$, set $v_1(S)=0$;
        \item if $|S\cap M_i|=1$ for all $M_i$, then $S$ corresponds to a Boolean assignment  (where $x_i\in S$ means $x_i$ is assigned ``true'' and $\neg x_i\in S$ means $x_i$ is assigned ``false''); set $v_1(S)=0$ if this is not a satisfying assignment of $\phi$.
    \end{itemize}
    Otherwise, there are two scenarios: 1) $|S\cap M_i|\geq1$ for all $M_i$ and $|S\cap M_i|=2$ for some $M_i$, and 2) $|S\cap M_1|=1$ for all $M_i$ and $S$ gives a satisfying assignment to $\phi$.
    In both cases, define
    $$v_1(S)=10W+\sum_{i=1}^n2^{n+5-i}\cdot\mathbb{I}(x_i\in S),$$
    where $\mathbb{I}(\cdot)$ is the indicator function.
    Agent $2$'s valuation function $v_2$ is defined as follows.
    If $S\cap M_i=\emptyset$ for some $i=1,\ldots,n$, then $v_2(S)=0$.
    Otherwise,
    $$v_2(S)=\left\{\begin{array}{ll}
        W+2, & \mbox{if }S\mbox{ is the set of all }2n+1\mbox{ items} \\
        W+1, & \mbox{if }\{g,\neg x_n\}\subseteq S\mbox{ and }|S\cap M_i|\geq1\mbox{ for some }i=1,\ldots,n\\
        W, & \mbox{otherwise}
    \end{array}\right..$$
    It is easy to verify that both $v_1$ and $v_2$ are monotone and can be computed in polynomial time.
    Since an allocation maximizing the social welfare is always Pareto-optimal, it suffices to show the followings:
    \begin{itemize}[leftmargin=*]
        \item[] Completeness: if $\phi$ is a \yes instance, an envy-free and social welfare optimal allocation exists;
        \item[] Soundness: if $\phi$ is a \no instance, there is no Pareto-optimal envy-free allocation.
    \end{itemize}

    [Completeness] Suppose $\phi$ is a \yes instance. 
    Let $A\subseteq\{x_1,\neg x_1,\ldots,x_n,\neg x_n\}$ be a subset of items with $A\cap M_i=1$ for all $M_i$ that represents a lexicographically optimal satisfying assignment of $\phi$.
    Consider the allocation $(A_1,A_2)$ where agent $1$ receives $A_1=\{g\}\cup A$ and agent $2$ receives the remaining items.
    Since $\phi$ is a \yes instance, we have $x_n\in A_1$ and $\neg x_n\in A_2$.
    The allocation is envy-free: agent $1$'s value on $A_2$ is $0$ since $g\notin A_2$; we have $v_2(A_2)=v_2(A_1)=W$ since each of $A_1$ and $A_2$ contains exactly one item from each $M_i$ and $\{g,\neg x_n\}\not\subseteq A_1$.
    This allocation's social welfare $11W+\sum_{i=1}^n2^{n+5-i}\cdot\mathbb{I}(x_i\in A)$ is also maximized.
    To see this, first of all, notice that $W>\sum_{i=1}^n2^{n+5-i}$.
    If $g\notin A_1$, we have $v_1(A_1)=0$ and the social welfare is less than $2W$, which is suboptimal.
    In addition, each of $A_1$ and $A_2$ must contain exactly one item from each $M_i$.
    Otherwise, one of $v_1(A_1)$ and $v_2(A_2)$ is $0$, and the social welfare is suboptimal.
    Subject to these, the part $A$ in $A_1=\{g\}\cup A$ represents a Boolean assignment of the $n$ variables, and the valuation of agent $1$ has already been maximized by our construction.
    The valuation of agent $2$ has also been maximized given that we have shown $g\in A_1$.
    Thus, this is a social welfare optimal allocation.

    [Soundness] Suppose $\phi$ is a \no instance.
    We show that any allocation $(A_1,A_2)$, even allowed to be partial, fails either envy-freeness or Pareto-optimality.
    Let $(A_1^*,A_2^*)$ be the allocation where $A_1^*=\emptyset$ and $A_2^*$ is the set of all items.
    In this allocation, the valuations of the two agents are $0$ and $W+2$ respectively.
    
    Firstly, if $A_2\cap M_i=\emptyset$ for some $M_i$, then $v_2(A_2)=0$.
    If $A_1\cap M_i\neq\emptyset$ for all $M_i$, then agent $2$ envies agent $1$.
    Otherwise, $v_1(A_1)=0$, and this allocation is Pareto-dominated by $(A_1^\ast,A_2^\ast)$.
    We assume from now on $A_2\cap M_i\neq\emptyset$ for all $M_i$.
    Next, we must have $g\in A_1$ and $A_1\cap M_i\neq\emptyset$ for all $M_i$.
    Otherwise, $v_1(A_1)=0$, and we need to allocate all items to agent $2$ to ensure $(A_1^*,A_2^*)$ does not Pareto-dominate $(A_1,A_2)$.
    However, this allocation is not envy-free, as $v_1(A_2)=10W+\sum_{i=1}^n2^{n+5-i}>0=v_1(A_1)$ (as $A_2$ contains some $M_i$, actually all $M_i$'s).
    Therefore, we assume from now on that $g\in A_1$ and each of $A_1$ and $A_2$ contains exactly one item from each $M_i$.
    In this case, $A_1$ gives a Boolean assignment to $\phi$.

    There are two possibilities that $\phi$ is a \no instance: $\phi$ is not satisfiable or the lexicographically optimal satisfying assignment of $\phi$ assigns ``false'' to $x_n$.
    In the former case, we must have $v_1(A_1)=0$ and $v_2(A_2)\leq W+1$, and $(A_1,A_2)$ is Pareto-dominated by $(A_1^\ast,A_2^\ast)$.
    In the latter case, if $A_1$ does not correspond to a lexicographically optimal assignment, then the allocation is not Pareto-optimal: switching $A_1$ to the one that represents a lexicographically optimal assignment (while keeping $g$ in $A_1$) increases the valuation of agent $1$ while keeping agent $2$'s valuation unchanged (which is $W$ before or after the change).
    If $A_1$ corresponds to a lexicographically optimal assignment, then $\neg x_n\in A_1$.
    We have $v_2(A_1)=W+1>W=v_2(A_2)$, violating envy-freeness.
\end{proof}

\begin{lemma}\label{lem:hard:212-po}
    \efpo-$[2,\mono,2]$ is $\theta2$-hard.
\end{lemma}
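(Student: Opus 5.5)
We reduce from \vcmember (Problem~\ref{problem:vcmember}), which is $\theta2$-complete. The goal is a construction whose answer depends on the exact value of the minimum vertex cover size $k^\ast$ of $G=(V,E)$, as the discussion of $\theta2$ in Sect.~\ref{sect:complexityclasses} anticipates. Given $(G,x)$ with $|V|=n$, we build two agents with binary monotone valuations. The items are one item $w_v$ per vertex $v\in V$, plus a polynomial number of auxiliary ``padding'' items whose marginal values are $0$ or $1$.

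For agent~$1$, we let her value be driven by \emph{complements of vertex covers}. Roughly, $v_1(S)$ equals the size of the largest independent set $I\subseteq V$ with $\{w_v\}_{v\in I}\subseteq S$, shifted by the number of padding items she owns. This function is monotone and binary: adding one item increases the maximum independent subset by at most one. Its evaluation is NP-hard, however, so it violates poly-time computability. We therefore use the standard trick of making the certificate explicit. Agent~$1$'s value counts only sets $S$ that \emph{are themselves} independent vertex sets, plus padding; any dependent $S$ is truncated so that marginals stay in $\{0,1\}$. A clean choice is $v_1(S)=\max\{|T|: T\subseteq S\cap W, T \text{ independent}\}$, approximated by something checkable. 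If this is not polynomial, we fall back on rank-like functions, for instance the greedy independent size along a fixed order, combined with padding. This is the technical point to settle first.

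Agent~$2$ symmetrically values vertex items via a cover-like structure. The intended optimal allocations correspond to partitions $(I,C)$ with $I$ independent, so $C$ is a vertex cover, with agent~$1$ taking $I$. In this way $v_1=|I|$ and Pareto-optimality forces $|I|$ maximal, that is, $|C|=k^\ast$, subject to agent~$2$'s utility being fixed. We then use agent~$2$'s value to encode whether $x\in C$. Agent~$2$ gains an extra unit from receiving $w_x$, but only through a gadget that becomes envy-relevant. Concretely, we arrange padding so that envy-freeness holds exactly when the pair $(v_1(A_1),v_1(A_2))$ and $(v_2(A_1),v_2(A_2))$ satisfy balanced equations. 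These hold iff $x\in A_2$ and $|A_2\cap V|=k^\ast$.

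The proof then has two parts. For completeness, given a minimum cover $C\ni x$, the allocation giving $V\setminus C$ plus appropriate padding to agent~$1$ is envy-free. It is also Pareto-optimal: any domination would increase the independent set beyond $n-k^\ast$, or would move $w_x$ to agent~$1$, which lowers agent~$2$. For soundness, suppose every minimum cover avoids $x$. Any envy-free allocation must then either use a non-minimum cover, which is dominated by enlarging $I$, or place $w_x$ with agent~$1$, which breaks the envy equations.

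\textbf{Main obstacle.} The hardest step is designing $v_1,v_2$ so that they are simultaneously binary, monotone and poly-time computable, while the optimum still genuinely ranges over the wide interval of possible $k^\ast$. As noted, $\theta2$-hardness requires utilities that vary over a wide range. I would first try valuations that are $0$ off a ``well-formed'' structure and count linearly on it. Monotonicity would be repaired by filling in values on supersets via a capped count, and the binary constraint would be checked case by case.
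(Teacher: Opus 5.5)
\textbf{There is a genuine gap, and it sits exactly at the step you flag as the main obstacle.} Your reduction source (VertexCoverMember) and overall shape match the paper: one bundle encodes a cover, Pareto-optimality makes it minimum, and envy detects where $x$ lands. But the proposal never fixes the valuations, and the candidates you give for agent~$1$ all fail.

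Because you let agent~$1$ hold the independent set, her valuation must reward independence, which is a downward-closed property. Each candidate breaks:
\begin{itemize}
\item The natural monotone extension $\max\{|T| : T\subseteq S,\ T \text{ independent}\}$ is NP-hard to evaluate.
\item ``$|S|$ if $S$ is independent'' is not monotone.
\item Greedy independent size along a fixed order is not monotone either. On the path $a$--$b$--$c$ with order $b,a,c$, the set $\{a,c\}$ gets $2$ but $\{a,b,c\}$ gets $1$.
\end{itemize}
Your claim that envy-freeness holds iff $x\in A_2$ and $|A_2\cap V|=k^\ast$ also cannot be arranged with polynomial-time valuations. Envy-freeness of a given allocation is then polynomial-time checkable, whereas deciding whether a given cover is minimum is coNP-hard. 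Minimality has to come from Pareto-optimality; the envy test should only detect the position of $x$.

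\textbf{The missing idea is that independence never needs to be encoded.} ``Being a vertex cover'' is upward closed and polynomial-time checkable. So the agent holding the cover can have a thresholded cover indicator as her valuation, while the other agent simply values $v(S)=|S|$. That valuation is additive binary and supplies the wide range of utilities. Your orientation would work with the roles assigned this way; the paper puts the cover with agent~$1$.

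The paper first adds dummy isolated vertices so that $|V|=2n+3$ and the minimum cover has size at most $n$. It then sets $v_2(S)=|S|$ and
\begin{itemize}
\item $v_1(S)=2$ if $|S|\ge n+3$ and $x\in S$;
\item $v_1(S)=1$ if $|S|\ge n+3$ and $x\notin S$, or $|S|=n+2$, or $|S|\le n+1$ and $S$ is a vertex cover;
\item $v_1(S)=0$ otherwise.
\end{itemize}
This is monotone, binary and polynomial-time computable. The argument then runs as follows:
\begin{enumerate}
\item Pareto-optimality forces a complete allocation, and agent~$2$'s non-envy then gives $|A_2|\ge n+2$.
\item Hence $|A_1|\le n+1$, so $v_1(A_1)\le 1$, with equality iff $A_1$ is a cover. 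If $A_1$ is not a cover, agent~$1$ envies.
\item Pareto-optimality forces $A_1$ to be a minimum cover. Shrinking it keeps agent~$1$ at $1$ and strictly helps agent~$2$.
\item The complement then has at least $n+3$ items, so agent~$1$ envies it exactly when it contains $x$.
\end{enumerate}
No padding items or ``balanced envy equations'' are needed beyond these cardinality thresholds.
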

\begin{proof}
    We reduce this problem from \vcmember, which is $\theta2$-complete (Problem~\ref{problem:vcmember}).
    Assume that the \vcmember instance $(G=(V,E),x)$ satisfies $E\neq\emptyset$ (otherwise, it is a trivial instance).
    In addition, assume the instance satisfies 1) $|V|=2n+3$ is odd and 2) the minimum vertex cover uses at most $n$ vertices.
    This can be assumed without loss of generality, as we can create $|V|+3$ extra dummy isolated vertices.
    Given such an instance, we construct the following fair division instance.
    There are $|V|=2n+3$ items corresponding to the vertices, and we still use $x$ to represent the item corresponding to the vertex $x$ in the instance.
    Agent $1$'s valuation function $v_1$ is defined as follows:
    $$v_1(S)=\left\{\begin{array}{ll}
        2, & \mbox{if }|S|\geq n+3\mbox{ and }x\in S \\
        1, & \mbox{if }|S|\geq n+3\mbox{ and }x\notin S \\
        1, & \mbox{if }|S|= n+2 \\
        1, & \mbox{if }|S|\leq n+1\mbox{ and }S\mbox{ is a vertex cover}\\
        0, & \mbox{otherwise}
    \end{array}\right..$$
    Agent $2$'s valuation function is defined by $v_2(S)=|S|$.
    It is straightforward to check that both $v_1$ and $v_2$ are binary and polynomial time computable.

    If the \vcmember instance is a \yes instance, the allocation $(A_1,A_2)$ where $A_1$ is the minimum size vertex cover with $x\in A_1$ and $A_2=V\setminus A_1$ is both envy-free and Pareto-optimal.
    Envy-freeness is straightforward: $v_1(A_1)=1=v_1(A_2)$ since $x\notin A_2$, and $v_2(A_2)=|A_2|\geq n+3 >|A_1|=v_2(A_1)$.
    To show Pareto-optimality, notice that agent $1$'s value cannot be increased without decreasing agent $2$'s value, as increasing agent $1$'s value from $1$ to $2$ would require $|A_1|\geq n+3$, which reduces agent $2$'s value to at most $n$.
    On the other hand, agent $2$'s increment in value would reduce agent $1$'s value to $0$, as $A_1$ is a minimum size vertex cover.

    If the \vcmember instance is a \no instance, we show that any allocation $(A_1,A_2)$ fails either envy-freeness or Pareto-optimality.
    This is actually true even if partial allocations are allowed.
    Firstly, we must have $|A_2|\geq n+2$ to ensure agent $2$ does not envy agent $1$ and Pareto-optimality.
    This makes $|A_1|\leq n+1$.
    By our definition of $v_1$, $v_1(A_1)\leq 1$, and $v_1(A_1)=1$ if and only if $A_1$ is a vertex cover.
    If $A_1$ is not a vertex cover, then $v_1(A_1)=0$, and agent $1$ will envy agent $2$ since $|A_2|\geq n+2$ which implies $v_1(A_2)\geq 1$.
    If $A_1$ is not a minimum size vertex cover, then the allocation is not Pareto-optimal, since we can find a smaller size vertex cover $A_1'$ and the allocation $(A_1',A_2'=V\setminus A_1')$ Pareto-dominates $(A_1,A_2)$.
    If $A_1$ is a minimum size vertex cover, then $x\notin A_1$ since the \vcmember instance is a \no instance.
    We must have $x\in A_2$ (instead of throwing $x$ away) to maintain Pareto-optimality.
    Since a minimum vertex cover has size at most $n$, we have $|A_2|\geq n+3$.
    This would imply $v_1(A_2)=2$, and agent $1$ envies agent $2$.
\end{proof}

\begin{lemma}\label{lem:hard:212-msw}
    \efmsw-$[2,\mono,2]$ is $\theta2$-hard.
\end{lemma}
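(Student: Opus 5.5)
We reduce from \vcmember, reusing the instance construction from Lemma~\ref{lem:hard:212-po} with minor modifications so that social welfare optimality plays the role that Pareto-optimality played there. As before, assume $E\neq\emptyset$, $|V|=2n+3$, and that the minimum vertex cover has size at most $n$ (pad with dummy isolated vertices). Items are the vertices, and agent $2$ has $v_2(S)=|S|$. The first thing to check is whether the same $v_1$ already works. In the \yes case the allocation $(A_1,V\setminus A_1)$, with $A_1$ a minimum vertex cover of size $k^\ast$ containing $x$, has welfare $1+(2n+3-k^\ast)$. Any allocation with $|A_1|\ge n+3$ has welfare at most $2+n$, which is smaller. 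With $|A_1|=n+2$ the welfare is $1+n+1$, again smaller, since $k^\ast\le n$. With $|A_1|\le n+1$ the welfare is at most $1+(2n+3-k^\ast)$, attained only when $A_1$ is a minimum vertex cover. So the maximum welfare is $2n+4-k^\ast$, and the optimal allocations are exactly those where $A_1$ is a minimum vertex cover and all other items go to agent $2$.

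Given this characterization, the plan is to argue both directions through the set of welfare-maximizing allocations, since welfare is binary-valued and easy to enumerate.

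\emph{Completeness.} Take the optimal allocation with $x\in A_1$. Envy-freeness follows exactly as in Lemma~\ref{lem:hard:212-po}: $v_1(A_2)=1$ because $x\notin A_2$, and $|A_2|\ge n+3>|A_1|$.

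\emph{Soundness.} In the \no case, every optimal allocation has $A_1$ a minimum vertex cover with $x\notin A_1$. Then $x\in A_2$ and $|A_2|\ge n+3$, so $v_1(A_2)=2>1=v_1(A_1)$, and agent $1$ envies agent $2$. Hence no envy-free optimal allocation exists. Remark~\ref{rmk:partial}-style partial allocations are handled because discarding items never helps welfare.

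The main obstacle is verifying the welfare comparison across the case boundaries $|A_1|\in\{n+1,n+2,n+3\}$. Two points need care there. First, a non-cover $A_1$ with $|A_1|\le n+1$ gives welfare at most $2n+3-|A_1|$. Second, the inequality $2n+3-|A_1|\le 2n+3-k^\ast$ has to come out strict unless $A_1$ is a minimum cover. The second point holds because $|A_1|<k^\ast$ forces $A_1$ to be a non-cover, which loses agent $1$'s unit of value. Since a minimum vertex cover is nonempty when $E\neq\emptyset$, we get welfare $\le 2n+3-|A_1|$ in that case, and this is strictly below $2n+4-k^\ast$ once $|A_1|\ge k^\ast$. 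The smaller sets with $|A_1|<k^\ast$ need to be checked explicitly against $2n+4-k^\ast$. They give welfare $2n+3-|A_1|\le 2n+3$, which can exceed $2n+4-k^\ast$ when $k^\ast\ge 2$, and this could break the argument.

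If this failure occurs, I would modify $v_2$ to make removing items from agent $1$ costly. Concretely, set $v_2(S)=|S|$ capped so that the marginal value stays binary, for example $v_2(S)=\min(|S|,2n+3-k)$. That choice is not available, though, because $k^\ast$ is unknown. The better fix is to reweight agent $1$ so that being a cover is worth more than the items it contains. With binary marginals that would require agent $1$'s value to grow along with the cover, for instance $v_1(S)=2|S|$-like, but binary marginals prevent this. So the alternative is to duplicate: give agent $1$ value on a second copy of the items, in such a way that being a cover unlocks value comparable to $|V|$. This balancing of welfare, so that the optimum forces $A_1$ to be a minimum cover, is where I expect to spend the effort.
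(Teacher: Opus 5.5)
There is a genuine gap. The proposal never reaches a working reduction, and the construction it starts from provably fails.

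Because $E\neq\emptyset$, $v_1(\emptyset)=0$, so the allocation $(\emptyset,V)$ has welfare $2n+3$. By contrast, any allocation in which $A_1$ is a vertex cover has welfare at most $2n+4-k^\ast$. Hence for every instance with $k^\ast\ge 2$, the unique welfare-optimal allocation is $(\emptyset,V)$. In it, agent~1 envies agent~2, since $v_1(V)=2$. All such instances are therefore mapped to \no instances, whether or not $x$ lies in a minimum cover. So the characterization of the optimum in your first paragraph is false, your completeness argument is false, and the ``if this failure occurs'' is not hypothetical.

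This is structural rather than a boundary glitch. Whenever agent~1's value is bounded by a constant $c$, the optimal welfare lies in $[v_2(M),v_2(M)+c]$, and $O(c)$ non-adaptive NP-oracle queries decide the instance. So no construction with $v_1=O(1)$ can give $\Theta_2^{\mathrm p}$-hardness, unless the polynomial hierarchy collapses.

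The missing idea is this: binary marginals prevent you from making a cover more valuable to agent~1, so you must make the cover's items \emph{cheaper} for agent~2. The paper does this as follows:
\begin{itemize}
\item It forces $k^\ast<m$ by adding a triangle, then takes $m=|E|$ disjoint copies of the graph and pads each copy with $2m+2$ extra items.
\item It lets $g(S)$ be the number of copies whose vertex items in $S$ form a vertex cover. This has marginal at most $1$, since each item lies in a single copy.
\item It sets $v_2(S)=\lfloor |S|/m\rfloor$.
\end{itemize}
Giving agent~1 a minimum cover in every copy then gains her $m$ but costs agent~2 only $k^\ast<m$. Because of the floor, covering fewer copies, using a non-minimum cover, or giving agent~1 any extra item is strictly worse. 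Thus the optimum $|V|-k^\ast+3m+2$ is attained exactly when $A_1$ is a minimum cover in every copy.

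The envy test also has to be rebuilt. The paper takes $v_1=\max\{f,g\}$, where $f(S)=\min\{\lfloor |S|/(2m)\rfloor,m+1\}$ if $x\in S$ (with $x$ in the first copy) and $f(S)=\min\{\lfloor |S|/(2m)\rfloor,m\}$ otherwise. The $f$-branch gives welfare at most $|V|+2m+2$, so it is suboptimal. In an optimal allocation, $v_1(A_2)=m+1>m=v_1(A_1)$ exactly when $x\in A_2$.

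Your ``second copy'' idea keeps $v_2(S)=|S|$. Each covered copy then still costs agent~2 $k^\ast$ units for a gain of one unit to agent~1, so it does not close the gap.
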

\begin{proof}
    We reduce this problem from \vcmember (Problem~\ref{problem:vcmember}).
    Given a \vcmember instance $(G=(V,E),x)$ with $n=|V|$ and $m=|E|$, we assume without loss of generality that a minimum vertex cover uses strictly less than $m$ vertices (by adding a triangle when necessary).
    We first construct a gadget with $n+2m+2$ items, where the first $n$ items correspond to the $n$ vertices in $V$, and there are $2m+2$ additional items.
    We then copy the gadget $m$ times and obtain a fair division instance with $2$ agents and $m(n+2m+2)$ items.
    Let $x$ be the item corresponding to the vertex $x\in V$ in the first gadget.
    The valuation function of agent $1$ is defined by $v_1(S)=\max\{f(S),g(S)\}$, where $f(S)$ and $g(S)$ are given by
    $$f(S)=\left\{\begin{array}{ll}
        \min\{\left\lfloor\frac{|S|}{2m}\right\rfloor, m+1\}, & \mbox{if }x\in S \\[1ex]
        \min\{\left\lfloor\frac{|S|}{2m}\right\rfloor, m\}, & \mbox{if }x\notin S
    \end{array}\right.$$
    and
    $$g(S)=\mbox{number of gadgets where the items from the first }n\mbox{ items represent a vertex cover}.$$
    The valuation function of agent $2$ is defined by $v_2(S)=\left\lfloor\frac{|S|}{m}\right\rfloor$.
    We may easily verify that the marginal value when adding a new item to a bundle is at most $1$ for both agents, thus the valuation functions satisfy general binary.

    Assume the minimum vertex cover in $G$ has size $k$ where $k<n$.
    We first show that an allocation $(A_1,A_2)$ has the maximum social welfare $n-k+3m+2$ if and only if agent $1$ receives the $k$ items corresponding to the minimum vertex cover from each gadget such that $v_1(A_1)=m$, and $A_2$ contains the remaining items such that $v_2(A_2)=n-k+2m+2$.
    This is proved by contradiction.
    Assume that in an allocation $(B_1,B_2)$, agent $1$ has valuation $t$ to $B_1$.
    If $t=f(B_1)$, we have $t\le m+1$ and $|B_1|\ge 2mt$.
    In this case, there are at most $m(n+2m-2t+2)$ items left for agent $2$, thus $v_2(B_2)\le n+2m-2t+2$.
    The social welfare of $(B_1,B_2)$ is at most $n+2m-t+2\le n+2m+2<n+3m-k+2$ as $m>k$, which is suboptimal.
    Therefore, we may assume that $t=g(B_1)$: there are $t$ gadgets whose graphs are fully covered, in each of which agent $1$ receives at least $k$ items corresponding to the minimum vertex cover.
    In this case, there are at most $(m-t)(n+2m+2)+t(n+2m+2-k)$ items left for agent $2$, thus
    $$v_2(B_2)\le \left\lfloor\frac{(m-t)(n+2m+2)+t(n+2m+2-k)}{m}\right\rfloor\le n+2m+2-\frac{kt}{m}.$$
    Only when $t=m$, the social welfare of $(B_1,B_2)$ reaches the optimal $n-k+3m+2$, and we conclude that agent $1$ needs to receive the items corresponding to a vertex cover in each gadget to obtain a maximum social welfare.
    On the other hand, if the items that agent $1$ receives in a gadget correspond to a vertex cover, yet is not minimum, the value of agent $1$ will still remain $m$, yet the number of items left for agent $2$ will be at most $m(n+2m+2-k)-1$, leading to a suboptimal social welfare of at most $n-k+3m+1$.

    Given the above fact, it is straightforward to see that an allocation $(A_1,A_2)$ that both maximizes social welfare and satisfies envy-freeness exists if and only if there is a minimum vertex cover in $G$ that contains $x$.
    When \vcmember is a \yes instance, we let $A_1$ contain $mk$ items where each $k$ items correspond to a minimum vertex cover that contains $x$ for each gadget.
    Agent $2$ does not envy agent $1$ as $|A_2|>|A_1|$, and agent $1$ does not envy agent $2$ as $v_1(A_1)=v_1(A_2)=m$.
    When \vcmember is a \no instance, each minimum vertex cover in $G$ does not contain $x$, and $v_1(A_1)=m<m+1=v_1(A_2)$ as $x\in A_2$ and $|A_2|\ge m(2m+2)$.
    Envy-freeness from agent $1$ to agent $2$ will be violated.
\end{proof}

\subsection{Combining Together}
\label{sect:combine}
We index all our results by Table~\ref{tab:resultsIndices}, and relate them to the theorems below (which are straightforward combinations of results in Section~\ref{sec:infty0k-po} and the previous two sub-sections).

\begin{table}[h]
    \centering
    \setlength{\tabcolsep}{3pt}
    \renewcommand\arraystretch{1.1}
    \begin{tabular}{|l||ccc|ccc|}
    \hline
       \multirow{2}{*}{\textbf{EF+MSW}}  & \multicolumn{3}{c|}{Additive Valuations} & \multicolumn{3}{c|}{General Monotone Valuations} \\
         & binary & $k$-ary & general & binary & $k$-ary & general\\
    \hline
    $n=2$ & Result 1 & Result 2 & Result 3 & Result 4 & Result 5 & Result 6\\
    constant $n$ & Result 7 & Result 8 & Result 9 & Result 10 & Result 11 & Result 12\\
    general $n$ & Result 13 & Result 14 & Result 15 & Result 16 & Result 17 & Result 18\\
    \hhline{=||======}
    \multirow{2}{*}{\textbf{EF+PO}} & \multicolumn{3}{c|}{Additive Valuations} & \multicolumn{3}{c|}{General Monotone Valuations} \\
         & binary & $k$-ary & general & binary & $k$-ary & general\\
     \hline
    $n=2$ & Result 19 & Result 20 & Result 21 & Result 22 & Result 23 & Result 24\\
    constant $n$ & Result 25 & Result 26 & Result 27 & Result 28 & Result 29 & Result 30\\
    general $n$ & Result 31 & Result 32 & Result 33 & Result 34 & Result 35 & Result 36\\
    \hline
    \end{tabular}
    \caption{Results Indices}
    \label{tab:resultsIndices}
\end{table}

\begin{theorem}[Results 1, 2, 7, 8, 19, 20, 25, 26]
    For any constant $n\geq 2$ and $k\geq 2$, \efpo-$[n,\add,k]$ and \efmsw-$[n,\add,k]$ are in $\classP$.
\end{theorem}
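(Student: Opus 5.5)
This theorem is essentially Lemma~\ref{lem:member:n0k-po}, restated for the corresponding table entries, so the plan is to invoke that lemma. To keep the argument self-contained, I would also spell out its two ingredients: building the valuation truth table $\mathcal{T}$ of Definition~\ref{def:truthtable} in polynomial time, and then reading the answer off the table.

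\emph{Size of the table.} For $k$-ary additive valuations we have $T=\max_i v_i(M)\le m(k-1)$. Hence $\mathcal{T}$ has at most $(mk+1)^{n^2}$ cells. Since $n$ and $k$ are constants, this is polynomial in $m$.

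\emph{Filling the table by dynamic programming.} Order the items $g_1,\ldots,g_m$. For each $r=0,\ldots,m$, let $D_r\subseteq\{0,\ldots,T\}^{n\times n}$ be the set of matrices $[u_{ij}]$ realizable by some allocation of $\{g_1,\ldots,g_r\}$, where $u_{ij}=v_i(A_j)$. We start with $D_0=\{\mathbf 0\}$. To obtain $D_r$, take each $U\in D_{r-1}$ and each agent $j$ who receives $g_r$, and add $U+\Delta^{(r,j)}$, where $\Delta^{(r,j)}$ has $v_{i g_r}$ in column $j$ for every row $i$ and zeros elsewhere. Additivity is exactly what makes this update correct, since a bundle's value is the sum of its items' values. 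Each step touches polynomially many matrices times $n$ choices, so $\mathcal{T}=D_m$ is computed in polynomial time.

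\emph{Deciding from the table.} A true cell $U$ corresponds to an envy-free allocation iff $u_{ii}\ge u_{ij}$ for all $i,j$; its utility vector is the diagonal $(u_{11},\ldots,u_{nn})$.
\begin{itemize}
\item For \efpo, accept iff some true envy-free cell $U$ has no true cell $U'$ whose diagonal weakly dominates $U$'s diagonal with at least one strict inequality. Pareto-dominance depends only on utility vectors, so every allocation realizing such a $U$ is Pareto-optimal.
\item For \efmsw, compute the maximum of $\sum_i u_{ii}$ over all true cells. Accept iff some true envy-free cell attains this maximum. (Alternatively, the \efmsw case already follows from Lemma~\ref{lem:member:infty0infty-msw} plus the same table.)
\end{itemize}
The pairwise comparison of cells takes time quadratic in the table size, which is still polynomial.

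The only step needing care is correctness of the dynamic programming, namely that $D_m$ contains exactly the realizable matrices. This follows by induction on $r$ using additivity.
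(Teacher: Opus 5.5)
Your proposal is correct and follows the paper's route: the paper proves this theorem by citing Lemma~\ref{lem:member:n0k-po}, which builds the valuation truth table $\mathcal{T}$ by a dynamic program and then reads off the answer. You have filled in the details that the paper leaves as ``standard'' and ``routine'': the $(mk+1)^{n^2}$ size bound, the column-update recurrence justified by additivity, and the envy-freeness, Pareto-dominance and welfare checks carried out on diagonals of true cells.
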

\begin{proof}
    This is already proved in Lemma~\ref{lem:member:n0k-po}.
\end{proof}

\begin{theorem}[Results 13, 14, 15]\label{thm:infty0x-msw}
    \efmsw-$[\infty,\add,\cdott]$ is $\classNP$-complete.
\end{theorem}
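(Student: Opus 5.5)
Membership is immediate: by Lemma~\ref{lem:member:infty0infty-msw}, \efmsw-$[\infty,\add,\infty]$ is in $\classNP$. Each variant \efmsw-$[\infty,\add,k]$ with $k\geq2$ is a special case, so all three results lie in $\classNP$. For hardness, it suffices to prove that the most restricted variant, \efmsw-$[\infty,\add,2]$, is $\classNP$-hard. The other two variants are generalizations of it.

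For binary additive valuations, I will reduce from deciding whether an envy-free allocation exists at all. \citet{aziz2015fair} show this is $\classNP$-hard for binary additive valuations with a general number of agents; this is the result the paper already cites in Sect.~\ref{sect:complexityclasses}. The key observation is that with binary additive valuations, social welfare optimality means exactly that every item is given to some agent who values it at $1$. The only exception is items valued $0$ by everyone, which can go anywhere. Hence an EF+MSW allocation is the same thing as a non-wasteful envy-free allocation.

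The reduction works as follows. First delete items valued $0$ by every agent. These items do not affect envy, since they add no value to any bundle, and they do not affect welfare. The question is then whether a non-wasteful envy-free allocation exists. I will check that the instances produced by the \citet{aziz2015fair} hardness reduction have the property that any envy-free allocation can be made non-wasteful, or that the reduction's \yes-certificates are already non-wasteful. Alternatively, I can appeal to the fact, already used in the paper for \efpo-$[\cdot,\add,2]$, that \citet{aziz2015fair} directly gives $\classNP$-hardness of finding an envy-free non-wasteful (equivalently Pareto-optimal) allocation under binary additive valuations. For binary additive valuations, non-wasteful, Pareto-optimal and social welfare optimal all coincide, so that hardness transfers verbatim to \efmsw-$[\infty,\add,2]$.

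The main obstacle is confirming that the cited hardness is genuinely about envy-free allocations that are also non-wasteful. If only the existence of an envy-free allocation were shown hard, I would redo a reduction from \xc myself. In that reduction, each element of the ground set gets a few agents, each 3-set gets an item valued $1$ by the agents of its elements, and dummy items are added so that envy-freeness forces an exact-cover structure. I would then verify directly that the \yes-allocation gives every item to an agent valuing it.
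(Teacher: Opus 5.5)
Your proposal is correct and matches the paper's argument. Membership follows from Lemma~\ref{lem:member:infty0infty-msw}. Hardness of the binary case comes from transferring the known $\classNP$-hardness of envy-free plus Pareto-optimal (equivalently, non-wasteful) allocations under additive binary valuations, since Pareto-optimality and social welfare optimality coincide there. The paper cites Proposition~21 of Bouveret and Lang for that hardness, rather than Aziz et al. As you anticipated, it is this EF+PO hardness, not the mere existence of an envy-free allocation, that the argument needs, so your fallback X3C reduction is unnecessary.
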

\begin{proof}
    The problem is in $\classNP$ by Lemma~\ref{lem:member:infty0infty-msw}.
    To show its $\classNP$-hardness, the proof of Proposition~21 in~\citep{bouveret2008efficiency} applies here. 
    In particular, they show \efpo is $\classNP$-hard under additive binary valuations.
    For these valuations, Pareto-optimality and social welfare optimality are equivalent, as both require every item to be allocated to some agent that positively values it. 
    Hence, \efmsw under additive binary valuations is also $\classNP$-hard.
\end{proof}

\begin{theorem}[Results 3, 9, 15]\label{thm:x0infty-msw}
    \efmsw-$[\cdott,\add,\infty]$ is $\classNP$-complete.
\end{theorem}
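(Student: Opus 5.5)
Membership follows from Lemma~\ref{lem:member:infty0infty-msw}. \efmsw-$[\infty,\add,\infty]$ is in $\classNP$, and every \efmsw-$[n,\add,\infty]$ instance with any number of agents is a special case of it. What remains is $\classNP$-hardness, and by the monotonicity of the parameters it suffices to prove hardness of \efmsw-$[2,\add,\infty]$. The cases with constant $n\ge 2$ and general $n$ then inherit it.

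For the two-agent case I reduce from \partition. Given positive integers $x_1,\ldots,x_m$ with total $2S$, I build two agents with identical additive valuations $v_1(g_i)=v_2(g_i)=x_i$, one item $g_i$ per integer. Since both agents value every item equally, every complete allocation has social welfare $2S$, so every allocation is social welfare optimal. The \efmsw question therefore reduces to whether an envy-free allocation exists.

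With two identical additive agents, an allocation $(A_1,A_2)$ is envy-free iff $v(A_1)\ge S$ and $v(A_2)\ge S$. Because the total is $2S$, this holds iff $v(A_1)=v(A_2)=S$, which is exactly a \partition solution. So the \partition instance is a \yes instance iff the constructed instance admits an envy-free, social welfare optimal allocation. The reduction is clearly polynomial, since the values are written in binary.

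There is no real obstacle here. The only point needing care is checking that the argument also survives if partial allocations were allowed, in line with the paper's Remark on partial allocations. Under that reading, welfare optimality requires every item to be allocated, because all values are positive, so the equivalence is unchanged.
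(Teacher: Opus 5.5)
Your proposal is correct and follows the paper's proof: NP membership via Lemma~\ref{lem:member:infty0infty-msw}, and NP-hardness by the reduction from \partition to two agents with identical additive valuations. In that instance every allocation is social welfare optimal, so the question reduces to the existence of an envy-free allocation, i.e., an equal split.
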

\begin{proof}
    The problem is in $\classNP$ by Lemma~\ref{lem:member:infty0infty-msw}.
    A straightforward reduction to show its $\classNP$-hardness is through the $\classNP$-complete problem \partition, where each element in the \partition instance represents an item and the two agents have identical valuation functions. 
    Every allocation maximizes social welfare and there exists an envy-free allocation if and only if the \partition instance is a \yes instance.
\end{proof}

\begin{theorem}[Results 4, 5, 10, 11, 16, 17]\label{thm:x1k-msw}
    For any constant $k\ge2$, \efmsw-$[\cdott,\mono,k]$ is $\theta2$-complete.
\end{theorem}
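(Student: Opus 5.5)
The proof combines the membership and hardness lemmas already established, using the monotonicity of the parameter ordering. Recall that \efmsw-$[a,b,c]$ is a special case of \efmsw-$[a',b',c']$ whenever $a\le a'$, $b\le b'$, $c\le c'$. The claim covers the six results 4, 5, 10, 11, 16, 17, i.e., $n\in\{2,\text{constant},\infty\}$ with monotone valuations and $k$-ary values (binary being $k=2$). So it suffices to establish membership for the most general case and hardness for the most restricted case.

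\emph{Membership.} Every problem in the statement is a special case of \efmsw-$[\infty,\mono,k]$, which is in $\theta2$ by Lemma~\ref{lem:member:infty1k-msw}. There the oracle asks, for each $t\in\{0,\ldots,nmk\}$, whether some allocation (resp.\ some envy-free allocation) has social welfare $t$. Since the bound $nmk$ is polynomial for constant $k$, this uses polynomially many non-adaptive NP queries. Membership is inherited by any special case, since the identity map is a reduction and $\theta2$ is closed under Karp reductions. This gives membership for $n=2$, for constant $n$, and for general $n$, for every fixed $k\ge2$.

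\emph{Hardness.} The most restricted case is \efmsw-$[2,\mono,2]$, which is $\theta2$-hard by Lemma~\ref{lem:hard:212-msw} (the reduction from \vcmember). A binary valuation is also $k$-ary for every $k\ge2$, since its marginals lie in $\{0,1\}\subseteq\{0,\ldots,k-1\}$. Likewise, a two-agent instance is a valid instance whenever the number of agents is bounded by a constant $n\ge2$ or unbounded. Hence each \efmsw-$[\cdot,\mono,k]$ contains \efmsw-$[2,\mono,2]$ as a special case, and the identity reduction transfers $\theta2$-hardness to all of them.

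No step here poses a real obstacle; the substantive work is in Lemmas~\ref{lem:member:infty1k-msw} and~\ref{lem:hard:212-msw}. The one point to check is that the valuations built in Lemma~\ref{lem:hard:212-msw} are genuinely binary, i.e., that every marginal of $v_1=\max\{f,g\}$ and of $v_2$ lies in $\{0,1\}$. This holds as verified there: adding one item raises each floor term, and the count of covered gadgets, by at most one, and the max of such functions keeps marginals in $\{0,1\}$.
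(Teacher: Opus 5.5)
Your proposal is correct and matches the paper's proof: the paper likewise derives the theorem directly from the $\theta2$ membership of \efmsw-$[\infty,\mono,k]$ (Lemma~\ref{lem:member:infty1k-msw}) and the $\theta2$-hardness of \efmsw-$[2,\mono,2]$ (Lemma~\ref{lem:hard:212-msw}). Each case in the statement sits between these two in the special-case ordering. Your extra check that $v_1=\max\{f,g\}$ and $v_2$ have marginals in $\{0,1\}$ is sound; this includes the case where the added item is $x$ and the cap in $f$ rises from $m$ to $m+1$. It simply spells out what that lemma asserts.
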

\begin{proof}
    It directly follows from Lemma~\ref{lem:member:infty1k-msw} and Lemma~\ref{lem:hard:212-msw}.
\end{proof}

\begin{theorem}[Results 6, 12, 18]\label{thm:x1infty-msw}
    \efmsw-$[\cdott,\mono,\infty]$ is $\delta2$-complete.
\end{theorem}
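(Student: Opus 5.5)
The statement covers three problems: \efmsw-$[2,\mono,\infty]$, \efmsw-$[n,\mono,\infty]$ for each constant $n\geq 2$, and \efmsw-$[\infty,\mono,\infty]$. I would prove it by combining a membership result and a hardness result already in the paper, as was done for Theorem~\ref{thm:x1k-msw}. The key fact is the monotonicity noted in the preliminaries. Whenever $a\leq a'$, $b\leq b'$ and $c\leq c'$, every \efmsw-$[a,b,c]$ instance is also an \efmsw-$[a',b',c']$ instance, and the identity map is then a Karp reduction from the former problem to the latter. So it suffices to prove membership for the most general problem and hardness for the most restricted one.

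\emph{Membership.} Lemma~\ref{lem:member:infty1infty-msw} shows that \efmsw-$[\infty,\mono,\infty]$ is in $\delta2$. It decides the problem by a binary search over the social welfare threshold $t\in\{0,\ldots,T\}$. Each round asks two NP queries: is there an allocation with welfare at least $t$, and is there an envy-free allocation with welfare at least $t$. The number of rounds is $O(\log T)$, which is polynomial by the poly-time computability assumption. The problems with $2$ agents or a constant number $n$ of agents are restrictions of this problem. A $\delta2$ machine for the general problem therefore also decides them: it runs on their instances unchanged, with the same oracle calls. All three problems are thus in $\delta2$.

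\emph{Hardness.} Lemma~\ref{lem:hard:21infty} shows that \efmsw-$[2,\mono,\infty]$ is $\delta2$-hard, by a reduction from \maxoddSAT. This reproves the result of \citet{bouveret2008efficiency}. Any two-agent instance is also a valid instance with at most $n$ agents for every constant $n\geq 2$, and also a valid instance with unboundedly many agents. So the identity reduction transfers $\delta2$-hardness from \efmsw-$[2,\mono,\infty]$ to \efmsw-$[n,\mono,\infty]$ and to \efmsw-$[\infty,\mono,\infty]$.

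Membership and hardness together give $\delta2$-completeness for every value of the first parameter, i.e.\ Results 6, 12 and 18. I expect no real obstacle here. The one point to check is that the parameter $n$ means ``at most $n$ agents'', so two-agent instances are genuinely admissible when $n>2$. The paper's definition of the first parameter states exactly this, so the embedding is immediate.
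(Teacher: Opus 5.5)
Your proposal is correct and takes essentially the paper's approach. Membership of the most general problem comes from Lemma~\ref{lem:member:infty1infty-msw}, and $\Delta_2^{\text{p}}$-hardness already holds for two agents via the \maxoddSAT reduction of Lemma~\ref{lem:hard:21infty} (a result the paper also credits to \citet{bouveret2008efficiency}); both extend to the other values of the first parameter because instances with fewer agents are included in the broader problems.
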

\begin{proof}
    The result is given by~\citet{bouveret2008efficiency}.
    We provide an alternative proof in Lemma~\ref{lem:hard:21infty}.
\end{proof}

\begin{theorem}[Results 22, 23, 28, 29]\label{thm:n1k-po}
    For any constant $n\ge 2$ and $k\ge 2$, \efpo-$[n,\mono,k]$ is $\theta2$-complete.
\end{theorem}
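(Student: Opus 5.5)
The theorem follows by combining a membership result and a hardness result already proved, plus the monotonicity of the problem family in its parameters.

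For membership, Lemma~\ref{lem:member:n1k-po} shows directly that \efpo-$[n,\mono,k]$ is in $\theta2$ for every constant $n\geq2$ and $k\geq2$. The idea is that the valuation truth table $\mathcal{T}$ has polynomial size, at most $(mk+1)^{n^2}$ cells. Each cell can be filled with one non-adaptive NP query. Once the table is filled, envy-freeness and Pareto-dominance can be read off in polynomial time.

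For hardness, Lemma~\ref{lem:hard:212-po} shows that \efpo-$[2,\mono,2]$ is $\theta2$-hard via a reduction from \vcmember. I then use the inclusion of instance classes noted after the problem definitions. For $2\leq n$ and $2\leq k$, every \efpo-$[2,\mono,2]$ instance is literally an \efpo-$[n,\mono,k]$ instance. This needs two facts:
\begin{itemize}
\item a binary valuation is $k$-ary for every $k\geq2$, since marginal values in $\{0,1\}$ lie in $\{0,\ldots,k-1\}$;
\item an instance with $2$ agents has at most $n$ agents.
\end{itemize}
So the identity map is a Karp reduction from \efpo-$[2,\mono,2]$ to \efpo-$[n,\mono,k]$. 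Composing it with the reduction from \vcmember makes \efpo-$[n,\mono,k]$ $\theta2$-hard.

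Membership and hardness together give $\theta2$-completeness, which covers Results 22, 23, 28 and 29. I expect no real obstacle, since both ingredients are already established. The only point to check is that the identity embedding respects the parameter conventions: the first parameter bounds the number of agents rather than fixing it, and binary valuations count as $k$-ary. Both hold by the definitions in the preliminaries.
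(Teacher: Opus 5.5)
Your proposal is correct and matches the paper's proof, which derives the theorem in one line from Lemma~\ref{lem:member:n1k-po} (membership via the polynomial-size valuation truth table, filled by non-adaptive NP queries) and Lemma~\ref{lem:hard:212-po} ($\Theta_2^{\mathrm{p}}$-hardness of \efpo-$[2,\mono,2]$ by reduction from \vcmember). Your check that every \efpo-$[2,\mono,2]$ instance is also an \efpo-$[n,\mono,k]$ instance spells out the parameter-inclusion convention stated after the problem definitions, which the paper uses implicitly.
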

\begin{proof}
    It directly follows from Lemma~\ref{lem:member:n1k-po} and Lemma~\ref{lem:hard:212-po}.
\end{proof}

\begin{theorem}[Results 34, 35, 36]\label{thm:infty1x-po}
    \efpo-$[\infty,\mono,\cdott]$ is $\sigma2$-complete.
\end{theorem}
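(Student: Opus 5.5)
The theorem covers \efpo-$[\infty,\mono,k]$ for every $k\in\{2,3,\ldots\}\cup\{\infty\}$. Membership in $\sigma2$ holds for all of these problems, as noted at the start of Sect.~\ref{sect:member}. The existential certificate is an envy-free allocation, and the universal certificate is a candidate Pareto-dominating allocation; both are checkable in polynomial time because valuations are poly-time computable. By the monotonicity remark in the preliminaries, \efpo-$[\infty,\mono,2]$ is a special case of \efpo-$[\infty,\mono,k]$ for every $k$. So it suffices to prove $\sigma2$-hardness for the binary case, Result 34. Results 35 and 36 then follow immediately; Result 36 alternatively follows from Lemma~\ref{lem:hard:21inf-po}.

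For binary monotone valuations with a general number of agents, the first option is to invoke the known result of \citet{bouveret2008efficiency}, who show $\sigma2$-completeness of \efpo under binary monotone valuations. A self-contained route is also available. Additive valuations are monotone, so \efpo-$[\infty,\add,3]$ is a special case of \efpo-$[\infty,\mono,3]$, and Lemma~\ref{lem:hard:infty03-po} gives hardness for every $k\ge3$ directly. Only $k=2$ genuinely needs more.

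For $k=2$, I would simulate the $3$-ary additive agents of Lemma~\ref{lem:hard:infty03-po} with binary monotone agents. The plan is to replace each item of value $2$ to an agent by a pair of items. Each copy has marginal value $1$, but the agent's valuation counts the pair only when both copies are present; partial pairs would be handled by a helper agent, in the spirit of the $(2,3)$-gadget. A cleaner alternative avoids the gadget: take the $\forall\exists$-\textsc{3CNF} reduction of Lemma~\ref{lem:hard:21inf-po} and spread its exponentially large values over many agents. Binary marginals bound each agent's value by the number of items. Large weights such as $2^i$ would be realised by polynomially many copies, or by duplicating agents in the manner of the $N_f$ and upper/lower-bound gadgets, which enforce thresholds through envy among identical agents.

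The main obstacle is the $k=2$ case if one does not cite \citet{bouveret2008efficiency}. Keeping every marginal in $\{0,1\}$ while preserving the delicate Pareto-improvement structure is the hard part. Theorem~\ref{thm:n1k-po} shows that for a constant number of agents the problem drops to $\theta2$, so the hardness must come from the unbounded number of agents. I would therefore primarily rely on the cited result for binary monotone valuations and present the inclusion argument for the remaining cases.
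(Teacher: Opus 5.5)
Your proposal is correct and takes the same approach as the paper, whose proof of this theorem is just a citation of Bouveret and Lang's $\Sigma_2^{\mathrm{p}}$-completeness result for binary (monotone dichotomous) valuations. Your reduction of Results 35 and 36 to the binary case via the special-case ordering is consistent with the paper, as is your observation that Lemmas~\ref{lem:hard:infty03-po} and~\ref{lem:hard:21inf-po} independently cover $k\ge 3$ and $k=\infty$; the $k=2$ gadget sketches are too undeveloped to stand alone, but you explicitly fall back on the citation, so they are not needed.
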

\begin{proof}
    The result is given by~\citet{bouveret2008efficiency}.
\end{proof}

\begin{theorem}[Results 24, 30, 36]\label{thm:x1infty-po}
    \efpo-$[\cdott,\mono,\infty]$ is $\sigma2$-complete.
\end{theorem}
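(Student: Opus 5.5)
The statement is a direct combination of results already established, so the plan is to assemble membership and hardness for every instance of the placeholder parameter. Here \efpo-$[\cdott,\mono,\infty]$ covers the three problems \efpo-$[2,\mono,\infty]$, \efpo-$[n,\mono,\infty]$ for constant $n$, and \efpo-$[\infty,\mono,\infty]$ (Results 24, 30, 36).

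For membership, I will use the general observation from Sect.~\ref{sect:member} that every problem studied is in $\sigma2$. The existential certificate is an envy-free allocation $A$. The universal certificate is an allocation $A'$, and the verifier checks that $A'$ does not Pareto-dominate $A$. Both checks take polynomially many evaluations of the valuations, which are poly-time computable by assumption. Since \efpo-$[\infty,\mono,\infty]$ is the most general of the three, this single argument covers all three.

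For hardness, I will invoke Lemma~\ref{lem:hard:21inf-po}, which shows \efpo-$[2,\mono,\infty]$ is $\sigma2$-hard. By the monotonicity remark after the definitions, with $2\le a'$ for every admissible first parameter $a'$, each \efpo-$[2,\mono,\infty]$ instance is also an instance of \efpo-$[a',\mono,\infty]$. The identity map is therefore a Karp reduction from \efpo-$[2,\mono,\infty]$ to \efpo-$[a',\mono,\infty]$, and $\sigma2$-hardness transfers to every choice of the first parameter.

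No step is a real obstacle, since the substantive work is in Lemma~\ref{lem:hard:21inf-po}. The only point needing care is that the identity reduction is legitimate. The instances produced in that lemma use exactly two agents and polynomial-time computable monotone valuations, so they are valid instances of the larger problems without modification. Combining membership with hardness gives $\sigma2$-completeness of \efpo-$[\cdott,\mono,\infty]$.
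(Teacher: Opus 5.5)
Your proposal is correct and follows the paper's own proof. Membership in $\Sigma_2^{\mathrm{p}}$ uses an envy-free allocation as the existential certificate and a potentially Pareto-dominating allocation as the universal one. Hardness for every value of the first parameter comes from Lemma~\ref{lem:hard:21inf-po}, because a two-agent instance is already a valid instance of each larger problem.
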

\begin{proof}
    Noting that the containment in $\sigma2$ is trivial, it directly follows from Lemma~\ref{lem:hard:21inf-po}.
    Notice that the $\sigma2$-completeness (Result 36) is also implied by~\citet{de2009complexity}.
\end{proof}

\begin{theorem}[Result 21]
    \efpo-$[2,\add,\infty]$ is $\classNP$-complete.
\end{theorem}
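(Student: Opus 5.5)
Membership in $\classNP$ follows directly from Lemma~\ref{lem:member:20infty-po}. For two agents with additive valuations, an allocation is envy-free exactly when each agent gets at least half of her value for $M$. Any Pareto-improvement preserves this property. Hence the instance is a \yes instance if and only if some envy-free allocation exists, and that allocation itself serves as the polynomially checkable certificate.

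For $\classNP$-hardness I will reduce from \partition, following the argument of Theorem~\ref{thm:x0infty-msw}. Given positive integers $x_1,\ldots,x_m$ with sum $2S$, create one item per integer and two agents with identical additive valuations $v_1(g_j)=v_2(g_j)=x_j$.

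First, if a partition into two halves of sum $S$ exists, giving one half to each agent yields an envy-free allocation. Every complete allocation has social welfare $2S$ when the valuations are identical, and all values are positive. So any allocation Pareto-dominating it would need strictly larger total welfare, which is impossible. The allocation is therefore Pareto-optimal, giving a \yes instance of \efpo.

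Conversely, suppose there is an envy-free and Pareto-optimal allocation, or even just an envy-free one. Envy-freeness with identical valuations forces $v_1(A_1)\ge v_1(A_2)$ and $v_2(A_2)\ge v_2(A_1)$, so both bundles are worth exactly $S$. If partial allocations are allowed, Pareto-optimality forces every item to be allocated, since all values are positive. Either way the bundles give a valid partition.

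The reduction is clearly polynomial, so together with membership this establishes $\classNP$-completeness. The step needing the most care, though it is minor, is checking Pareto-optimality in the completeness direction. The constant-welfare observation settles it.
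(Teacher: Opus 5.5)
Your proposal is correct and follows the paper's own proof. Membership comes from Lemma~\ref{lem:member:20infty-po}, and hardness comes from the \partition reduction with identical additive valuations, as in the proof of Theorem~\ref{thm:x0infty-msw}. Your constant-welfare argument for Pareto-optimality is the paper's observation that every allocation maximizes social welfare, stated explicitly.
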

\begin{proof}
    The problem is in $\classNP$ by Lemma~\ref{lem:member:20infty-po}.
    It is $\classNP$-complete for the same reason as it is in the proof of Theorem~\ref{thm:x0infty-msw}.
\end{proof}

\begin{theorem}[Result 31]
    \efpo-$[\infty,\add,2]$ is $\classNP$-complete.
\end{theorem}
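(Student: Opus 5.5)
We show membership in $\classNP$ and $\classNP$-hardness separately.

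For membership, we use the observation already made in Sect.~\ref{sect:complexityclasses}. Under additive binary valuations, an allocation is Pareto-optimal if and only if it is non-wasteful, meaning every item with a positive value to some agent goes to an agent who values it at $1$. We verify both directions.
\begin{itemize}
\item \emph{Non-wasteful implies Pareto-optimal.} A non-wasteful allocation attains the maximum possible social welfare, namely the number of items valued positively by at least one agent. Any Pareto-dominating allocation would have strictly larger social welfare, which is impossible.
\item \emph{Pareto-optimal implies non-wasteful.} Suppose item $g$ sits with an agent who values it at $0$ while some other agent values it at $1$. Moving $g$ to that other agent is a Pareto-improvement.
\end{itemize}
Items valued $0$ by everyone are harmless, since their placement affects nobody. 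Hence the certificate is the allocation itself. Envy-freeness can be checked in $O(n^2m)$ time, and non-wastefulness in $O(nm)$ time.

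For hardness, we invoke the same source as in the proof of Theorem~\ref{thm:infty0x-msw}: the proof of Proposition~21 in~\citep{bouveret2008efficiency} shows \efpo is $\classNP$-hard under additive binary valuations with a general number of agents. Alternatively, we can cite the $\classNP$-hardness of deciding envy-free existence for binary additive valuations due to~\citet{aziz2015fair}. The reduction would start from an existence-of-EF problem in which items valued by nobody can be dropped. In that setting, an envy-free allocation can be made non-wasteful without creating envy only if we are careful. So the cleanest path is to use the \citet{bouveret2008efficiency} reduction directly, since it already targets EF together with PO.

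The only step needing care is this hardness transfer: one must confirm that the cited construction really concerns EF+PO rather than EF alone. If only EF-hardness were available, I would argue as follows. In the reduced instances every item is valued positively by at least one agent, and in a \yes instance the envy-free allocation produced is non-wasteful. Such an allocation is then Pareto-optimal by the characterization above, so the \yes/\no correspondence carries over.
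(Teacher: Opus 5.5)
Your proposal is correct and matches the paper's approach. Membership comes from the characterization ``Pareto-optimal iff non-wasteful'' under binary additive valuations, which the paper also notes in Sect.~\ref{sect:complexityclasses}. Hardness is cited directly from Bouveret and Lang's envy-free plus Pareto-optimal result, exactly as the paper's proof does. Your fallback paragraph on transferring envy-free-only hardness is not needed, and as written it rests on the unverified claim that the reduction's \textsc{Yes} witnesses are non-wasteful. The main route does not depend on it.
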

\begin{proof}
    The result is given by~\citet{bouveret2008efficiency}.
\end{proof}

\begin{theorem}[Results 32, 33]
    \efpo-$[\infty,\add,k]$ is $\sigma2$-complete for any $k\ge 3$.
\end{theorem}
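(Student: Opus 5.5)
The statement follows by combining a membership argument with the hardness result of Lemma~\ref{lem:hard:infty03-po}, using the monotonicity of the parameterized problem family noted after the definition of \efpo-[$\cdot,\cdot,\cdot$].

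\emph{Membership.} As observed at the start of Sect.~\ref{sect:member}, every problem studied here lies in $\sigma2$, and in particular \efpo-$[\infty,\add,k]$ does for every $k\geq 3$. The existential certificate is an allocation $A$; the verifier checks envy-freeness of $A$ in polynomial time. The universal certificate is an allocation $A'$, and the verifier accepts if $A'$ does not Pareto-dominate $A$. Since valuations are additive, all utilities are computable in polynomial time, so this is a valid $\sigma2$ verifier.

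\emph{Hardness.} Lemma~\ref{lem:hard:infty03-po} already gives $\sigma2$-hardness of \efpo-$[\infty,\add,3]$. For any $k\geq 3$, a $3$-ary additive valuation is also $k$-ary, since its marginal values lie in $\{0,1,2\}\subseteq\{0,\ldots,k-1\}$. Hence every \efpo-$[\infty,\add,3]$ instance is literally an \efpo-$[\infty,\add,k]$ instance with the same answer. The identity map is therefore a Karp reduction from \efpo-$[\infty,\add,3]$ to \efpo-$[\infty,\add,k]$, and $\sigma2$-hardness transfers to all $k\geq 3$. This also covers the $k=\infty$ column (Result 33), matching the known result of~\citet{de2009complexity}. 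Together with membership, this establishes $\sigma2$-completeness.

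The only substantive step is the hardness for $k=3$, which has already been carried out in Lemmas~\ref{lem:hard:infty04-bpo}, \ref{lem:hard:infty03-bpo}, and~\ref{lem:hard:infty03-po}. The one detail to confirm is that the final instance of Lemma~\ref{lem:hard:infty03-po} is genuinely $3$-ary. The upper- and lower-bound gadgets introduce only values in $\{0,1\}$ and copies of values $v_i(g)\in\{0,1,2\}$, so all item values stay within $\{0,1,2\}$, and the conclusion follows.
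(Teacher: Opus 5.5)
Your proposal is correct and matches the paper's proof, which likewise treats $\Sigma_2^p$-membership as trivial and takes hardness directly from Lemma~\ref{lem:hard:infty03-po}. Your observation that a $3$-ary additive instance is already a $k$-ary (and unrestricted) instance, so the identity map carries hardness to every $k\ge 3$ including $k=\infty$, makes explicit the monotonicity of the problem family that the paper uses without stating it. Your check that the final instance only uses values in $\{0,1,2\}$ is a correct extra sanity check.
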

\begin{proof}
    Noting that the containment in $\sigma2$ is trivial, it directly follows from Lemma~\ref{lem:hard:infty03-po}.
\end{proof}

\begin{theorem}[Results 27, 33]
    \efpo-$[n,\add,\infty]$ is $\sigma2$-complete for any $n\geq 6$.
\end{theorem}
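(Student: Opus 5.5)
The plan is to obtain membership and hardness separately and combine them, exactly as in the preceding combination theorems.

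Membership in $\sigma2$ for every $n$ is the generic argument from Sect.~\ref{sect:member}. The existential certificate is an envy-free allocation $A$, which can be checked in polynomial time with additive valuations. The universal certificate is a candidate allocation $A'$, and the verifier checks in polynomial time that $A'$ does not Pareto-dominate $A$. The verifier accepts iff $A$ is envy-free and $A'$ is not a Pareto-improvement. This gives \efpo-$[n,\add,\infty]\in\sigma2$, and in fact \efpo-$[\infty,\add,\infty]\in\sigma2$.

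For hardness, Lemma~\ref{lem:hardness:60infty} already shows that \efpo-$[6,\add,\infty]$ is $\sigma2$-hard. For any fixed $n\ge 6$, the ordering convention at the end of the preliminaries says that \efpo-$[6,\add,\infty]$ is a special case of \efpo-$[n,\add,\infty]$. Recall that the first parameter is the \emph{maximum} number of agents, so every instance with at most $6$ agents is also an instance with at most $n$ agents. The identity map is therefore a Karp reduction from \efpo-$[6,\add,\infty]$ to \efpo-$[n,\add,\infty]$, and $\sigma2$-hardness transfers.

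Result 33, the case $n=\infty$, follows in the same way, since \efpo-$[6,\add,\infty]$ is also a special case of \efpo-$[\infty,\add,\infty]$ (and it is also implied by Lemma~\ref{lem:hard:infty03-po}).

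There is no real obstacle here. The only point needing care is the reading of the first parameter as an upper bound on the number of agents rather than an exact count. Under the exact-count reading, one would instead pad the instance built in Lemma~\ref{lem:hardness:60infty} with $n-6$ dummy agents, each of whom values only her own private item at $1$. Each dummy must receive her private item under Pareto-optimality, so no one envies her and she envies no one. She also does not affect Pareto-domination among the original six agents, so envy-free Pareto-optimal allocations are preserved in both directions.
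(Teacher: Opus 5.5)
Your proposal is correct and follows the paper's proof: membership in $\Sigma_2^p$ comes from the generic certificate argument (an envy-free allocation, checked against every potentially Pareto-dominating allocation), and hardness is inherited from Lemma~\ref{lem:hardness:60infty}, since \efpo-$[6,\add,\infty]$ is a special case of \efpo-$[n,\add,\infty]$. Your dummy-agent padding for the exact-count reading is a correct extra that the paper does not need; it works because every item in that construction is valued positively by one of the six original agents.
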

\begin{proof}
    The containment in $\sigma2$ is trivial.
    The hardness is by Lemma~\ref{lem:hardness:60infty}.
    Notice that the $\sigma2$-completeness for \efpo-$[\infty,\add,\infty]$ is also known by~\citet{de2009complexity}.
\end{proof}

\begin{remark}\label{rmk:partial}
    All results in our paper continue to hold if we consider the space of allocations that allows partial allocations.
    All lemmas in Sect.~\ref{sect:member} hold with exactly the same proofs except for Lemma~\ref{lem:member:infty0infty-msw} and Lemma~\ref{lem:member:20infty-po}.
    These two lemmas are regarding additive valuations.
    For additive valuations, a Pareto-optimal allocation or a social welfare optimal allocation must allocate all items except for those that are valued $0$ for all agents.
    These items can be discarded without loss of generality.
    The same arguments in the proofs of the two lemmas continue to work if we remove these items from the item-set.

    For the hardness results, for additive valuations, we have never constructed an item that is valued $0$ for all agents. Therefore, every efficient allocation must be complete, and the proofs continue to work.
    For general monotone valuations, we have made explicit in the reduction that 1) for the \yes instance, a complete fair and efficient allocation exists, and 2) for the \no instance, no valid allocation exists even allowing partial allocations.

\section{Proofs of Results in Table~\ref{tab:results2}}\label{sec:efficiency-checking-proof}
Note that for efficiency checking, all the above-mentioned problems under different parameters belong to $\classcoNP$.
Thus, we only show the hardness in the following.

The problem of checking whether an allocation maximizes social welfare can be solved in polynomial time when the valuations are additive, simply by checking whether each item is allocated to the agent with a maximum value for it.
On the other hand, the problem becomes $\classcoNP$-hard when the valuations are general monotone, even for two agents with binary valuations.

\begin{lemma}\label{lem:decidingMSW-mono}
    Deciding whether an allocation maximizes social welfare is $\classcoNP$-hard under general monotone valuations, even for two agents with binary valuations.
\end{lemma}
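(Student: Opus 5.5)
We reduce from the complement of \sat, i.e.\ from \unsat, which is $\classcoNP$-complete. The construction uses two agents with binary monotone valuations, and a fixed allocation that is social welfare optimal if and only if the formula is unsatisfiable.

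Given a CNF formula $\phi$ over variables $x_1,\ldots,x_n$, we create $2n$ items $x_i,\neg x_i$, and write $M_i=\{x_i,\neg x_i\}$. We call a bundle $S$ an \emph{assignment bundle} if $|S\cap M_i|=1$ for every $i$, and say it is \emph{satisfying} if the literals it contains satisfy $\phi$. The valuations are:
\[
v_1(S)=\begin{cases}1,&\mbox{if }S\mbox{ contains a satisfying assignment bundle},\\ 0,&\mbox{otherwise,}\end{cases}
\qquad
v_2(S)=\begin{cases}1,&\mbox{if }S\neq\emptyset,\\ 0,&\mbox{otherwise.}\end{cases}
\]
Both functions are monotone and binary, since the marginal value of an item is always $0$ or $1$. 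Agent $1$'s value is polynomial-time computable: $S$ contains a satisfying assignment bundle if and only if the partial choice ``pick one literal from each $M_i$ within $S$'' can satisfy $\phi$. That is itself a SAT question, so as stated this is not polynomial-time, and this is the main obstacle.

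To fix computability, I would define $v_1(S)=1$ if and only if $S$ \emph{itself} is a satisfying assignment bundle, or $S$ contains both literals of some $M_i$. This is monotone, since supersets of a bundle hitting some $M_i$ twice still hit it twice. It is also checkable in polynomial time. Binarity holds because values lie in $\{0,1\}$. The fixed allocation to test is $A=(\emptyset, M)$, with social welfare $0+1=1$.

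For correctness, the maximum possible welfare is $2$, so $A$ fails to be optimal if and only if there is an allocation $(S,M\setminus S)$ with $v_1(S)=1$ and $M\setminus S\neq\emptyset$. A candidate $S$ containing some whole pair $M_i$ must leave $M\setminus S$ nonempty. If $S$ also contains every pair, then $S=M$ and $v_2(\emptyset)=0$, which gives welfare $1$. So if $S$ contains some $M_i$ but not all, we get welfare $2$ whenever $n\geq2$, regardless of $\phi$; this breaks the reduction.

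To repair this, I would give agent $2$ a matching valuation $v_2(T)=1$ if and only if $T$ contains at least one item from every $M_i$. Under this choice a welfare-$2$ allocation forces $S$ to miss at least one item of each pair. Combined with $v_1(S)=1$, this forces $S$ to contain neither a full pair nor an empty pair, so $S$ is exactly a satisfying assignment bundle. Conversely, any satisfying assignment $S$ gives $v_1(S)=1$, and $M\setminus S$, which is the complementary assignment, gives $v_2=1$. Hence $A=(\emptyset,M)$, whose welfare is $v_2(M)=1$, is optimal if and only if $\phi$ is unsatisfiable. Both valuations are monotone, binary, and polynomial-time computable, so the reduction is complete. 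The step needing care is exactly this interplay: the ``full pair'' clause in $v_1$ is needed for monotonicity but must be neutralized by $v_2$.
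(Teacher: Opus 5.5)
Your final construction is correct and follows essentially the same route as the paper: two literal items per variable, a tested allocation that gives everything to agent $2$, agent $1$ gaining only from a satisfying assignment, and any full pair $M_i$ taken by agent $1$ being paid for by agent $2$. The paper only splits the work differently, using $v_1(S)=1$ iff $S$ meets every $M_i$ and every clause, and $v_2(T)=|\{i : T\cap M_i\neq\emptyset\}|$ (so the tested welfare is $n$), whereas you put the full-pair escape clause into $v_1$ and use a threshold $v_2$ (tested welfare $1$); in a write-up you should simply drop the two abandoned attempts.
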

\begin{proof}
    The problem can be reduced from the $3$SAT problem.
    Given a $3$SAT instance with $n$ variables, we construct two items $\{x,\neg x\}$ from each variable $x$.
    There are two agents $1$ and $2$.
    Agent $1$'s valuation is $1$ only when she receives at least one item constructed from each variable, and her bundle covers each clause.
    That is, for each clause with variables $x_1,x_2,$ and $x_3$, agent $1$ needs to receive at least one item among $x_1,x_2,$ and $x_3$.
    Otherwise, her value is $0$.
    For agent $2$, if she receives either one item constructed from one variable, her utility increases by $1$.
    The increase is still $1$ if she receives both items constructed from one variable.
    Suppose we are given an allocation where agent $1$ has utility $0$ and agent $2$ has utility $n$.
    Deciding whether this allocation with a social welfare of $n$ maximizes the social welfare is equivalent to deciding whether the $3$SAT instance is unsatisfiable.
\end{proof}

For Pareto-optimality, we first focus on additive valuations.
We know the problem belongs to $P$ when the valuations are $k$-ary and the number of agents is constant.
\citet{aziz2019efficient} shows that the problem is $\classcoNP$-complete for tri-valued valuations (each value belongs to $\{p,q,r\}$ for some positive numbers $p,q,r$).
We extend their result to $3$-ary valuations where each value is at most $2$.

We first present a proof for $4$-ary valuations, which conveys the key insight of our construction of the Pareto-improvement cycle.
\begin{lemma}\label{lem:decidingPO4}
    Deciding whether an allocation is Pareto-optimal is $\classcoNP$-hard for $k=4$-ary additive valuations.
\end{lemma}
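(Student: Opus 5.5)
We reduce from the complement of the problem of deciding whether a bipartite graph contains a (non-empty) $3$-regular subgraph. This problem is $\classNP$-complete by the construction of \citet{plesnik1984note} used in Proposition~\ref{prop:kregular-sigma2}, taking $U_\exists=\emptyset$ there. We may assume the bipartite graph $G=(U\cup V,E)$ has maximum degree $4$ and no multiple edges. We reuse the ``Pareto-improvement cycle'' idea from the proof of Lemma~\ref{lem:hard:infty04-bpo}, but in a simpler form: no envy-freeness is needed and no $N_f$ gadget. Given $G$, we build an additive instance in which agents correspond to vertices and items to edges, together with a fixed allocation $A$. We then show that $A$ is Pareto-dominated if and only if $G$ has a non-empty $3$-regular subgraph.

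\textbf{Construction.} For each $u_j\in U$ we create an agent $a_j$, and for each $v_{j'}\in V$ an agent $b_{j'}$. Every edge $(u_j,v_{j'})$ becomes an item $g^e$, valued $1$ by both $a_j$ and $b_{j'}$. In $A$, every edge item goes to its $U$-endpoint, so $a_j$ has utility $d_j$.

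Each $a_j$ also gets a ``release'' item $g_j^a$, valued $3$ by $a_j$ and $2$ by a new agent $c_j$. In $A$ it is held by $c_j$, who additionally owns a private-ish item $g_j^2$ of value $3$ to $c_j$ and $3$ to a hub agent $z$; in $A$, $g_j^2$ is held by $z$. Each $b_{j'}$ holds in $A$ an item $g_{j'}^b$, valued $3$ by $b_{j'}$ and $3$ by $z$.

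A Pareto-improvement cycle then works as follows. $a_j$ gives away three edge items and gains $g_j^a$ (net $0$). $b_{j'}$ gains three edge items and gives $g_{j'}^b$ to $z$ (net $0$). $c_j$ gives $g_j^a$ and takes $g_j^2$ from $z$ (net $+1$). $z$ loses one item of value $3$ per $a_j$ and gains one per $b_{j'}$; since a $3$-regular bipartite subgraph has equally many vertices on each side, $z$ is net $0$. All values lie in $\{0,1,2,3\}$, so the valuations are $4$-ary.

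\textbf{Completeness.} If $G$ has a non-empty $3$-regular subgraph $H$, we perform the transfers above for exactly the vertices of $H$, with the edge items of $H$ moving from $U$ to $V$. Each $c_j$ with $u_j\in H$ strictly gains, and everyone else is unchanged, so $A$ is not Pareto-optimal.

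\textbf{Soundness.} Suppose a non-wasteful $B$ Pareto-dominates $A$; this is the main obstacle. We mimic the $x=y$ counting argument of Proposition~\ref{prop:feasiblePO-3regular}. Every item is valued by exactly two agents with a single value, except $g_j^a$, which is worth $3$ versus $2$. Hence social welfare can only rise by moving some $g_j^a$ to $a_j$. Each such $c_j$ must then take $g_j^2$ from $z$ to avoid losing utility.

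Now we use a per-agent accounting. $z$ must be compensated with items $g^b$, so some $b$-agents lose value $3$ each and must regain exactly $3$ from edge items. The $a$-agents that receive $g^a_j$ must shed exactly $3$ edge items, and the remaining agents must keep their bundles. Showing that all these utilities must be exactly preserved, rather than merely weakly, needs the total-welfare identity: the only possible strict gainers are the $c_j$, and they gain exactly $1$ each. This yields a set of $a$-vertices each losing exactly $3$ edges and a set of $b$-vertices each gaining exactly $3$, and the moved edges form a non-empty $3$-regular subgraph.

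Two degenerate cases need care. First, some $a_j$ might have degree below $3$; for such agents we omit $g_j^a$ and $c_j$ altogether. Second, $z$ could instead be compensated by items $g^b$ whose $b$-agent does not re-balance; this is ruled out because that $b$-agent strictly loses. Finally, the reduction maps a graph with no $3$-regular subgraph to a Pareto-optimal allocation, which gives $\classcoNP$-hardness.
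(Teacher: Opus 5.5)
Your proof is correct, but it uses a different gadget from the paper's. The paper adds only two auxiliary agents, $c$ and $d$:
\begin{itemize}
\item $d$ holds one item $g_i^d$ per $u_i$, worth $3$ to $d$ and to $a_i$;
\item $c$ holds $t-1$ items worth $3$ to both $c$ and $d$, plus a single item $g_{po}$ worth $2$ to $c$ and $3$ to $d$;
\item $c$ values every $g^{b_i}$ at $3$.
\end{itemize}
Consider a cycle through a $3$-regular subgraph with $t'$ vertices per side. It sends $t'$ items $g^{b_i}$ to $c$, and $t'-1$ padding items plus $g_{po}$ to $d$. So the single strict gain of $+1$ sits at $c$, whatever $t'$ is. Soundness is then a sandwich: $x\ge y$ from the welfare of $\{c,d\}$, and $x\le y$ from the welfare of everyone else.

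You instead put a $(3,2)$-item $g_j^a$ and an agent $c_j$ on every $u_j$, and merge $z_1,z_2$ of Lemma~\ref{lem:hard:infty04-bpo} into one hub $z$. Your instance is essentially the core of that $\Sigma_2^{\text{p}}$ construction. Your soundness argument rests on one identity: for non-wasteful $B$, social welfare rises by exactly the number $x$ of items $g_j^a$ moved to $a_j$. Each of those $x$ agents $c_j$ is forced to take $g_j^2$ and gains exactly $1$. Hence every other utility is preserved exactly. The bookkeeping then reads off a non-empty $3$-regular subgraph with no inequalities to chain:
\begin{itemize}
\item $z$ takes exactly $x$ items $g^b$;
\item each such $b_{j'}$ gets exactly three edge items;
\item each $a_j$ holding $g_j^a$ sheds exactly three edge items;
\item everyone else keeps her bundle.
\end{itemize}
The paper's version is more economical, with two extra agents, but it uses the $|U|=|V|$ padding and $t-1$ dummy items to make $c$'s gain independent of $t'$. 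Yours needs $|U|$ extra agents but no padding, and it pins down every agent's utility change exactly.

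Two small phrasing points. First, ``the only possible strict gainers are the $c_j$'' is a consequence of the identity, not an a priori fact: a $c_j$ whose $g_j^a$ stays could a priori also take $g_j^2$ and gain $3$. State it as a consequence. Second, omitting $c_j$ for vertices of degree below $3$ is unnecessary but harmless. The identity already forbids $a_j$ from gaining anything, so $g_j^a$ can move only if $a_j$ sheds exactly three edges.
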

\begin{proof}
    We reduce the problem from the $\classNP$-complete problem \krs, which decides the existence of a $\kappa$-regular subgraph in a given bipartite graph for $\kappa\ge 3$.
    In particular, we consider the case with $\kappa=3$.
    We will show that an allocation is Pareto-optimal if and only if there is no $3$-regular subgraph in $G$.

    Given an \krs instance $G=(U\cup V, E)$, assume that $|U|=|V|=t$, for otherwise we may add dummy vertices with no incident edges.
    Let $U=\{u_1,\ldots,u_t\}$ and $V=\{v_1,\ldots,v_t\}$.
    We construct an allocation ${A}=(A_1,\ldots,A_n)$ with $n=2t+2$ agents and $m=3t+|E|$ items.
    Specifically, for each vertex $u_i\in U$, we construct an agent $a_i$ with $A_{a_i}=\{g_1^{a_i},\ldots,g_d^{a_i}\}$, where $d$ is the degree of $u_i$ in $G$.
    For each vertex $v_i\in V$, we construct an agent $b_i$ with $A_{b_i}=\{g^{b_i}\}$.
    There are two additional agents $c$ and $d$, with $A_c=\{g_1^c,\ldots,g_{t-1}^c,g_{po}\}$ and $A_d=\{g_1^d,\ldots,g_t^d\}$.
    
    The agents' valuations are defined as follows.
    Note that by our construction, each item is positively valued by exactly two agents, and each agent's value to each item belongs to $\{0,1,2,3\}$.
    \begin{itemize}
        \item Each agent $a_i$ where $u_i\in U$ has value $1$ to each item in $A_{a_i}$, and value $3$ to item $g^d_i$.
        
        \item Each agent $b_i$ where $v_i\in V$ has value $3$ to item $g^{b_i}$.
        For each edge $e=(u_j, v_i)\in E$ for some $u_j$, assume $e$ is the $\ell$-th edge incident to $u_j$ in $G$, then $b_i$ has value $1$ to item $g_{\ell}^{a_j}\in A_{a_j}$.
        
        \item Agent $c$ has value $2$ to item $g_{po}$, and $3$ to each item in $A_c\setminus\{g_{po}\}$ as well as each item $g^{b_i}$ for $i\in\{1,\ldots,t\}$.
        
        \item Agent $d$ has value $3$ to each item in $A_d\cup A_c$.
    \end{itemize}

    \begin{figure}[h]
    \centering
    \resizebox{0.6\textwidth}{!}
    {
    \begin{tikzpicture}[
        node distance=1.5cm and 3cm,
        agent/.style={rectangle, draw=black!90, rounded corners, minimum size=0.8cm, font=\small, align=center, font=\bfseries},
        item_transfer/.style={->, >={Stealth[length=2mm]},  thick, color=red!80!black,rounded corners=4pt},
        edge/.style={-, thick, color=gray!50!black,rounded corners=4pt},
        label_text/.style={font=\itshape, align=center, color=black},
        label_text2/.style={font=\itshape, align=center, color=black, font=\small},
        section_box/.style={draw, dashed, rounded corners, inner sep=8pt},
        section_box2/.style={draw, dashed, rounded corners, inner sep=8pt}
    ]

    \node[agent] (d) at (-2.5, 2.5) {$d$};
    \node[agent] (c) at (-2.5, 0) {$c$};

    \node[agent] (a1) at (0,2.5) {$a_1$};
    \node[agent] (a2) at (1.5,2.5) {$a_2$};
    \node[agent,label=left:{$\cdots$},label=right:{Set $U$}] (at) at (3.8,2.5) {$a_{t}$};

    \node[agent] (b1) at (0,0) {$b_1$};
    \node[agent] (b2) at (1.5,0) {$b_2$};
    \node[agent,label=left:{$\cdots$},,label=right:{Set $V$}] (bt) at (3.8,0) {$b_{t}$};

    \draw[item_transfer] (a1) -- (b1) node[midway, left, label_text] {$g^{a_1}_1$};
    \draw[item_transfer] (a1) -- (b2) node[midway, left, label_text] {};
    \draw[item_transfer] (a2) -- (b2) node[midway, left, label_text] {};
    \draw[item_transfer] (a2) -- (b1) node[midway, left, label_text] {};
    \draw[edge] (a2) -- (bt);
    \draw[edge] (at) -- (b2);
    \draw[edge] (at) -- (bt);

    \draw[item_transfer] (b1) -- node[midway, above, label_text] {$g^{b_1}$} (c);
    \draw[item_transfer] (b2.south) -- ++(0,-0.35) -- ++(-3.8,0) -- ++(0,0.35);
    \draw[edge] (bt.south) -- ++(0,-0.7) -- ++(-6.42,0) -- ++(0,0.7);
    \draw[item_transfer] ($(c.north west)!0.35!(c.north east)$) -- node[midway, left, label_text] {$g_{1}^c$} ($(d.south west)!0.35!(d.south east)$);
    \draw[item_transfer] ($(c.north west)!0.65!(c.north east)$) -- node[midway, right, label_text] {$g_{po}$} ($(d.south west)!0.65!(d.south east)$);
    \draw[item_transfer] (d) -- node[midway, below, label_text] {$g^{d}_1$} (a1);
    \draw[item_transfer] ($(d.north west)!0.65!(d.north east)$) -- ++(0,0.35) -- ++(3.8,0) -- ++(0,-0.35);
    \draw[edge] ($(d.north west)!0.35!(d.north east)$) -- ++(0,1.2) -- ++(6.42,0) -- ++(0,-0.9);

    \node[color=red!70!black] at (2.8, 3) {\footnotesize $3$-regular subgraph};
    \node[color=red!70!black] at (-0.5, 3.6) {\small\textbf{Pareto-improvement cycle}};

    \begin{scope}[on background layer]
        \node[section_box, fill=red!5, fit=(a1)(b2)] {};
    \end{scope}

    \end{tikzpicture}
    }
    \caption{Visualization of construction in Lemma~\ref{lem:decidingPO4}. Vertices represent agents and edges represent items. Items are only positively valued by endpoint agents. The directed red edges form a Pareto-improvement cycle, on which each $a_i$ gives away $3$ edge items and each $b_i$ receives $3$ edge items. This enforces the $3$-regular subgraph constraint. Agent $c$'s utility strictly increases by $1$ by receiving an item $g^{b_i}$ with value $3$ and giving $g_{po}$ with value $2$ to $d$.}
    \end{figure}

    Given the above construction, if the graph $G$ contains a $3$-regular subgraph $G'=(U'\cup V',E')$ with $|U'|=|V'|=t'$, we construct the following Pareto-improvement based on ${A}$.
    For each $u_i\in U'$, let $a_i$ receive $g_i^d$ from $A_d$.
    For each $v_i\in V'$ with edge $e=(u_j,v_i)\in E'$, let $b_i$ receive the corresponding item with value $1$ from $A_{a_j}$ following the edge $e$.
    Agent $c$ receives all items $g^{b_i}$ for $v_i\in V'$, and agent $d$ receives $\{g_1^c,\ldots,g_{t'-1}^c,g_{po}\}$ from $A_c$.
    It is easy to verify that agent $c$'s utility strictly increases by $1$, and all other agents' utilities remain the same.

    On the other hand, assume that $G$ does not contain a $3$-regular subgraph, we will show that ${A}$ is Pareto-optimal by contradiction.
    Suppose there exists an allocation ${B}=(B_1,\ldots,B_n)$ that Pareto-dominates ${A}$.
    We assume that ${B}$ is non-wasteful, that is, each agent only receives items that she values positively, for otherwise we may reallocate the ``wasted'' items without decreasing any agent's utility.
    The only way to increase social welfare is to let agent $d$ receive $g_{po}$ from $A_c$.
    Let $x=|B_c\setminus A_c|$ and $y=|A_d\setminus B_d|$, where $x$ and $y$ respectively represent the number of items transferred from agents $\{b_1,\ldots,b_t\}$ to agent $c$, and from agent $d$ to agents $\{a_1,\ldots,a_t\}$.
    It is guaranteed that $x\ge y$, for otherwise the social welfare of agents $c$ and $d$ will decrease under any allocation, therefore ${B}$ cannot be a Pareto-improvement.
    Similarly, we have $x\le y$ by considering the social welfare of the remaining agents.
    Combining the two inequalities, we have $x=y$.
    Therefore, there are $x$ agents in $\{b_1,\ldots,b_t\}$ that give away their items to agent $c$, among which each of them in turn needs to receive $3$ items with value $1$ from $\{a_1,\ldots,a_t\}$ to guarantee her utility does not decrease.
    There are also $y$ agents in $\{a_1,\ldots,a_t\}$ that receive items from agent $d$, and each may give away at most three items to $\{b_1,\ldots,b_t\}$.
    This corresponds to a $3$-regular subgraph in $G$, which contradicts the assumption and implies no Pareto-improvement is possible.
\end{proof}

We now turn to the case of $k=3$.
\begin{lemma}\label{lem:decidingPO3}
    Deciding whether an allocation is Pareto-optimal is $\classcoNP$-hard for $k=3$-ary additive valuations.
\end{lemma}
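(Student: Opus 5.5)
The plan is to start from the $4$-ary construction of Lemma~\ref{lem:decidingPO4} and replace every item of value $3$ with the gadgets from the proof of Lemma~\ref{lem:hard:infty03-bpo}. The checking problem has no utility bounds, so the gadget agents' bounds $u^+=u^-=2$ have to be enforced in another way. Here that comes from the fact that we only need a Pareto-improvement argument relative to a \emph{fixed} allocation $A^\dagger$: a Pareto-improvement cannot lower any gadget agent's utility below its current value $2$. In the construction of Lemma~\ref{lem:decidingPO4}, the value-$3$ items are of two kinds. The $(3,3)$-items are $g^{b_i}$ (valued by $b_i$ and $c$), $g_j^c$ for $j\le t-1$ and $g_i^d$ (valued by $c,d$ and $d,a_i$ respectively), and $A_d$'s items versus $a_i$. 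The single $(2,3)$-item is $g_{po}$, valued $2$ by $c$ and $3$ by $d$. I would replace each $(3,3)$-item by a $(3,3)$-gadget and $g_{po}$ by a $(2,3)$-gadget. In the given allocation $A^\dagger$, each gadget is in the simulating allocation matching the current owner of the original item, so every gadget agent has utility exactly $2$. All values now lie in $\{0,1,2\}$.

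For the direction ``$3$-regular subgraph $\Rightarrow$ not PO'', I would simulate the cycle from Lemma~\ref{lem:decidingPO4} by switching each gadget along the cycle to its other simulating allocation. Every original agent keeps her utility, except $c$: she loses the $(2,3)$-gadget item worth $2$ to her and gains a $(3,3)$-gadget bundle worth $3$, a net gain of $1$. All gadget agents stay at $2$, so $A^\dagger$ is Pareto-dominated.

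For the converse I would copy the $x=y$ accounting from the proof of Lemma~\ref{lem:hard:infty03-bpo}. Take a non-wasteful $B^\dagger$ dominating $A^\dagger$. The key local facts are the ones already established for the gadgets, now applied with lower bounds only: in a $(3,3)$-gadget, if one endpoint receives an item worth $2$ to her and both gadget agents keep utility $\ge 2$, the other endpoint gets nothing; in a $(2,3)$-gadget, the gadget agent keeping utility $\ge 2$ forces either $g_1$ to her or both $g_2,g_2'$ to her. I also have to handle the possibility that a gadget agent ends with utility $3$, which the upper bound used to forbid. I would show this can occur only by the ``extra $g'_1$'' pattern. That pattern can be normalized exactly as in the switched-pair normalization: move $g_1'$ back to the endpoint, which keeps domination over $A^\dagger$. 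After normalization every gadget is in a simulating allocation, so $B^\dagger$ projects to an allocation $B$ of the $4$-ary instance. Every original agent gets the same utility under $B$ as under $B^\dagger$, so $B$ Pareto-dominates $A$, and by Lemma~\ref{lem:decidingPO4} this yields a $3$-regular subgraph.

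I expect the main obstacle to be this normalization in the converse. Partial gadget states give non-integral ``fractions'' of an original item, for instance an endpoint holding only $g'_1$ worth $1$ while the gadget agents compensate. These might create welfare gains not present in the original instance. The first thing I would try is a welfare-counting argument. Each gadget's total welfare is at most the welfare of its better simulating allocation, and equality forces a simulating state up to the $g'_1$ move. Since the only slack in Lemma~\ref{lem:decidingPO4} is the $+1$ at $c$, any non-simulating gadget loses at least $1$ unit of welfare. That loss must be offset by $c$'s switch, and a per-gadget case check then shows that the loss would force some agent below her utility in $A^\dagger$.
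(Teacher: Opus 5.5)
\textbf{Gap in the converse: the welfare-counting step is false.} Your forward direction (switching each gadget on the cycle to its other simulating allocation) is fine. The converse, however, relies on two claims that do not hold.
\begin{itemize}
\item You claim that a non-simulating gadget loses at least one unit of welfare. In the $(3,3)$-gadget every item is valued identically by its two positively valuing agents: $g_1$ is worth $2,2$; $g'_1$ is worth $1,1$; $g_2$ is worth $1,1$; $g'_2$ is worth $2,2$; $g_0$ is worth $1,1$. So every non-wasteful allocation of the gadget has welfare exactly $7$, whether or not it is simulating.
\item You claim that a gadget agent can reach utility $3$ only through the extra-$g'_1$ pattern. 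This misses a \emph{split} state: endpoint $1$ gets only $g'_1$, endpoint $2$ gets only $g_2$, $a$ gets $g_1$, $b$ gets $g'_2$, and $g_0$ goes to either gadget agent. Both gadget agents keep utility at least $2$ (one reaches $3$), no welfare is lost, and each endpoint receives value $1$ of the original item's $3$.
\end{itemize}
You also miss two patterns that do normalize by moving one item: the mirror pattern where $a$ holds $g_1$ and $g_2$, and the $(2,3)$-gadget pattern where its agent holds $g_1$ and $g'_2$.

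The split state is the real problem. It cannot be normalized inside the gadget while keeping domination over $A^\dagger$, since either simulating state costs one endpoint utility. A per-gadget case check cannot exclude it either, because all utilities are locally acceptable. So the step that turns $B^\dagger$ into a Pareto-improvement of $A$ in the $4$-ary instance is missing.

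\textbf{How to repair your route.} Use the welfare slack to pin utilities rather than to penalize gadgets.
\begin{itemize}
\item Apart from $g_2$ in the $g_{po}$-gadget (worth $2$ to $d$ and $1$ to that gadget's agent), every item is valued identically by its two positive valuers. Hence a non-wasteful $B^\dagger$ dominating $A^\dagger$ has welfare exactly one more than $A^\dagger$, with that $g_2$ at $d$.
\item It follows that exactly one agent $w$ gains exactly $1$, and every other agent keeps her $A^\dagger$ utility.
\item A $(3,3)$-gadget whose two gadget agents both stay at exactly $2$ gives its endpoints total $3$. Every non-simulating state with both gadget agents at least $2$ gives the endpoints at most $2$, so such a gadget is simulating. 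The same holds for the $(2,3)$-gadget. Hence only the gadget containing $w$ can be non-simulating.
\item The extra-item patterns normalize by a one-item move.
\item The split pattern must be excluded globally. Since $g_2$ of the $g_{po}$-gadget is at $d$, $c$ must cede that gadget's $g_1$. If the exceptional gadget is split, all other gadgets are simulating, so $c$'s utility is a multiple of $3$ plus at most $1$. It can never equal the $3t-1$ she must keep, since she is not $w$.
\end{itemize}

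\textbf{Comparison with the paper.} The paper avoids this normalization altogether. It reduces directly from $4$-regular subgraph, using two agents per vertex ($a_i,a'_i$ and $b_i,b'_i$) so that all values lie in $\{0,1,2\}$, and the chain $2x\le q\le p\le 2y$ does the accounting.
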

\begin{proof}
    We reduce the problem from the $\classNP$-complete problem \krs.
    In particular, we consider the case with $\kappa=4$.
    Given an \krs instance $G=(U\cup V, E)$, assume that $|U|=|V|=t$.
    We construct an allocation ${A}=(A_1,\ldots,A_n)$ with $n=4t+2$ agents and $m=7t+|E|$ items, and we will show that the allocation is Pareto-optimal if and only if $G$ does not contain a $4$-regular subgraph.
    
    For each vertex $u_i\in U$, we create two agents $a_i$ and $a'_i$.
    Let $A_{a_i}=\{g_1^{a_i},\ldots,g_d^{a_i}\}$ where $d$ is the degree of $u_i$ in $G$.
    Let $A_{a'_{i}}=\{g^{a'_i}_1,g^{a'_i}_2\}$.
    For each vertex $v_i\in V$, we create two agents $b_i$ and $b'_i$.
    Let $A_{b_i}=\{g_1^{b_i},g_2^{b_i}\}$ and $A_{b'_i}=\{g^{b'_i}\}$.
    There are two additional agents $c$ and $d$, with $A_{c}=\{g_{1}^c,\ldots,g_{t-1}^c,g_{po}\}$ and $A_{d}=\{g_{1}^d,\ldots,g_{t}^d\}$.
    The agents' valuations are defined as follows.
    Note that by our construction,
    each agent's value to each item is either $0$, $1$, or $2$.
    \begin{itemize}
        \item Each agent $a_i$ where $u_i\in U$ has value $1$ to each item in $A_{a_i}$ and $2$ to each item in $A_{a'_{i}}$.
        \item Each agent $a'_{i}$ where $u_i\in U$ has value $1$ to each item in $A_{a'_{i}}$ and $2$ to each item in $A_{d}$.
        \item Each agent $b_i$ where $v_i\in V$ has value $2$ to each item in $A_{b_i}$.
        For each edge $e=(u_j,v_i)\in E$, assume $e$ is the $\ell$-th edge incident to $u_j$ in $G$, then $b_i$ has value $1$ to item $g^{a_j}_\ell\in A_{a_i}$.
        \item Each agent $b'_i$ where $v_i\in V$ has value $2$ to each item in $A_{b'_i}$ and $1$ to each item in $A_{b_i}$.
        \item Agent $c$ has value $1$ to item $g_{po}$, and $2$ to each item in $A_c\setminus\{g_{po}\}$ as well as each item $g^{b'_i}$ for $i\in\{1,\ldots,t\}$.
        \item Agent $d$ has value $2$ to each item in $A_{d}\cup A_c$.
    \end{itemize}

    Given the above construction, if the graph $G$ contains a $4$-regular subgraph $G'=(U',V',E')$ with $|U'|=|V'|=t'$, we construct the following Pareto-improvement based on ${A}$.
    For each $u_i\in U'$, let $a'_{i}$ receive one distinct item from $A_{d}$, and let $a_i$ receive the two items in $A_{a'_{i}}$.
    For each $v_i\in V'$ with edge $e=(u_j,v_i)\in E'$, let $b_i$ receive the corresponding item with value $1$ from $A_{a_j}$ following the edge $e$, and let $b'_{i}$ receive the two items in $A_{b_i}$.
    Finally, agent $c$ further receives all items in $\bigcup_{v_i\in V'}A_{b'_i}$, and gives away $\{g^c_1,\ldots,g^c_{t'-1},g_{po}\}$ to agent $d$.
    It is easy to verify that agent $c$'s utility strictly increases by $1$, and all other agents' utility remains the same.

    On the other hand, if $G$ does not contain a $4$-regular subgraph, we show that ${A}$ is Pareto-optimal by contradiction.
    Suppose there exists an allocation ${B}=(B_1,\ldots,B_n)$ that Pareto-dominates ${A}$.
    Without loss of generality, we may assume that each agent only receives items that she values positively.
    Let $S_1=B_{c}\setminus A_{c}, S_2=A_d\setminus B_d$, and let $x=|S_1|, y=|S_2|$.
    It is guaranteed that $x\ge y$, for otherwise the social welfare of agents $c$ and $d$ will decrease under any allocation, thus ${B}$ cannot be a Pareto-improvement.
    We now consider the social welfare of the remaining agents.
    Let $p$ be the total number of items transferred from agent $a_i'$ to the corresponding agent $a_i$ for each $u_i\in U$, and $q$ be that transferred from agent $b_i$ to the corresponding agent $b_i'$ for $v_i\in V$.
    Only an agent $a'_i$ with $B_{a'_i}\cap S_2\neq\emptyset$ may gives the two items $g^{a_i'}_{1}$ and $g_{2}^{a_i'}$ to $a_i$, therefore, $p\le 2y$.
    Each agent $b'_i$ with $g^{b_i'}\in S_1$ must take both items in $A_{b_i}$, therefore, $q\ge 2x$.
    The social welfare change for agents $\{a_i:u_i\in U\}$ and $\{b_i:v_i\in V\}$ is $2p-2q$, since the transfer of edge items preserves the total utility.
    This gives us $2x\le q\le p\le 2y$.
    Putting together, we have $x=y$ and $p=q=2x$.
    It is then implied that, each agent $a_i'$ receiving items from $S_2$ must give the two items to $a_i$, each such agent $a_i$ must give four edge items to agents in $\{b_i:v_i\in V\}$, and each such agent $b_i$ must receive four edge items to give the two items to $b_i'$.
    This corresponds to a $4$-regular subgraph in $G$, which contradicts the assumption and implies no Pareto-improvement is possible.
\end{proof}

When considering the case with two agents and additive valuations, although an envy-free allocation can always be improved to a Pareto-optimal one (if it was not) with envy-freeness maintained, finding such a Pareto-improvement is hard.
In fact, it is $\classNP$-complete to decide if such a Pareto-improvement is possible.
\begin{lemma}\label{lem:decidingPO2agents}
    Deciding whether an allocation is Pareto-optimal is $\classcoNP$-hard for two agents with additive valuations.
\end{lemma}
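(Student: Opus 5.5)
We reduce from \partition, the standard $\classNP$-complete problem: given positive integers $s_1,\ldots,s_m$ with $\sum_j s_j=2S$, decide whether some subset sums to exactly $S$. We show that a \partition instance is a \yes instance if and only if a specific allocation in the constructed instance is \emph{not} Pareto-optimal. This gives $\classcoNP$-hardness, since the complement of Pareto-optimality checking is then $\classNP$-hard.

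\textbf{Construction.} Take two agents and the items $g_1,\ldots,g_m$ together with one special item $h$. Set $v_1(g_j)=v_2(g_j)=2s_j$, i.e., both agents value the partition items identically, scaled by $2$. Set $v_1(h)=2S+1$ and $v_2(h)=2S$. The allocation to check is $A=(\{h\},\{g_1,\ldots,g_m\})$, giving utilities $v_1(A_1)=2S+1$ and $v_2(A_2)=4S$. Scaling everything by $2$ keeps all values integral while still creating a gap of exactly $1$ between the two agents' valuations of $h$.

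\textbf{Completeness (partition exists $\Rightarrow$ $A$ is not PO).} Let $P$ be a subset with $\sum_{j\in P}s_j=S$, and consider $B=(\{g_j\}_{j\in P},\{h\}\cup\{g_j\}_{j\notin P})$. Agent $1$ gets $2S$ in $B$, which is worse than $2S+1$, so this swap fails. We therefore flip the roles: let agent $2$ receive $h$ in exchange for items she values equally. Set $v_1(h)=2S$ and $v_2(h)=2S+1$, with $A=(\{g_1,\ldots,g_m\},\{h\})$. Then agent $1$ has $4S$ and agent $2$ has $2S+1$. In $B=(\{h\}\cup\{g_j\}_{j\notin P},\{g_j\}_{j\in P})$, agent $1$ has $2S+2S=4S$, which is unchanged, while agent $2$ has $2S<2S+1$. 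This also fails, because the item leaving agent $2$ is worth more to her. The exchange has to run the other way, so we give $h$ to the agent who values it less. Take $A=(\{h\},\{g_1,\ldots,g_m\})$ with $v_1(h)=2S$ and $v_2(h)=2S+1$. In $B=(\{g_j\}_{j\in P},\{h\}\cup\{g_j\}_{j\notin P})$, agent $1$ keeps $2S$ and agent $2$ moves from $4S$ to $2S+1+2S=4S+1$. So $B$ Pareto-dominates $A$.

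\textbf{Soundness (no partition $\Rightarrow$ $A$ is PO).} Suppose $B$ Pareto-dominates $A$. We may assume $B$ is a complete allocation, because all values are positive and so unallocated items could be handed out without hurting anyone. We split into two cases.
\begin{itemize}
\item If $h\in B_1$, then $B_1\supseteq\{h\}$. Agent $2$ then receives at most $4S$, which equals her utility in $A$, so she must receive all the $g_j$. This means $B=A$, a contradiction.
\item If $h\in B_2$, let $Q=\{j: g_j\in B_1\}$ and $s=\sum_{j\in Q}s_j$. Agent $1$ must not lose utility, so $2s\ge 2S$, i.e., $s\ge S$. Agent $2$ must not lose utility either, so $2S+1+2(2S-s)\ge 4S$, which gives $s\le S+\tfrac12$. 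Since $s$ is an integer, $s=S$, contradicting that the \partition instance is a \no instance.
\end{itemize}

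The main obstacle is choosing the orientation of the $\pm1$ gap so that the only possible improvement is an exact-$S$ exchange. The analysis above settles it: $h$ goes to the agent who values it \emph{less}. The integrality squeeze in the second soundness case then forces the subset sum to equal $S$ exactly. As a final step, verify that all values are polynomial-length integers, so the reduction runs in polynomial time.
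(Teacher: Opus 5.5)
Your final argument is correct and uses the same mechanism as the paper. Both agents value the number items identically, and a special item sits with the agent who values it slightly less, so the only Pareto-improvement trades that item for a subset of exactly the target sum. Your factor-$2$ scaling plays the role of the paper's $\varepsilon$, and your two-case soundness analysis spells out what the paper leaves as easy to see.

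Do clean up the write-up. The Construction paragraph as stated ($v_1(h)=2S+1$, $v_2(h)=2S$, with $h$ given to agent~1) makes $A$ welfare-maximizing and hence always Pareto-optimal, so the reduction would fail. State the orientation $v_1(h)=2S$, $v_2(h)=2S+1$ from the outset, since that is what your soundness computations actually use, and drop the false starts.

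The one substantive difference is the paper's second special item. The paper reduces from \textsc{SubsetSum} $(\{a_1,\ldots,a_n\},k)$ with two items: $h_k$, worth $k+\varepsilon$ to agent~1 and $k$ to agent~2, and $h_{-k}$, worth $\sum_i a_i-k-\varepsilon$ and $\sum_i a_i-k$. It checks the allocation $(\{g_1,\ldots,g_n\},\{h_k,h_{-k}\})$. The item $h_{-k}$ never moves in an improvement. Its purpose is to make the checked allocation envy-free, since each agent values both bundles at $\sum_i a_i$.

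Your allocation is not envy-free: agent~1 values agent~2's bundle at $4S>2S$. So you prove the lemma as stated, but not the stronger form (hardness even when the given allocation is envy-free). The paper relies on that stronger form in Sect.~\ref{sect:complexityclasses} to contrast $\classcoNP$-hard Pareto-optimality checking with the $\classNP$ membership of two-agent additive EFPO. You can recover it by adding a balancing item to your instance, e.g.\ $h'$ held by agent~1 with $v_1(h')=2S$ and $v_2(h')=2S-1$. Then both agents value both bundles at $4S$, and the same integrality squeeze still forces the exchanged subset to sum to exactly $S$.
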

\begin{proof}
Given a subset-sum instance $(\{a_1,\ldots,a_n\},k)$, construct a fair division instances with $n+2$ items where the first $n$ items $g_1,\ldots,g_n$ corresponds to $a_1,\ldots,a_n$, and there are two more items $h_k$ and $h_{-k}$.
Both agents have values $a_1,\ldots,a_n$ for $g_1,\ldots,g_n$ respectively.
Agent $1$ has value $k+\varepsilon$ and $-k-\varepsilon+\sum_{i=1}^na_i$ for $h_k$ and $h_{-k}$ respectively, for some sufficiently small $\varepsilon>0$ (rescale the valuations to make valuations integral).
Agent $2$ has value $k$ and $-k+\sum_{i=1}^na_i$ for $h_k$ and $h_{-k}$ respectively.
Consider the envy-free allocation $(A_1=\{g_1,\ldots,g_n\},A_2=\{h_k,h_{-k}\})$.
It is easy to see that this allocation is Pareto-optimal if and only if the subset-sum instance is a \no instance: for a \yes subset-sum instance, agent $1$ can give a subset of items with total value exactly $k$ to agent $2$, in exchange of $h_k$ in agent $2$'s bundle.
\end{proof}

Finally, when considering monotone valuations, we can use the same proof as in Lemma~\ref{lem:decidingMSW-mono} to show the $\classcoNP$-hardness of Pareto-optimality checking.
\begin{lemma}
    Deciding whether an allocation is Pareto-optimal is $\classcoNP$-hard under general monotone valuations, even for two agents with binary valuations.
\end{lemma}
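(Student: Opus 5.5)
The plan is to reuse the reduction from the proof of Lemma~\ref{lem:decidingMSW-mono} essentially unchanged, and to check that in this instance Pareto-optimality of the given allocation coincides with social welfare optimality. Membership in $\classcoNP$ is immediate, since a Pareto-dominating allocation is a polynomially checkable certificate of non-optimality. It therefore suffices to prove hardness, by reducing $3$SAT to the complement of PO checking.

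Given a $3$SAT instance $\phi$ on variables $x_1,\ldots,x_n$, I will build the same two-agent instance. For each variable there are two items $\{x_i,\neg x_i\}$. Agent $1$ has $v_1(S)=1$ if $S$ meets every pair $\{x_i,\neg x_i\}$ and contains a literal of every clause, and $v_1(S)=0$ otherwise. Agent $2$ has $v_2(S)=|\{i: S\cap\{x_i,\neg x_i\}\neq\emptyset\}|$. Both valuations are monotone, polynomial-time computable, and binary. For $v_2$, adding an item raises the count of touched pairs by at most one. For $v_1$, the value is a $0/1$ monotone predicate, so every marginal lies in $\{0,1\}$. The allocation to check is $A=(\emptyset,M)$, which gives utilities $(0,n)$.

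The key step is to show that $A$ is Pareto-dominated if and only if $\phi$ is satisfiable. First suppose $\sigma$ is a satisfying assignment. Give agent $1$ the true literals and agent $2$ the false literals. Agent $1$ then meets every pair and every clause, so she gets $1$. Agent $2$ holds one item from each pair, so she still gets $n$. This allocation Pareto-dominates $A$.

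Conversely, suppose $B$ Pareto-dominates $A$. Agent $2$ already has her maximum value $n$, so the strict improvement must go to agent $1$, giving $v_1(B_1)=1$. Moreover $v_2(B_2)\ge n$ forces $B_2$ to meet every pair. So each pair is split with exactly one item on each side, and $B_1$ picks one literal per variable. Since $v_1(B_1)=1$, these literals hit every clause, so they define a satisfying assignment.

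The only point needing care is this converse direction, specifically the use of agent $2$'s maximality to force a clean split of each pair. With that in place, $A$ is Pareto-optimal exactly when $\phi$ is unsatisfiable, which gives $\classcoNP$-hardness even for two agents with binary valuations.
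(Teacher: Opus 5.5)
Your proof is correct and takes the paper's approach: the paper reuses the reduction from Lemma~\ref{lem:decidingMSW-mono} without further argument. Your version adds the omitted verification. Since agent $2$ already attains her maximum $n$ and $v_1\le 1$, any allocation Pareto-dominating the $(0,n)$ allocation must give utilities $(1,n)$, which forces a clean split of every pair and hence a satisfying assignment.
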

\end{remark}

\newpage
\bibliographystyle{ACM-Reference-Format}
\bibliography{ref}

\appendix

\end{document}